\definecolor{navyblue}{rgb}{0.0, 0.0, 0.5}
\newcommand{\tr}{\operatorname{tr}}
\newcommand{\cH}{\mathcal{H}}
\newcommand{\norm}[1]{\left\lVert#1\right\rVert}
\renewcommand{\phi}{\varphi}
\renewcommand{\epsilon}{\varepsilon}
\newcommand{\JDW}[1]{\textcolor{red}{(JDW: #1)}}
\newcommand{\DSF}[1]{\textcolor{orange}{(DSF: #1)}}
\theoremstyle{plain}
\newtheorem{theorem}{Theorem}[section]
\newtheorem{corollary}[theorem]{Corollary}
\newtheorem{lemma}[theorem]{Lemma}
\newtheorem{proposition}[theorem]{Proposition}
\newtheorem{definition}[theorem]{Definition}
\newtheorem{condition}[theorem]{Condition}
\theoremstyle{remark}
\newtheorem{remark}[theorem]{Remark}
\newtheorem*{claim*}{Claim}
\newtheorem*{remark*}{Remark}
\newtheorem*{example*}{Example}
\newtheorem*{notation*}{Notation}
\numberwithin{equation}{section}
\begin{document}

	\author{\begingroup
		\hypersetup{urlcolor=navyblue}
		\href{https://orcid.org/0000-0003-1610-1597}{Emilio Onorati}
		\endgroup}
	\email[Emilio Onorati ]{emilio.onorati@tum.de}

	\author{\begingroup
		\hypersetup{urlcolor=navyblue}
		\href{https://orcid.org/0000-0001-7712-6582}{Cambyse Rouz\'{e}
			\endgroup}
	}
	\email[Cambyse Rouz\'{e} ]{cambyse.rouze@tum.de}
	\affiliation{Zentrum Mathematik, Technische Universit\"{a}t M\"{u}nchen, 85748 Garching, Germany}

	\author{\begingroup
		\hypersetup{urlcolor=navyblue}
		\href{https://orcid.org/0000-0001-9699-5994}{Daniel Stilck Fran\c{c}a}
		\endgroup}
	\affiliation{Univ Lyon, ENS Lyon, UCBL, CNRS, Inria, LIP, F-69342, Lyon Cedex 07, France}
	\email[Daniel Stilck Fran\c ca ]{daniel.stilck\_franca@ens-lyon.fr}

	\author{\begingroup
		\hypersetup{urlcolor=navyblue}
		\href{https://orcid.org/0000-0002-6077-4898}{James D. Watson}
		\endgroup}
	\affiliation{University of Maryland, College Park, QuICS 3353 Atlantic Building, MD 20742-2420, USA
	}
	\email[James D. Watson ]{jdwatson@umd.edu}

	
	
	\title[]{Efficient learning of ground \& thermal states within phases of matter }
	

	\begin{abstract}
		
		We consider two related tasks: (a) estimating a parameterisation of an unknown Gibbs state and expectation values of Lipschitz observables on this state; and (b) learning the expectation values of local observables within a thermal or quantum phase of matter. In both cases, we wish to minimise the number of samples we use to learn these properties to a given precision.
		
		For the first task, we develop new techniques to learn parameterisations of classes of systems, including quantum Gibbs states of non-commuting Hamiltonians under the condition of exponential decay of correlations and the approximate Markov property, thus improving on work by \cite{rouze2021learning}. We show that it is possible to infer the expectation values of all extensive properties of the state from a number of copies that not only scales polylogarithmically with the system size, but polynomially in the observable's locality --- an exponential improvement over state-of-the-art  --- hence partially answering conjectures stated in \cite{rouze2021learning} and \cite{anshu2021sample} in the positive. This class of properties includes expected values of quasi-local observables and entropic quantities of the state. 
		
		For the second task, we turn our tomography tools into efficient algorithms for learning observables in a phase of matter of a quantum system. 
		By exploiting the locality of the Hamiltonian, we show that $M$ local observables can be learned with probability $1-\delta$ and up to precision $\epsilon$ with access to only $N=\mathcal{O}\big(\log\big(\frac{M}{\delta}\big)e^{\operatorname{polylog}(\epsilon^{-1})}\big)$ samples ---
		again an exponential improvement in the precision over the best previously known bounds \cite{huang2021provably}. 
		Our results apply to both thermal phases of matter displaying exponential decay of correlations and families of ground states of Hamiltonians satisfying a similar condition.
		In addition, our sample complexity applies to the worse case setting whereas previous results only applied to the average case setting.
		
		To prove our results, we develop new tools of independent interest, such as robust shadow tomography algorithms for ground and Gibbs states, Gibbs approximations of locally indistinguishable ground states, and generalisations of transportation cost inequalities for Gibbs states of non-commuting Hamiltonians.

	\end{abstract}
	
	\maketitle
	\doparttoc 
	
	\faketableofcontents 

	\newpage
	
	\section{Introduction}
	Tomography of quantum states is among the most important tasks in quantum information science. In quantum tomography, we have access to one or more copies of a quantum state and wish to understand the structure of the state. 
	However, for a general quantum state, all tomographic methods inevitably require resources that scale exponentially in the size of the system~\cite{Haah_2017,optimal_wright}. 
	This is due to the curse of dimensionality: the number of parameters needed to fully describe a
	quantum system scales exponentially with the number of its constituents. 
	Obtaining these parameters often necessitates the preparation and destructive measurement of exponentially many copies of the quantum system, as well as their storage in a classical memory.
	In particular, as the size of quantum devices continues to increase beyond what can be easily simulated classically, the community faces new challenges to characterise their output states in a robust and efficient manner.
	
	Thankfully, only a few physically relevant observables are often needed to describe the physics of a system, e.g.~its entanglement or energy. 
	Recently, new methods of tomography have been proposed which precisely leverage this important simplification to develop efficient state learning algorithms. One highly relevant development in this direction is that of classical shadows \cite{Huang2020}. 
	This new set of protocols allows for estimating physical observables of quantum spin systems that only depend on local properties from a number of measurements that scales logarithmically with the total
	number of qubits.
	However, the number of required measurements still faces an exponential growth with respect to the size of the observables that we want to estimate. 
	Thus, using such protocols to learn the expectation values of physical observables that depend on more than a few qubits quickly becomes unfeasible.


	\medskip
	\textbf{Gibbs State Tomography.} \ 
	Some simplification can be achieved from the fact that physically relevant quantum states, such as ground and Gibbs states of a locally interacting spin system, are themselves often described by a number of parameters which scales only polynomially
	with the number of qubits. 
	From this observation, another direction in the characterisation of large quantum systems that has received considerable attention is that of Hamiltonian learning and many-body tomography, where it was recently shown that it is possible to robustly
	characterise the interactions of a Gibbs state with a few samples \cite{anshuweb,haah2021optimal}. 
	However, even for many-body states, recovery in terms of the trace distance requires a number of samples that scales polynomially in the number of qubits, in contrast to shadows for which the scaling is logarithmic.

	These considerations naturally lead to the question of identifying settings where it is possible to combine the strengths of shadows and many-body tomography. In \cite{rouze2021learning}, the authors proposed a first solution by combining these with new insights from the emerging field of
	quantum optimal transport. They obtained a tomography algorithm that only requires a number of samples that scales logarithmically in the system's size and learns all quasi-local properties of a state.
	These properties are characterised by so-called ``Lipschitz observables''.
	However, that first step was confined to topologically trivial states such as high-temperature Gibbs states of commuting Hamiltonians or outputs of shallow circuits. Here, we significantly extend these results to all states exhibiting exponential decay of correlations and the approximate Markov property.
	
	\medskip
	\textbf{Learning Phases of Matter.} \ Tomographical techniques by themselves are somewhat limited in that they tell us nothing about nearby related states – often states belong to a phase of matter in which the properties of the states vary smoothly and are in some sense ``well behaved’’, and we wish to learn properties of this entire phase of matter. 
	A recent line of research in this direction that has gained significant attention from the quantum community is that of combining machine learning methods with the ability to sample complex quantum states from a phase of matter to efficiently characterise the entire phase~\cite{Biamonte2017,Carrasquilla2017}, as well as using ML techniques to improve identifying phases of matter \cite{rodriguez2019identifying,rem2019identifying,Dong_Pollmann_Zhang_2019} and approximating quantum states \cite{gao2017efficient,park2020geometry,barr2020quantum, nomura2021purifying}.
	It is well known that these tasks are computationally intractable in general \cite{ambainis2014physical,watson2021complexity, bravyi2022quantum}, and so having access to data from an externally generated source could conceivably speed up these computations.
	A landmark result in this direction is~\cite{huang2021provably}. 
	There the authors showed how to use machine learning methods combined with classical shadows to learn local linear and nonlinear functions of states belonging to a gapped phase of matter with a number of samples that only grows logarithmically with the system's size. 
	That is, given states from that phase drawn from a distribution and the corresponding parameters of the Hamiltonian, one can train a classical algorithm that would predict local properties of other points of the phase. 
	However, there are some caveats to this scheme: (i)~the scaling of the number of samples in terms of the precision is exponential, (ii)~it does not immediately apply to phases of matter beyond gapped ground states, (iii)~the results only come with guarantees on the errors in the prediction in expectation. That is, given another state sampled from the same distribution as the one used to train, only \emph{on average} is the error made by the ML algorithm proven to be small.
	
	In this work, we address all of these shortcomings. 
	First, our result extends to thermal phases of matter which exhibit exponential decay of correlations, which includes all thermal systems away from criticality/poles in the partition function \cite[Section 5]{harrow2020classical}. 
	Our result also extends to phases that satisfy a generalised version local topological quantum order~\cite{michalakis2013stability,bravyi2010topological,nachtergaele2022quasi}. 
	Furthermore, the sample complexity of our algorithm is quasi-polynomial in the desired precision, which is an exponential improvement over previous work \cite{huang2021provably}. 
	And, importantly, it comes with point-wise guarantees on the quality of the recovery, as opposed to average guarantees.
	
	Interestingly, our results are easier to grasp through the lens of the concentration of measure phenomenon rather than machine learning: we show that local expectation values of quantum states are quite smooth under perturbations in the same class of states. And, as is showcased by the concentration of measure phenomenon, smooth functions on high-dimensional spaces do not show a lot of variability. Thus, it suffices to collect a few examples to be able to predict what happens in the whole space, while the price we pay for these stronger recovery guarantees
	is that our algorithm does not work for any distribution over states, but needs some form of anti-concentration which holds e.g.~for the uniform distribution (see \Cref{seclearningalgogibbs} for a technical discussion). In other words, our algorithm necessitates to ``see'' enough of the space to work and struggles if there are large, low-probability corners.
	
	\section{Summary of main results}


	In this paper, we consider a quantum system defined over a $D$-dimensional finite regular lattice $\Lambda=[-L,L]^D$, where $n=(2L+1)^D$ denotes the total number of qubits constituting the system. We assume for simplicity that each site of the lattice hosts a qubit, so that the total system's Hilbert space is $\cH_\Lambda:=\bigotimes_{j\in \Lambda} \mathbb{C}^{2}$, although all of the results presented here easily extend to qudits.
	
	Our focus in this work are nontrivial statements about what can be learned about many-body states of $n$ qubits in the setting where we are only given $\Theta( \operatorname{polylog}(n))$ copies. 
	The common theme is that we will assume exponential decay of correlations for our class of states, but will show results in two different regimes. In~\Cref{sec:learning_states} we summarise our results on how to estimate \emph{all quasi-local properties of a given state given identical copies of it.} 
	This is the traditional setting of quantum tomography. 
	In contrast, in~\Cref{sec:learning_systems} we summarise our findings on how to learn \emph{local properties of a class of states given samples from different states from that class}. This is the setting of~\cite{huang2021provably} where ground states of gapped quantum phases of matter were studied.
	Here we consider (a) thermal phases of matter with exponentially decaying correlations and (b) ground states satisfying what we call `Generalised Approximate Local Indistinguishability' (GALI).
	Using techniques used in \cite{Lewis_Huang_Preskill_2023}, we show this includes gapped ground states.
	
	\subsection{Optimal Tomography of Many-Body Quantum States}~\label{sec:learning_states}
	
	We first consider the task of obtaining a good approximation of expected values of extensive properties of a fixed unknown $n$-qubit state over $\Lambda$. The state is assumed to be a Gibbs state of an unknown local Hamiltonian $H(x):=\sum_{j\in\Lambda}h_j(x_j)$, $x=\{x_j\}\in [-1,1]^m$, defined through interactions $h_j(x_j)$, each depending on parameters $x_j\in[-1,1]^\ell$ for some fixed integer $\ell$ and supported on a ball $A_j$ around site $j\in\Lambda $ of radius $r_0$. We also assume that the matrix-valued functions $x_j\mapsto h_j(x_j)$ as well as their derivatives are uniformly bounded: $\|h_j\|_\infty,\|\nabla h_j\|_\infty\le h$. The corresponding Gibbs state at inverse temperature $\beta>0$, and the ground state as $\beta\to\infty$ take the form 
	\begin{align}\label{Gibbsgrounddef}
		\sigma(\beta,x):=\frac{e^{-\beta H(x)}}{\tr\big[e^{-\beta H(x)}\big]}\qquad \quad  \text{ and }\qquad \quad  \psi_g(x):=\lim_{\beta\to\infty}\sigma(\beta,x)\,.
	\end{align}
	In the case when $[h_j(x_j),h_{j'}(x_{j'})]=0$ for all $j,j'\in\Lambda$, the Hamiltonian $H(x)$ and its associated Gibbs states $\sigma(\beta,x)$ are said to be commuting.

	\subsubsection{Preliminaries on Lipschitz observables}\label{Lipsec}
	Extensive properties of a state are well-captured by the recently introduced class of Lipschitz observables~\cite{Rouz2019,de2021quantum}. 
	
	\begin{definition}[Lipschitz Observable \cite{de2021quantum} ]
		An observable $L$ on $\mathcal{H}_{\Lambda}$ is said to be \emph{Lipschitz} if $\|L\|_{\operatorname{Lip}}:= \max_{i\in\Lambda}\min_{L_{i^c}}\,2\|L-L_{i^c}\otimes I_i\|_\infty=\mathcal{O}(1)$, where $i^c$ is the complement of the site $i$ in $\Lambda$ and the scaling is in terms of the number of qubits in the system.
	\end{definition}
	
	In words, $\|L\|_{\operatorname{Lip}}$ quantifies the amount by which the expectation value of $L$ changes for states that are equal when tracing out one site. By a simple triangle inequality together with \cite[Proposition 15]{de2021quantum}, one can easily see that $\|L\|_\infty\le n\|L\|_{\operatorname{Lip}}$. Given the definition of the Lipschitz constant, we can also define the quantum  Wasserstein distance of order $1$ by duality~\cite{de2021quantum}.
	\begin{definition}[Wasserstein Distance~\cite{de2021quantum}] \label{Def:Wasserstein_Distance}
		The \emph{Wasserstein distance} between two $n$ qubit quantum states $\rho_1,\rho_2$ is defined as $W_1(\rho_0,\rho_1):=\sup_{\|L\|_{\operatorname{Lip}}\le 1}\,\tr \big[L(\rho_0-\rho_1)\big]$. It satisfies $W_1(\rho_0,\rho_1)\le n\|\rho-\sigma\|_1$.
	\end{definition}

	Having $W_1(\rho,\sigma)=\mathcal{O}(\epsilon n)$ is sufficient to guarantee that the expectation value of $\rho$ and $\sigma$ on \emph{extensive, quasi-local} observables is the same up to a \emph{multiplicative} error $\epsilon n$. This fact justifies why we focus on learning states up to an error $\mathcal{O}(\epsilon n)$ in Wasserstein distance instead of the usual trace distance bound of order $\mathcal{O}(\epsilon)$: although a trace distance guarantee of order $\mathcal{O}(\epsilon)$ would give the same error estimate, it requires exponentially more samples even for product states, as shown in~\cite[Appendix G]{rouze2021learning}. In \Cref{AppendixLip}, we argue that Lipschitz observables and the induced Wasserstein distance capture both linear and nonlinear extensive properties of many-body quantum states.

	\subsubsection{Gibbs state tomography}\label{Gibbsstatetomo}

	In this section, we turn our attention to the problem of obtaining approximations of linear functionals of the form
	$f_L(\beta ,x):=\tr [L\sigma(\beta ,x)]$ for all Lipschitz observables $L$ from the measurement and classical post-processing of as few copies of the associated unknown Gibbs state $\sigma(\beta,x)$ as possible. We will further require that the state satisfies the property of exponential decay of correlations: for any two observables $X_A$, resp.~$X_B$, supported on region $A$, resp.~$B$,
	\begin{align}\label{decaycorrintro}
		{\operatorname{Cov}_{\sigma(\beta,x)}(X_A,X_B)}\le C\,\min\{|A|,|B|\}\,{\|X_A\|_\infty\,\|X_B\|_\infty}\,e^{-\nu \operatorname{dist}(A,B)}\,,
	\end{align}
	for some constants $C,\nu>0$, where $\operatorname{dist}(A,B)$ denotes the distance between regions $A$ and $B$, and where the covariance is defined by 
	\begin{align}
		\operatorname{Cov}_{\sigma}(X,Y):=\frac{1}{2}\,\tr\big[\sigma \,\big\{X-\tr[\sigma X],Y-\tr[\sigma Y]\big\}\big]\,.
	\end{align}
	
	Our first main result is a method to learn Gibbs states with few copies of the unknown state:
	\begin{theorem}[Tomography algorithm for decaying Gibbs states (informal)]\label{maintheoremtomo}
		For any unknown commuting Gibbs state $\sigma(\beta,x)$ satisfying \Cref{decaycorrintro}, there exists an algorithm that provides the description of parameters $x'$ such that the state $\sigma(\beta,x')$ approximates $\sigma(\beta,x)$ to precision $n\epsilon$ in Wasserstein distance with probability $1-\delta$ with access to $N=\mathcal{O}\big(\log(\delta^{-1})\operatorname{polylog}(n)\,\epsilon^{-2}\big)$ samples of the state (see \Cref{seccommuting}). The result extends to non-commuting Hamiltonians whenever one of the following two assumptions is satisfied: 
		\begin{enumerate}[(i)]
			\item  the high-temperature regime, $\beta<\beta_c$ (see \Cref{sechightemperature}).
			\item uniform clustering/Markov conditions (see \Cref{coroapproxmarkov}).
		\end{enumerate}
		In case (ii), we find good approximation guarantees under the following slightly worst scaling in the precision $\epsilon$: $N=\mathcal{O}(\epsilon^{-4}\operatorname{polylog}(n\delta^{-1}))$.
	\end{theorem}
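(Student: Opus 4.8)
\emph{Proof strategy (proposed).} The plan is to split tomography into three steps: (i) estimate all local marginals of $\sigma(\beta,x)$; (ii) invert the map ``local parameters $\mapsto$ local marginals'' to output the description $x'$; and (iii) a perturbation estimate — via interpolation and exponential decay of correlations — showing that if each $x'_j$ is close to the true $x_j$ then $\sigma(\beta,x')$ is close to $\sigma(\beta,x)$ in Wasserstein distance. Since $H(x)=\sum_j h_j(x_j)$ with each $h_j$ supported on a ball of radius $r_0$, it suffices to recover the reduced state $\sigma_{A_j^{(g)}}(\beta,x)$ on an enlarged ball $A_j^{(g)}$ of radius $r_0+g$ around each site $j$, with $g=\Theta(\log(n\epsilon^{-1}))$ chosen large enough that the exponential tails appearing in step (iii) fall below the target accuracy. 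Each such marginal lives on $\operatorname{polylog}(n\epsilon^{-1})$ qubits, so by classical-shadow or direct local tomography together with a union bound over the $n$ sites, a simultaneous estimate with $\max_j\|\widehat{\sigma}_{A_j^{(g)}}-\sigma_{A_j^{(g)}}(\beta,x)\|_1\le\eta$ holds with probability $1-\delta$ from $N=\mathcal{O}\big(\eta^{-2}\log(n\delta^{-1})\operatorname{polylog}(n\epsilon^{-1})\big)$ copies.

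For the inversion step I would use that, up to exponentially small corrections controlled by \Cref{decaycorrintro}, $\sigma_{A_j^{(g)}}(\beta,x)$ depends on the true parameters only through $x_j$ and a few neighbours, and that the relevant Jacobian is nondegenerate with a constant depending only on the fixed quantities $\beta,h,\ell,r_0$. In the commuting case this is essentially positive-definiteness of a (classical) covariance matrix of the local sufficient statistics; in the high-temperature regime $\beta<\beta_c$ it follows from the convergent cluster expansion, which makes parameters $\mapsto$ marginals analytic with a uniformly bounded inverse — exactly the robust Hamiltonian-learning guarantee applied region by region. Both cases give $\|x'_j-x_j\|=\mathcal{O}(\eta)$, so $\eta=\epsilon$ yields $N=\mathcal{O}(\epsilon^{-2}\operatorname{polylog}(n\delta^{-1}))$. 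Under assumption (ii) there is no expansion available, but the approximate Markov/uniform-clustering property lets one reconstruct the relevant conditional (Petz recovery) structure from the marginals; the operator inequalities involved lose a square root, so one only obtains $\|x'_j-x_j\|=\mathcal{O}(\sqrt{\eta})$, forcing $\eta=\epsilon^2$ and hence $N=\mathcal{O}(\epsilon^{-4}\operatorname{polylog}(n\delta^{-1}))$.

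For the final step, fix a Lipschitz observable $L$ with $\|L\|_{\operatorname{Lip}}\le1$ and interpolate $H_s=(1-s)H(x)+sH(x')$ with Gibbs state $\sigma_s$. By Duhamel's formula, $\frac{d}{ds}\tr[L\sigma_s]=-\beta\sum_j\langle L,v_j\rangle_{\sigma_s}$, where $v_j:=h_j(x'_j)-h_j(x_j)$ is supported on $A_j$ with $\|v_j\|_\infty=\mathcal{O}(\epsilon)$ and $\langle\cdot,\cdot\rangle_{\sigma_s}$ is the Kubo--Mori (KMS) inner product in $\sigma_s$. The crux is $|\langle L,v_j\rangle_{\sigma_s}|=\mathcal{O}(\epsilon)$ uniformly in $n$: first replace $L$ by its restriction $\widetilde L$ to $A_j^{(g)}$ (tracing out the complement), which costs $\mathcal{O}(n)\,\|v_j\|_\infty\, e^{-\nu g}=\mathcal{O}(\epsilon^2/n)$ by decay of correlations; then telescope $\widetilde L$ into shells around $A_j$, each shell being traceless on an annulus at distance $\sim k$ from $A_j$ and of operator norm $\mathcal{O}(k^{D-1})$ by the triangle inequality in the definition of $\|\cdot\|_{\operatorname{Lip}}$; an operator (connected-correlator) form of exponential decay of correlations bounds the $k$-th contribution by $\mathcal{O}(k^{D-1}\epsilon\, e^{-\nu k})$, and the geometric sum is $\mathcal{O}(\epsilon)$. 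Summing over $j$ and integrating $s$ over $[0,1]$ gives $|\tr[L(\sigma(\beta,x)-\sigma(\beta,x'))]|=\mathcal{O}(\beta n\epsilon)$, and the supremum over $L$ yields $W_1(\sigma(\beta,x),\sigma(\beta,x'))=\mathcal{O}(n\epsilon)$. (Equivalently, phrase this step as a transportation-cost inequality $W_1(\cdot,\sigma(\beta,x))\lesssim n\sqrt{D(\cdot\|\sigma(\beta,x))}$ combined with the Peierls--Bogoliubov bound $D(\sigma(\beta,x')\|\sigma(\beta,x))\lesssim\beta\,\|H(x)-H(x')\|_{\operatorname{Lip}}\,W_1(\sigma(\beta,x),\sigma(\beta,x'))$ and then close the inequality.)

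The main obstacle is this last step in the non-commuting setting, and it has two faces. First, differentiating $e^{-\beta H_s}$ produces the Kubo--Mori inner product — objects of the form $\tr[\sigma_s^{1-u}L\,\sigma_s^{u}\,v_j]$ — rather than the ordinary covariance in \Cref{decaycorrintro}, so one must show the imaginary-time twist does not destroy clustering; this requires upgrading \Cref{decaycorrintro} to a KMS/operator-valued statement (via complex interpolation or Araki-type expansionals), which is the source of the ``generalised transportation-cost inequalities'' advertised in the abstract. Second, in case (ii) the local inversion must be made quantitatively robust using only the approximate Markov property, with no cluster expansion to lean on; the unavoidable square-root losses in the relevant operator inequalities are precisely what degrade the precision scaling from $\epsilon^{-2}$ to $\epsilon^{-4}$. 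The commuting and high-temperature cases sidestep both difficulties: there the relevant pairings reduce to genuine covariances, and the cluster expansion controls the inversion uniformly.
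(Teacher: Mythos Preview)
Your three-step outline matches the paper's architecture, and the interpolation in step~(iii) is the right move, but there is a real gap in how you handle the non-commuting derivative. You correctly note that Duhamel produces the Kubo--Mori pairing $\int_0^1\tr[\sigma_s^{1-u}L\,\sigma_s^{u}v_j]\,du$ rather than the symmetric covariance of \Cref{decaycorrintro}, and you flag the need to ``upgrade \Cref{decaycorrintro} to a KMS/operator-valued statement''. The paper does \emph{not} perform any such upgrade, and it is not needed: the missing device is \emph{quantum belief propagation}. The identity
\[
\partial_{x_k}\tr[L\,\sigma(\beta,x)]\;=\;-\beta\,\operatorname{Cov}_{\sigma(\beta,x)}\bigl(L,\,\Phi_{H(x)}(\partial_{x_k}H(x))\bigr),\qquad \Phi_H(V)=\int_{\mathbb R}\kappa_\beta(t)\,e^{-iHt}Ve^{iHt}\,dt,
\]
rewrites the derivative as a \emph{genuine anticommutator covariance} of $L$ against the quasi-local operator $\Phi_{H(x)}(\partial_{x_k}h_{j_k})$. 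Lieb--Robinson bounds together with the sub-exponential decay of $\kappa_\beta$ localise $\Phi_{H(x)}(\partial_{x_k}h_{j_k})$ to a ball of radius $\mathcal{O}(\log n)$, after which your shell/localisation idea (replace $L$ by $\tau_{B_k(r)}(L)$ etc.) applies verbatim using only the stated covariance decay. This yields $|\partial_{x_k}f_L|\le\|L\|_{\operatorname{Lip}}\operatorname{polylog}(n)$ and hence $W_1(\sigma(\beta,x),\sigma(\beta,y))\le\|x-y\|_{\ell_1}\operatorname{polylog}(n)$ under \Cref{decaycorrintro} alone, with no case distinction. Your assertion that in the high-temperature case ``the relevant pairings reduce to genuine covariances'' is not correct --- the Hamiltonian still fails to commute --- and it is belief propagation, not temperature, that removes the imaginary-time twist.

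For case~(ii) your route also differs from the paper's. You propose to recover each $x_j$ locally via Petz-type conditionals, absorbing a square-root loss in that inversion. The paper instead \emph{bypasses parameter recovery}: it runs maximum-entropy estimation (requiring only $\ell_\infty$-estimates $\widetilde e_{k,l}$ of the local energies to accuracy $\eta$) and bounds $W_1$ directly via a generalised transportation-cost inequality
\[
W_1(\rho,\sigma(\beta,x))\;\le\;\mathcal{O}(\ell\sqrt{n})\sqrt{D(\rho\|\sigma(\beta,x))}+n^2\bigl(\delta(\mathcal{O}(\ell))+\zeta(\mathcal{O}(\ell))+e^{-\mathcal{O}(\ell)}\bigr),
\]
proved from the uniform Markov and clustering hypotheses using universal recovery maps. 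Combined with $D(\sigma(\beta,x)\|\sigma(\beta,\hat x))\le\beta\sum_{k,l}(\hat x_{k,l}-x_{k,l})(e_{k,l}-\widetilde e_{k,l})$, one gets $W_1^2\lesssim n^2\eta\operatorname{polylog}(n/\epsilon)$, so the $\epsilon^{-4}$ comes from the square root in the transportation inequality, not from a local-inversion loss. Your alternative mechanism would need a quantitative local-invertibility statement under the Markov condition that you have not actually established.
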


	\noindent \textbf{Proof ideas}: The results for commuting Hamiltonians and in the high-temperature regime proceed directly from the following continuity bound on the Wasserstein distance between two arbitrary Gibbs states, whose proof requires the notion of quantum belief propagation in the non-commuting case (see \Cref{cor:continuity_W1}): for any $x,y\in[-1,1]^m$,
	\begin{align}\label{upperboundW1}
		W_1(\sigma(\beta,x),\sigma(\beta,y))=  \,\|x-y\|_{\ell_1}\,\mathcal{O}(\operatorname{polylog}(n))\,.
	\end{align}
	Furthermore, this inequality is tight up to a $\operatorname{polylog}(n)$ factor for $\beta=\Theta(1)$. \Cref{upperboundW1} reduces the problem of recovery in Wasserstein distance to that of recovering the parameters $x$ up to an error $\epsilon n/\operatorname{polylog}(n)$ in $\ell_1$ distance. This is a variation of the Hamiltonian learning problem for Gibbs states~\cite{anshu2021sample,haah2021optimal} which relies on lower bounding the $\ell_2$ strong convexity constant for the log-partition function.
	
	In~\cite{anshuweb}, the authors give an algorithm estimating~$x$ with  $e^{\mathcal{O}(\beta k^D)}\mathcal{O}(\log(\delta^{-1}n)\epsilon^{-2})$ copies of $\sigma(\beta,x)$ up to $\epsilon$ in $\ell_\infty$ distance when $\sigma(\beta,x)$ belongs to a family of commuting, $k$-local Hamiltonians on a $D$-dimensional lattice. If we assume $m=\mathcal{O}(n)$, this translates to an algorithm with sample complexity $e^{\mathcal{O}(\beta k^D)}\mathcal{O}(\epsilon^{-2}\textrm{polylog}(\delta^{-1}n))$ to learn $x$ up to $\epsilon n$ in $\ell_1$ distance. It should also be noted that the time complexity of the algorithm in \cite{anshuweb} is $\mathcal{O}(ne^{\mathcal{O}(\beta k^D)}\epsilon^{-2}\textrm{polylog}(\delta^{-1}n))$.
	Thus, any commuting model at constant temperature satisfying exponential decay of correlations can be efficiently learned with $\operatorname{polylog}(n)$ samples. We refer the reader to \Cref{sec:beyond_high_T} for more information and classes of commuting states that satisfy exponential decay of correlations.
	In the high-temperature regime, we rely on a result of~\cite{haah2021optimal} where the authors give a computationally efficient algorithm to learn $x$ up to error $\epsilon$ in $\ell_\infty$ norm from $\mathcal{O}(\epsilon^{-2}\textrm{polylog}(\delta^{-1}n))$ samples. This again translates to a $\mathcal{O}(\epsilon n)$ error in $\ell_1$ norm thanks to \eqref{upperboundW1}. 
	
	Furthermore, in \Cref{sec:gibbs_learning_exponential_decay} we more directly extend the strategy of \cite{anshu2021sample} by introducing the notion of a $W_1$ strong convexity constant for the log-partition function and showing that it scales linearly with the system size under (a) uniform clustering of correlations and (b) uniform Markov condition. This result also generalises the strategy of \cite{rouze2021learning} which relied on the existence of a so-called transportation cost inequality previously shown to be satisfied for commuting models at high-temperature. For the larger class of states satisfying conditions (a) and (b), we are able to find $x'$ s.t.  $W_1(\sigma(\beta,x),\sigma(\beta,x'))=\mathcal{O}(\epsilon n)$ with $\mathcal{O}(\epsilon^{-4}\textrm{polylog}(\delta^{-1}n))$ samples. Note that the uniform Markov condition is expected to hold for a large class of models that goes beyond high-temperature Gibbs states \cite{kato2019quantum,kuwahara_gibbs}.
	We believe that our result is also of interest for classical models. The last years have seen a flurry of results on learning classical Gibbs states under various metrics, with a particular focus on learning the parameters in some $\ell_p$ norm~\cite{Bresler2015,Lokhov2018,klivans2017learning,10.5555/3454287.3455012,one_or_multiple}. But, to the best of our knowledge, the learning in $W_1$ was not considered, particularly the quantum version of the Wasserstein distance. Furthermore, there are phases of classical Ising models that exhibit exponential decay of correlations but no polynomial-time algorithms to sample from the underlying Gibbs states are known~\cite{pirogov_sinai,Lubetzky2013}. This shows that the classes of states our result applies to goes beyond those for which there exist efficient classical algorithms to compute local properties.

	\subsubsection{Beyond linear functionals}\label{beyondlin}

	So far, we considered properties of the quantum system which could be related to local linear functionals of the unknown state. In \cite{Huang2020,huang2021provably}, the authors propose a simple trick in order to learn non-linear functionals of many-body quantum systems, e.g.~their entropy over a small subregion. However, such methods require a number of samples scaling exponentially with the size of the subregion, and thus very quickly become inefficient as the size of the region increases. Here instead, we make use of the continuity of the entropy functional with respect to the Wasserstein distance, mentioned in \Cref{eqentropycontbound}, together with the following Wasserstein continuity bound in order to estimate the entropic quantities of Gibbs states over regions of arbitrary size (see \Cref{localW1bound}): assuming \Cref{decaycorrintro}, for any region $S$ of the lattice and any two $x,y\in[-1,1]^m$ 
	\begin{align}\label{localW1boundestimate}
		W_1(\tr_{S^c}(\sigma(\beta,x)),\tr_{S^c}(\sigma(\beta,y)))\le \|x|_{\mathcal{S}(r_S)}-y|_{\mathcal{S}(r_S)}\|_{\ell_1}\,\operatorname{polylog}(|S(r_S)|)\,,
	\end{align}
	where $r_S=\max\big\{r_0,\, 2\xi\log\big({2|S|C_1}{\|x|_{\mathcal{S}(r_0)}-y|_{\mathcal{S}(r_0)}\|^{-1}_{\ell_1}}\big)\big\}$ with $r_0$ being the smallest integer such that $x|_{\mathcal{S}(r_0)}\ne y|_{\mathcal{S}(r_0)}$, $\mathcal{S}(r_S):=\{x_j|\,\operatorname{supp}(h_j(x_j))\cap S(r_S)\ne \emptyset\}$, $S(r_S):=\{i\in\Lambda:\operatorname{dist}(i,S)\le r_S\}$, and $C_1,\xi>0$ are constants introduced in \Cref{Lemma:Local_Perturbation}.

	
	Let us recall a few definitions: denoting by $\rho_R:=\tr_{R^c}(\rho)$ the marginal of a state $\rho\in\mathcal{D}(\cH_\Lambda)$ on a region $R\subset \Lambda$, and given separated regions $A,B,C\subset \Lambda$ of the lattice: $S(A)_\rho:=-\tr[\rho_A\log\rho_A]$ is the von Neumann entropy of $\rho$ on $A$, $S(A|B)_\rho:=S(AB)_\rho-S(B)_\rho$ is the conditional entropy on region $A$ conditioned on region $B$, $I(A:B)_{\rho}:=S(A)_\rho+S(B)_\rho-S(AB)_\rho$ is the mutual information between regions $A$ and $B$, and $I(A:B|C)_\rho:=S(AC)_{\rho}+S(BC)_\rho-S(C)_\rho-S(ABC)_\rho$ is the conditional mutual information between regions $A$ and $B$ conditioned on region $C$. The following corollary is a direct consequence of \Cref{eqentropycontbound} together with \Cref{localW1boundestimate}:

	\begin{corollary}
		Assume the decay of correlations holds uniformly, as specified in \Cref{decaycorrintro}, for all $\{\sigma(\beta,x)\}_{x\in[-1,1]^m}$, $m=\mathcal{O}(n)$. Then, in the notations of the above paragraph, for any two Gibbs states $\sigma(\beta,x)$ and $\sigma(\beta,y)$, $x,y\in[-1,1]^m$, and any region $A\subset \Lambda$:
		\begin{align*}
			&|S(A)_{\sigma(\beta,x)}-S(A)_{\sigma(\beta,y)}|= \,\|x|_{\mathcal{S}(r_S)}-y|_{\mathcal{S}(r_S)}\|_{\ell_1}\mathcal{O}(\operatorname{polylog}(|S(r_S)|))\,,
		\end{align*}
		for $S\equiv A$. The same conclusion holds for $|S(A|B)_{\sigma(\beta,x)}-S(A|B)_{\sigma(\beta,y)}|$ $(S \equiv AB)$, $|I(A:B)_{\sigma(\beta,x)}-I(A:B)_{\sigma(\beta,y)}|$ $(S\equiv AB)$, and $|I(A:B|C)_{\sigma(\beta,x)}-I(A:B|C)_{\sigma(\beta,y)}|$ $(S\equiv ABC)$.
	\end{corollary}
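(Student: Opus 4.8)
The plan is to reduce each entropic difference to the Wasserstein distance between the relevant marginals and then invoke the local continuity bound \eqref{localW1boundestimate}. The key input is the entropy continuity bound referenced as \Cref{eqentropycontbound}, which controls $|S(A)_\rho - S(A)_{\rho'}|$ in terms of $W_1(\rho_A,\rho'_A)$ (together with the number of qubits in $A$, appearing through a logarithmic factor, as is standard for Fannes-type estimates phrased in the Wasserstein metric). So for the von Neumann entropy statement I would simply write
\begin{align*}
|S(A)_{\sigma(\beta,x)}-S(A)_{\sigma(\beta,y)}|=\mathcal{O}\big(W_1(\tr_{A^c}\sigma(\beta,x),\tr_{A^c}\sigma(\beta,y))\,\operatorname{polylog}(|A|)\big)
\end{align*}
and then plug in \eqref{localW1boundestimate} with $S\equiv A$, which gives the bound $\|x|_{\mathcal{S}(r_S)}-y|_{\mathcal{S}(r_S)}\|_{\ell_1}\,\mathcal{O}(\operatorname{polylog}(|S(r_S)|))$ since $|A|\le|S(r_S)|$. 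The decay of correlations hypothesis \eqref{decaycorrintro} is exactly what \eqref{localW1boundestimate} requires, so no extra assumptions are needed.

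For the conditional entropy, mutual information, and conditional mutual information, the observation is that each is a fixed linear combination (with coefficients $\pm1$) of von Neumann entropies of marginals, all of which are marginals on subsets of the region $S$ indicated in the statement: $S(A|B)=S(AB)-S(B)$ needs marginals on $AB$ and on $B$, both contained in $S\equiv AB$; $I(A:B)=S(A)+S(B)-S(AB)$ likewise; and $I(A:B|C)=S(AC)+S(BC)-S(C)-S(ABC)$ needs marginals all contained in $S\equiv ABC$. So I would apply the triangle inequality across the (at most four) entropy terms, bound each one by the previous display with the region $S$ chosen as stated, and note that every marginal region is a subset of $S$, hence its Wasserstein perturbation is controlled by $W_1(\tr_{S^c}\sigma(\beta,x),\tr_{S^c}\sigma(\beta,y))$ via monotonicity of $W_1$ under partial trace (tracing out further sites cannot increase the distance — this follows since the partial trace is $1$-Lipschitz in the relevant sense, cf.\ \Cref{Def:Wasserstein_Distance}). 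Then \eqref{localW1boundestimate} applied once with region $S$ finishes it, and the constant absorbed into the $\mathcal{O}(\cdot)$ just picks up a factor of at most $4$.

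The only mild subtlety — and the step I would be most careful about — is the bookkeeping of the region sizes inside the $\operatorname{polylog}$ and of $r_S$: in \eqref{localW1boundestimate} the quantities $r_S$ and $\mathcal{S}(r_S)$ are defined relative to the chosen $S$, so for the CMI case one must take $S\equiv ABC$ throughout (not apply the bound separately to $AC$, $BC$, $C$, $ABC$ with different $r_S$'s), which is legitimate precisely because enlarging $S$ only enlarges $S(r_S)$ and all four marginals live inside $ABC$. With $S$ fixed this way, a single invocation of \eqref{localW1boundestimate} bounds the $W_1$ distance of the $S$-marginals, and monotonicity under partial trace handles the sub-marginals; everything else is the triangle inequality and collecting constants. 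I do not anticipate a genuine obstacle here — the corollary is explicitly flagged in the text as "a direct consequence" of \Cref{eqentropycontbound} and \eqref{localW1boundestimate}, and the argument above is exactly that.
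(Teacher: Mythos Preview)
Your proposal is correct and matches the paper's own approach exactly: the paper does not give a detailed proof but simply states that the corollary ``is a direct consequence of \Cref{eqentropycontbound} together with \Cref{localW1boundestimate}'', and your proposal spells out precisely that combination --- entropy continuity in $W_1$ applied to the $S$-marginals, then the local $W_1$ bound \eqref{localW1boundestimate}, with the triangle inequality and monotonicity of $W_1$ under partial trace handling the composite entropic quantities. Your remark about fixing a single region $S$ (e.g.\ $S\equiv ABC$ for the CMI) and controlling all sub-marginals via monotonicity is the right way to make the bookkeeping clean.
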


	Thus, given an an estimate $y$ of $x$ satisfying $
	\|x-y\|_{\ell_\infty}=\mathcal{O}(\epsilon/\textrm{polylog}(n))$, we can also approximate entropic quantities of the Gibbs state to a multiplicative error. More generally, entropic continuity bounds can be directly used together with \Cref{maintheoremtomo}(ii) in order to estimate entropic properties of Gibbs states satisfying both uniform clustering of correlations and the approximate Markov condition (see \Cref{sec:gibbs_learning_exponential_decay} for details).

	\subsection{Learning Expectation Values of Parametrised Families of Many-Body Quantum Systems}\label{sec:learning_systems}

	Next, we turn our attention to the task of learning Gibbs or ground states of a parameterised Hamiltonian $H(x)$ known to the learner and sampled according to the uniform distribution $U$ over some region $x\in \Phi \coloneqq \prod_{i=1}^m[-1+x_i^0,1+x_i^0]$ (where here $\prod_{i=1}^m$ represents the Cartesian product over sets).
	More general distributions can also be dealt with under a condition of anti-concentration, see \Cref{seclearningalgogibbs}. 
	Here we restrict our results to local observables of the form
	$O=\sum_{i=1}^M O_i$
	where $S_i:=\operatorname{supp}(O_i)$ is contained in a ball of diameter independent of the system size. The setup in this section is similar to \cite{huang2021provably}. The idea is that we have access to some samples of a state chosen from different values of the parameterised Hamiltonian, and we want to use these to learn observables everywhere in the parameter space with high precision.
	We then want to know: what is the minimum number of samples drawn from this distribution which allows us to accurately predict expectation values of local observables for all choices of parameters?
	

	\subsubsection{Learning Expectation Values in Phases with Exponentially Decaying Correlations}
	
	The learner is given samples $\{(x_i,\sigma(\beta,x_i))\}_{i=1}^N$, where the parameters $x_i\sim U$, and their task is to learn $f_O(x):=\tr[\sigma(\beta,x)O]$ for an arbitrary value of $x\in \Phi$ and an arbitrary local observable $O$. 
	We assume that everywhere in the parameter space $x\in \Phi$ the Gibbs states are in the same phase of exponentially decaying correlations. Note that this does not necessarily imply the existence of a fully polynomial time approximation scheme, and finding under which conditions such algorithms exist is still a very active area of research \cite{pirogov_sinai}.
	

	\begin{theorem}[Learning algorithm for quantum Gibbs states]\label{learninggibbsalgo}
		With the conditions of the previous paragraph, given a set of $N$ samples $\{x_i,\tilde{\sigma}(\beta, x_i)\}_{i=1}^N$, where $\tilde{\sigma}(\beta,x_i)$ can be stored efficiently classically, and $N = O \big(\log\big(\frac{M}{\delta}\big)\,\log\big(\frac{n}{\delta}\big) e^{\operatorname{polylog}(\epsilon^{-1})} \big)$,
		there exists an algorithm that, on input $x\in\Phi$ and a local observable $O=
		\sum_{i=1}^M O_i$, produces an estimator $\hat{f}_O$
		such that, with probability $(1-\delta)$,
		\begin{align*}
			\sup_{x\in\Phi}\, |f_O(x)-\hat{f}_{O}(x)|\le \epsilon\,\sum_{i=1}^M\|O_i\|_\infty\,.
		\end{align*}
		Moreover, the samples $\tilde{\sigma}(\beta, x_i)$ are efficiently generated from measurements of the Gibbs states $\{\sigma(\beta, x_i)\}_{i=1}^N$ followed by classical post-processing.
	\end{theorem}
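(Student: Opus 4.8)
The plan is to realise every $f_O$ as a sum of \emph{effectively low-dimensional} Lipschitz functions of the Hamiltonian parameters, cover the relevant parameter cube by a net of size $e^{\operatorname{polylog}(\epsilon^{-1})}$, and exploit the anti-concentration of the uniform distribution $U$ on $\Phi$ to guarantee that every net cell contains a training point. \textbf{Step 1 (reduction and effective locality).} Writing $f_O=\sum_{i=1}^M f_{O_i}$ with $f_{O_i}(x):=\tr[\sigma(\beta,x)O_i]$, it suffices by the triangle inequality to construct, for each $i$, an estimator $\hat f_{O_i}$ with $\sup_{x\in\Phi}|f_{O_i}(x)-\hat f_{O_i}(x)|\le\epsilon\|O_i\|_\infty$. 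Fix $i$ and set $S_i=\operatorname{supp}(O_i)$, a ball of constant diameter. Since $\|O_i\|_{\operatorname{Lip}}\le 2\|O_i\|_\infty$, Wasserstein duality (\Cref{Def:Wasserstein_Distance}) together with the local continuity bound \Cref{localW1boundestimate} and the local perturbation estimate \Cref{Lemma:Local_Perturbation} underlying it shows that there is a radius $r=r(\epsilon)=\Theta\!\big(\xi\log(\epsilon^{-1})\big)$ such that $f_{O_i}(x)$ varies by at most $\tfrac{\epsilon}{6}\|O_i\|_\infty$ whenever only the parameters $x_j$ with $\operatorname{supp}(h_j(x_j))$ disjoint from the $r$-neighbourhood $S_i(r)$ of $S_i$ are changed. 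Hence $f_{O_i}$ coincides, up to an error $\tfrac{\epsilon}{6}\|O_i\|_\infty$, with a function $g_{O_i}$ of the ``relevant'' coordinates $x|_{\mathcal S_i(r)}$ alone (notation of \Cref{localW1boundestimate} with $S=S_i$), and these span a cube of dimension $\kappa=\mathcal O\!\big(\ell\,r^D\big)=\operatorname{polylog}(\epsilon^{-1})$; the same bound shows $g_{O_i}$ is $L_i$-Lipschitz on that cube with $L_i=\operatorname{polylog}(|S_i(r)|)\,\|O_i\|_\infty=\operatorname{polylog}(\epsilon^{-1})\,\|O_i\|_\infty$.

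\textbf{Step 2 (net and estimator).} Choose an $\ell_1$ $\eta$-net $\mathcal N_i$ of the $\kappa$-dimensional cube of relevant coordinates with $\eta=\epsilon\|O_i\|_\infty/(6L_i)$; it can be taken with $|\mathcal N_i|\le(3/\eta)^{\kappa}=\big(\operatorname{polylog}(\epsilon^{-1})/\epsilon\big)^{\operatorname{polylog}(\epsilon^{-1})}=e^{\operatorname{polylog}(\epsilon^{-1})}$ points. For each $p\in\mathcal N_i$, if some training index $\ell$ satisfies $\|x_\ell|_{\mathcal S_i(r)}-p\|_{\ell_1}\le\eta$, record one such $\ell(i,p)$. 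Given a query $x\in\Phi$, let $p_x\in\mathcal N_i$ be a net point nearest $x|_{\mathcal S_i(r)}$ and define $\hat f_{O_i}(x)$ to be the estimate of $f_{O_i}\!\big(x_{\ell(i,p_x)}\big)$ read off from the classical description $\tilde\sigma\big(\beta,x_{\ell(i,p_x)}\big)$. If every $\tilde\sigma(\beta,x_\ell)$ predicts all bounded-support local observables to precision $\tfrac{\epsilon}{6}\|\cdot\|_\infty$ --- which the robust shadow tomography for Gibbs states developed in this work (cf.\ \Cref{maintheoremtomo}) provides for all $N$ recorded states simultaneously --- then a four-term triangle inequality (effective locality at $x$; the Lipschitz move across a net cell, with $\|x|_{\mathcal S_i(r)}-x_{\ell(i,p_x)}|_{\mathcal S_i(r)}\|_{\ell_1}\le 2\eta$; effective locality at $x_{\ell(i,p_x)}$; and the tomographic error) bounds $|f_{O_i}(x)-\hat f_{O_i}(x)|$ by $\epsilon\|O_i\|_\infty$ uniformly in $x$, hence $|f_O(x)-\hat f_O(x)|\le\epsilon\sum_i\|O_i\|_\infty$.

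\textbf{Step 3 (the covering succeeds).} It remains to bound the number of samples so that, with probability $1-\delta$, every pair $(i,p)$ with $i\le M$, $p\in\mathcal N_i$, receives a recorded index. As $U$ is a product of one-dimensional uniforms, the marginal of a sample on the coordinates indexed by $\mathcal S_i(r)$ is uniform on the $\kappa$-cube, so $\Pr\!\big[x_\ell|_{\mathcal S_i(r)}\in B(p,\eta)\big]\ge(\eta/\kappa)^{\kappa}=:q$ with $\log(1/q)=\mathcal O\!\big(\kappa\log(\kappa/\eta)\big)=\operatorname{polylog}(\epsilon^{-1})$, i.e.\ $q^{-1}=e^{\operatorname{polylog}(\epsilon^{-1})}$. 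By independence and a union bound over the $\sum_i|\mathcal N_i|\le M\,e^{\operatorname{polylog}(\epsilon^{-1})}$ pairs, the probability that some pair is missed is at most $M\,e^{\operatorname{polylog}(\epsilon^{-1})}\,e^{-qN}$, which is $\le\delta/2$ once $N\ge q^{-1}\big(\log(M/\delta)+\operatorname{polylog}(\epsilon^{-1})\big)$. Absorbing the additive $\operatorname{polylog}(\epsilon^{-1})$ into $q^{-1}$, and adding the copies per recorded state needed to make its classical description faithful for every one of the $\mathcal O(n)$ bounded-support local regions with total failure probability $\delta/2$ (so that $O$ need not be known in advance) --- a $\log(n/\delta)$ factor via \Cref{maintheoremtomo} --- gives $N=\mathcal O\!\big(\log(M/\delta)\log(n/\delta)\,e^{\operatorname{polylog}(\epsilon^{-1})}\big)$. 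The ground-state/GALI case follows the same scheme with \Cref{localW1boundestimate} replaced by its ground-state analogue.

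\textbf{Main obstacle.} The crux is Step 1: extracting from exponential decay of correlations the clean fact that $f_{O_i}$ truncates to $\operatorname{polylog}(\epsilon^{-1})$ coordinates at error $\epsilon$. The delicate point is that in \Cref{localW1boundestimate} the admissible truncation radius $r_S$ itself grows like $\log$ of the inverse of the perturbation size being tracked, so one must apply \Cref{Lemma:Local_Perturbation} scale by scale and sum a geometric-type series, verifying that the $\operatorname{polylog}(|S(r_S)|)$ enhancement remains subdominant --- precisely the step where exponential (rather than merely polynomial) decay of correlations is essential, and where the ground-state case needs the corresponding perturbation estimate for (generalised) locally indistinguishable ground states. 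A secondary, but equally necessary, ingredient is the anti-concentration invoked in Step 3: for an arbitrary sampling distribution the coupon-collector rate $q^{-1}$ need not be $e^{\operatorname{polylog}(\epsilon^{-1})}$, which is why the hypothesis fixes $U$ to be (near-)uniform on $\Phi$, see \Cref{seclearningalgogibbs}.
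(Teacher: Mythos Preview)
Your three-step skeleton (effective locality $\Rightarrow$ net covering via coupon collector $\Rightarrow$ shadow-based readout) is the same as the paper's, but there is a genuine gap in Step~2, compounded by a wrong citation.

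You record \emph{one} training index $\ell(i,p)$ per net cell and then ask that the single classical description $\tilde\sigma\big(\beta,x_{\ell(i,p)}\big)$ predict local observables to precision $\tfrac{\epsilon}{6}\|\cdot\|_\infty$. A single classical shadow from a single copy of a state does not achieve this: its variance on a $k$-local observable is $\Theta(3^k)$, so one copy gives error $\Theta(1)$, not $\epsilon$. Invoking \Cref{maintheoremtomo} does not help either: that theorem needs $\operatorname{polylog}(n)/\epsilon^{2}$ copies of the \emph{same} state $\sigma(\beta,x)$ to learn its Hamiltonian parameters in $\ell_1$, whereas in the present setup each $x_i$ is seen exactly once. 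The sentence in Step~3 about ``adding the copies per recorded state'' presupposes access you do not have.

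The missing ingredient is the paper's robust shadow tomography for \emph{non-identical} copies (\Cref{propshadow}, \Cref{propgibbsshadow}): one requires each cell to be visited by $t=\Theta\!\big(\epsilon^{-2}\log(n/\delta)\big)$ distinct samples $Y_{i_1},\dots,Y_{i_t}$ and then \emph{averages} their shadows, $\tilde\sigma_{S_i}(x):=\tfrac{1}{t}\sum_{j}\tr_{S_i^c}\tilde\sigma(\beta,Y_{i_j})$. Since the effective-locality step guarantees $\|\tr_{S_i^c}(\sigma(\beta,Y_{i_j})-\sigma(\beta,x))\|_1=\mathcal O(\epsilon)$ for all $j$, a matrix Bernstein inequality for independent but non-identically distributed summands controls the deviation of this empirical average. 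This is the actual source of the $\log(n/\delta)$ factor in $N$; with this correction your coupon-collector count has to cover each cell $t$ times rather than once, which is what the paper does.

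A smaller point: your ``Main obstacle'' overcomplicates Step~1. You worry that in \Cref{localW1boundestimate} the truncation radius $r_S$ depends on the perturbation size, forcing a scale-by-scale summation. In fact the underlying \Cref{Lemma:Local_Perturbation} already gives the uniform bound $\sup_{x}\|\tr_{S^c}(\sigma(\beta,x)-\sigma(\beta,x|_{\mathcal S(r)}))\|_1\le C_1 e^{-r/2\xi}$ for every fixed $r$, independent of how large the change outside $\mathcal S(r)$ is. One application with $r=\Theta(\xi\log\epsilon^{-1})$ suffices; no geometric series is needed, and the paper proceeds exactly this way.
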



	\noindent \textbf{Proof ideas}: Our estimator $\hat{f}_O$ is constructed as follows: during a training stage, we pick $N$ points $Y_1,\dots, Y_N\sim U$ and estimate the reduced Gibbs states over large enough enlargements $S_i\partial $ of the supports 
	$\mathcal{S}_i:=\{x_j|\,\operatorname{supp}(h_j(x_j))\cap S_i\partial\ne \emptyset\}\cap [x-\epsilon,x+\epsilon]^m$ of the observables $O_i$. Due to the anti-concentration property of the uniform distribution, the probability that a small region $\mathcal{S}_i\partial$ in parameter space contains $t$ variables $Y_{i_1},\dots, Y_{i_t}$ becomes large for $N\approx \log(M)$. We then run the classical shadow tomography protocol on those states in order to construct efficiently describable and computable product matrices $\widetilde{\sigma}(\beta, Y_1),\dots,\widetilde{\sigma}(\beta,Y_N)$. 
	Then for any region $S_i$, we select the shadows $\widetilde{\sigma}(\beta,Y_{i_1}),\dots \widetilde{\sigma}(\beta,Y_{i_t})$ whose local parameters are close to that of the target state and construct the empirical average $\widetilde{\sigma}_{S_i }(x):=\frac{1}{t}\sum_{j=1}^t\,\tr_{S_i^c}\big[\widetilde{\sigma}(\beta,Y_{i_j})\big]$.
	Using belief propagation methods (see \Cref{propgibbsshadow}), it is possible to show that exponential decay of correlations ensures that the estimator is a good approximation to local observables.
	Thus such operators can be well approximated using the reduced state $\tr_{S_i^c}\sigma(\beta,x)$ for $t\approx \log(n)$. 
	The estimator $\hat{f}_O$ is then naturally chosen as $\hat{f}_O(x):=\sum_{i=1}^M\,\tr[O_i\,\widetilde{\sigma}_{S_i }(x)]$. A key part of the proof is demonstrating that exponential decay of correlations implies that $f_O(x)$ does not change too much as $x$ varies.

	\subsubsection{Learning beyond exponentially decaying phases: ground states and indistinguishability}\label{sec:beyond_exp_decay}
	

	So far we have only discussed results for thermal states which have exponentially decaying correlations.
	It would be desirable to extend our results to phases for which this is not generally known to hold. We introduce a new condition called \emph{generalised approximate local indistinguishability} (GALI), under which learning local observables from samples can be done efficiently:
	
	\begin{definition}[Generalised approximate local indistinguishability (GALI)]\label{def:GALI_1}
		For $x_1^0,\ldots,x_m^0\in\mathbb{R}$ let 
		$\Phi\coloneqq \prod_{i=1}^m [-1+x_i^0,1+x_i^0]$ and for $x\in\Phi$ define $\rho(x)$ as either the ground state or thermal state of the local Hamiltonian $H(x)$. We say that the family of states $\rho(x)$ satisfies generalised approximate local indistinguishability (GALI) with parameter $\eta(S)\geq0$ and decay function $f$ if
		for any region region $S$ and $r\in\mathbb{N}$ there is a set of parameters $x^*_{\mathcal{S}(r)^c}$ s.t. for all $O$ supported on $S$ and $f_{O}(x):=\tr[O\,\rho(x)]$ the following bound holds: 
		\begin{align*}
			\sup_{x\in\Phi} |f_{O}(x)-f_{O}((x|_{\mathcal{S}(r)},x^*_{\mathcal{S}(r)^c})) | \le (|S|f(r)+\eta(S))\,\|O\|_\infty\,,
		\end{align*}
		for a function $f$ s.t. $\lim_{r\to\infty}f(r)=0$.
	\end{definition}
	GALI can be shown to hold for gapped ground state phases of matter and all thermal states with exponentially decaying correlations.
	We refer the reader to \Cref{sec:beyond_high_T} for further details. 
	Under the GALI assumption, we are able to prove the following generalisation of \Cref{learninggibbsalgo}:
	
	\begin{theorem}\label{thm:generalized_learning_gali}
		Let $\rho(x)$, $x\in \Phi\coloneqq \prod_{i=1}^m [-1+x_i^0,1+x_i^0]$, be a family of ground states or Gibbs states satisfying GALI, as per \cref{def:GALI}. 
		Given a set of $N$ samples $\{x_i,\tilde{\rho}( x_i)\}_{i=1}^N$, where $\tilde{\rho}(x_i)$ can be stored efficiently classically, and $N = O \big(\log\big(\frac{M}{\delta}\big)\,\log\big(\frac{n}{\delta}\big) e^{\operatorname{polylog}(\epsilon^{-1})} \big)$,
		there exists an algorithm that, on input $x\in\Phi$ and a local observable $O=
		\sum_{i=1}^M O_i$, produces an estimator $\hat{f}_O$
		such that, with probability $(1-\delta)$,
		\begin{align*}
			\sup_{x\in[-1,1]^m}\, |f_O(x)-\hat{f}_{O}(x)|\le \epsilon\,\sum_{i=1}^M\|O_i\|_\infty\,.
		\end{align*}
		Moreover, the samples $\tilde{\rho}(x_i)$ are efficiently generated from measurements of the Gibbs states $\{\rho( x_i)\}_{i=1}^N$ followed by classical post-processing.
	\end{theorem}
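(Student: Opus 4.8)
\noindent\textbf{Proof ideas}: The plan is to follow the proof of \Cref{learninggibbsalgo} almost verbatim, using the GALI hypothesis in place of exponential decay of correlations at the single step where the latter entered. By linearity it is enough to construct, for each $i$, an estimator $\hat f_{O_i}$ with $\sup_{x\in\Phi}|f_{O_i}(x)-\hat f_{O_i}(x)|\le\epsilon\|O_i\|_\infty$, since summing over $i$ yields the claim. Fix $i$, write $S=S_i$, and let $\mathcal{S}(r)$ be as in \Cref{def:GALI_1}. Because $|S|=\mathcal{O}(1)$ and $\lim_{r\to\infty}f(r)=0$ (and, for the concrete GALI families of \Cref{sec:beyond_high_T}, $\eta(S)$ can be taken below any prescribed threshold), I would first choose $r=r(\epsilon)$ with $|S|f(r)+\eta(S)\le c_0\epsilon$ for a small absolute constant $c_0$, so that GALI furnishes parameters $x^*_{\mathcal{S}(r)^c}$ with
\begin{align}
	\sup_{x\in\Phi}\big|f_{O_i}(x)-g_i(x|_{\mathcal{S}(r)})\big|\le c_0\,\epsilon\,\|O_i\|_\infty\,,\qquad\text{where}\quad g_i(z):=f_{O_i}\big((z,x^*_{\mathcal{S}(r)^c})\big)\,.
\end{align}
This reduces the task to learning the bounded function $g_i$ of the $|\mathcal{S}(r)|=\mathcal{O}(r^D)$ parameters indexed by $\mathcal{S}(r)$.

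Two ingredients are then combined. The first is a quantitative, $n$-independent local continuity bound for $g_i$: some $\epsilon'=\epsilon'(\epsilon)$ with $\|z-z'\|_{\ell_1}\le\epsilon'\Rightarrow|g_i(z)-g_i(z')|\le c_0\epsilon\|O_i\|_\infty$. For thermal states this is the Duhamel / quantum-belief-propagation perturbation estimate already behind \eqref{upperboundW1}, restricted to the finitely many variables of $\mathcal{S}(r)$, which gives $\epsilon'=\Theta(\epsilon/\operatorname{poly}(r))$; for gapped ground states it is the quasi-adiabatic-continuation Lipschitz bound used in \cite{Lewis_Huang_Preskill_2023,huang2021provably}. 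The second is the training-and-selection procedure of the proof sketch of \Cref{learninggibbsalgo}: draw $Y_1,\dots,Y_N\sim U$; for every $j$ and every enlargement $S_i\partial$, run the robust ground/Gibbs shadow tomography protocol of \Cref{propgibbsshadow} (with median-of-means amplification) to obtain efficiently storable classical descriptions $\widetilde\rho(Y_j)$ whose marginals approximate those of $\rho(Y_j)$ well enough that $|\tr[O_i\widetilde\rho(Y_j)]-f_{O_i}(Y_j)|\le c_0\epsilon\|O_i\|_\infty$, except with probability $\delta/(2NM)$; on input $x$, for each $i$ retain the indices $j$ with $\|Y_j|_{\mathcal{S}(r)}-x|_{\mathcal{S}(r)}\|_{\ell_1}\le\epsilon'$, average the corresponding shadow marginals into $\widetilde\sigma_{S_i}(x)$, and output $\hat f_O(x):=\sum_i\tr[O_i\,\widetilde\sigma_{S_i}(x)]$. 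For each retained $j$ one then has $|\tr[O_i\widetilde\rho(Y_j)]-f_{O_i}(x)|\le\mathcal{O}(c_0\epsilon)\|O_i\|_\infty$ (GALI applied at $Y_j$ and at $x$, local continuity, and shadow accuracy), and averaging over the retained indices keeps the same bound; fixing $c_0$ small enough makes it $\le\epsilon\|O_i\|_\infty$.

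It remains to choose $N$. By anti-concentration of $U$ a single $Y_j$ satisfies $\|Y_j|_{\mathcal{S}(r)}-x|_{\mathcal{S}(r)}\|_{\ell_1}\le\epsilon'$ with probability at least $(\epsilon'/c)^{|\mathcal{S}(r)|}$ for an absolute constant $c$ --- the only place the distributional assumption enters, and any sufficiently anti-concentrated law would do (cf.\ \Cref{seclearningalgogibbs}). Hence taking $N\gtrsim (c/\epsilon')^{|\mathcal{S}(r)|}\log(n/\delta)\log(M/\delta)$ ensures, via a Chernoff bound and a union bound over the $M$ regions and the $NM$ shadow events, that every region retains enough usable samples for the averaging and that all shadows are accurate, with probability $\ge1-\delta$. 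For the GALI classes at hand $f$ decays at least quasi-polynomially, hence $r(\epsilon)=\operatorname{polylog}(\epsilon^{-1})$, $|\mathcal{S}(r)|=\operatorname{polylog}(\epsilon^{-1})$, and $(c/\epsilon')^{|\mathcal{S}(r)|}=e^{\operatorname{polylog}(\epsilon^{-1})}$, giving $N=\mathcal{O}\big(\log(M/\delta)\log(n/\delta)\,e^{\operatorname{polylog}(\epsilon^{-1})}\big)$. Efficient generation and storage of the $\widetilde\rho(Y_j)$ is inherited from \Cref{propgibbsshadow}.

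The step I expect to be the main obstacle is the local continuity bound: GALI on its own only says that $f_{O_i}$ is approximately a function of $x|_{\mathcal{S}(r)}$, not how it varies within that slab, so one must supply a genuinely $n$-independent Lipschitz estimate in those parameters --- straightforward for thermal states via perturbation theory, but in the gapped ground-state case this forces a detour through spectral-gap stability / quasi-adiabatic continuation together with a careful check that no hidden system-size dependence sneaks into the constants. A secondary subtlety is pinning down the precise quantitative decay demanded of $f$ (and the smallness of $\eta(S)$) needed for the stated $e^{\operatorname{polylog}(\epsilon^{-1})}$ scaling, and confirming these for the examples in \Cref{sec:beyond_high_T}.
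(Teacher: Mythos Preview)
Your proposal is correct and follows essentially the same route as the paper: reduce via GALI to a function of the $\mathcal{O}(r^D)$ local parameters, establish an $n$-independent continuity bound in those parameters, then run the coupon-collector plus robust-shadow argument exactly as in \Cref{learninggibbsalgo}. Your identification of the local continuity step as the crux, and of quasi-adiabatic continuation as the tool for the gapped ground-state case, matches what the paper does in \Cref{Sec:Learning_Ground_States}.

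One simplification worth noting for the thermal case: the paper does not invoke belief propagation for the continuity of $g_i$. Because the two comparison states $\rho((z,x^*_{\mathcal{S}(r)^c}))$ and $\rho((z',x^*_{\mathcal{S}(r)^c}))$ already share the same parameters outside $\mathcal{S}(r)$, their Hamiltonians differ on a region of volume $\mathcal{O}(r^D)$ only, so $\|H_1-H_2\|_\infty=\mathcal{O}(r^D\gamma)$ is $n$-independent by construction. The elementary Gibbs perturbation bound $\|e^{-\beta H_1}/Z_1-e^{-\beta H_2}/Z_2\|_1\le 2(e^{\beta\|H_1-H_2\|_\infty}-1)$ of \Cref{lem:Gibbs_perturbation} then gives the required continuity directly, with no appeal to correlation decay or Duhamel expansions. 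This buys a cleaner argument and makes transparent that the only place decay assumptions enter is through GALI itself. Your belief-propagation route would also work but is heavier machinery than needed here; conversely, it has the virtue of giving sharper Lipschitz constants when one cares about them (e.g.\ in \Cref{sectomogibbs}).
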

	\noindent Thus we can efficiently learn families of gapped ground states.
	We refer the reader to \cref{Sec:Learning_Ground_States} for proofs.

	\section{Comparison to previous work}\label{sec:previous_work}
	
	\subsection{Classical literature}

	The problem of Hamiltonian learning for classical models has attracted a lot of attention in the last years in the theoretical computer science community~\cite{Bresler2015,10.5555/3495724.3497094,Lokhov2018,pmlr-v119-zhang20l} which traditionally refers to it as Ising model --- or Markov field --- learning. 
	The question of what can be inferred from very few samples was also asked classically~\cite{one_or_multiple}. 
	Our work sheds further light on this question and is of interest even when restricting to classical systems. 
	Indeed, to the best of our knowledge, the statements of \Cref{cor:continuity_W1} and \Cref{localW1bound} are new even for  classical Gibbs distributions. 
	Previous work in~\cite{rouze2021learning} already established similar learning results for measures satisfying a so-called transportation cost inequality (TC)~\cite{Bobkov1999,Talagrand1996}, although the present condition of exponential decay of correlations is more standard.
	
	It should be noted that if a Gibbs measure satisfies TC, then any Lipschitz function of a random variable distributed according to it satisfies a Gaussian concentration bound~\cite{ledoux2001concentration}. This can easily be seen to imply that we can estimate the expectation value of $M$ Lipschitz functions up to an error $\epsilon$ with probability of success $\delta$ from $\mathcal{O}(\epsilon^{-2}\log(M\delta^{-1}))$ samples by taking the empirical average. At first sight this might look comparable with the sample complexity we obtain with our learning algorithm. However, this only holds for \emph{one} basis, whereas our result holds for any basis. Furthermore, if the number of Lipschitz observables satisfies $M=e^{\Omega(n)}$, then the number of samples required to obtain a good estimate through the empirical average becomes polynomial. On the other hand, given that $W_1(\sigma(\beta,x),\sigma(\beta,x'))\leq\epsilon n$, we can evaluate as many Lipschitz observables as we wish from $\sigma(\beta,x')$ without requiring any further samples. Thus, even for observables in a fixed basis our result has advantages.

	\subsection{Previous work on many-body quantum state tomography}

	As mentioned before, one striking advantage of our Gibbs tomography algorithm when estimating 
	expectation values of local observables compared to state-agnostic methods like classical shadows is the exponential speedup in the size of the support of the observable. In fact, our method gives good guarantees on the larger class of Lipschitz observables, which includes non-local observables. This advantage is even more visible when it comes to estimating entropic quantities: whereas the polynomial approximation proposed in \cite{Huang2020} works universally for any $n$-qubit state, it only gives good approximation guarantees for reduced states on very few qubits. Here instead, we avoid this issue by leveraging the Wasserstein continuity bounds offered in \cite{de2021quantum}. 
	
	Our framework also differs from the one of Hamiltonian learning algorithms tackled in \cite{anshuweb,anshu2021sample,haah2021optimal}: in these papers, the authors were interested in estimating the parameter $x$ of a given Hamiltonian $H(x)$ given access to copies of the state $\sigma(\beta,x)$, in $\ell_2$ or $\ell_\infty$. Here instead, we argue that a good recovery in $W_1$ distance is implied by the weaker condition of recovery in $\ell_1$. Clearly, one can leverage these previous results to further control our $\ell_1$ bound, as we argue in \Cref{Gibbsstatetomo}. It should be noted however that our bound only requires that the Gibbs state $\sigma(\beta,x)$ satisfies an exponential decay of correlations, whereas these learning algorithms provide very efficient $\ell_\infty$ or $\ell_2$ recovery either for (i) commuting Hamiltonians or (ii) in the high-temperature regime. It remains an important question whether the condition of exponential decay of correlations is enough to get good $\ell_1$ recovery. Furthermore, in \Cref{sec:gibbs_learning_exponential_decay} we show that under the additional assumptions of uniform Markovianity and clustering of correlations, it is possible to learn in $W_1$ through the maximum entropy method, without resorting directly to learning the parameters $x$.

	\subsection{Previous work on learning observables in phases of matter} \label{Sec:Prev_Work_Phases}

	In \cite{huang2021provably}, the authors found a machine learning algorithm which, for any smoothly parameterised family of local Hamiltonians $\{H(x)\}_{x\in[-1,1]^m}$ in a finite spatial dimension with a constant spectral gap, can be trained to predict expected values of sums of local observables in the associated ground state $\psi_g(x)$. 
	More precisely, given a local observable $O=\sum_{i=1}^MO_i$ with $\operatorname{supp}(O_i)=\mathcal{O}(1)$, they construct an estimator $\hat{f}_O(x)$ of the expectation value of the observable such that 
	\begin{align}\label{Huangoldresult}
		\mathbb{E}_{x\sim U([-1,1]^m)}\,\Big[\big|\tr[O\psi_g(x)]-\hat{f}_{O}(x)\big|^2\Big]\le \epsilon^2 \Big(\sum_{i=1}^M \|O_i\|_\infty\Big)^2\,,
	\end{align}
	as long as the training size (i.e. the number of sampled points within the phase) is $N=\Big(\sum_{i=1}^M \|O_i\|_\infty\Big)^2\, m^{\mathcal{O}(1/\epsilon^2)}$.

	In \Cref{Sec:Learning_Ground_States}, we improve this
	result for ground states in three ways, up to further imposing the GALI condition: first, we can assume that the parameters $x$ are distributed according to a much larger class of distributions than the uniform distribution. 
	This extension does not carry so easily in the proof of \cite{huang2021provably} which uses Fourier analysis techniques involving integration over the Lebesgue measure to derive \Cref{Huangoldresult}.
	Second, theirs is a result in expectation, that is in $\|.\|_{L^2}$, whereas our bound in \Cref{thm:generalized_learning_gali} works in the worst-case setting associated to the stronger $\|.\|_\infty$-norm topology. 
	Third and most importantly, the dependence of the number of training data points scales exponentially in the precision parameter $\epsilon$ in \Cref{Huangoldresult}, whereas ours scales only quasi-polynomially. 
	
	Finally, we extend the learning result beyond ground states to finite temperature phases of matter with exponential decay of correlations.
	This not only includes all high-temperature phases of matter (regardless of the Hamiltonian), but also low-temperature phases with the relevant correlation functions \cite{DuminilCopin2019}.
	This is a particularly relevant result since zero temperature is never achieved in practice, so in reality we are always working with low-temperature thermal states.

	We also recognise independent, concurrent work by \cite{Lewis_Huang_Preskill_2023}.
	Here the authors consider the same setup of gapped ground states as \cite{huang2021provably} and also improved over \Cref{Huangoldresult} to achieve the same sample complexity as \Cref{thm:generalized_learning_gali}. 
	However, their result is not directly comparable to ours.
	We emphasise \cite{Lewis_Huang_Preskill_2023} consider gapped, ground state phases, whereas our work includes thermal phases.
	We also note they remove all conditions on the prior distribution over the samples $x$, whereas we still need to assume a type of mild anti-concentration over the local marginals. 
	However, their result is still stated as an $\|.\|_{L^2}$-bound due to the use of machine learning techniques, whereas our more straightforward approximation tools allow us to get stronger bounds in $\|.\|_\infty$. Conceptually speaking, our methods for approximating local expectation values requires no knowledge of machine learning techniques. 
	Our work also shows that it is possible to go beyond gapped quantum phases and learn thermal phases satisfying GALI.

	\section{Discussion and Conclusions}
	In this paper we 
	contributed to the tasks of tomography and learnability of quantum many-body states by combining previous techniques with approaches not considered so far in this field, in order to obtain novel and powerful features.
	
	\medskip

	\textbf{Tomography.} \ 
	First, we extended the results of~\cite{rouze2021learning} on the efficient tomography of high-temperature commuting Gibbs states
	to Gibbs states with exponentially decaying correlations. 
	This result permits to significantly enlarge the class of states for which we know how to learn all quasi-local properties with a number of samples that scales polylogarithmically with the system's size. In particular, our results now also hold for classes of Gibbs states of non-commuting Hamiltonians. 
	As we require exponentially fewer samples to learn in the Wasserstein metric when compared with the usual trace distance and still recover essentially all physically relevant quantities associated to the states, we hope that our results motivate the community to consider various tomography problems in the Wasserstein instead of trace distance.  
	
	As we achieved this result by reducing the problem of learning the states to learning the parameters of the Hamiltonian in $\ell_1$, we hope our work further motivates the study of the Hamiltonian learning problem in $\ell_1$-norm with polylog samples. $1$D Gibbs states are a natural place to start, but obtaining Hamiltonian learning algorithms just departing from exponential decay of correlations would provide us with a complete picture. In \Cref{sec:gibbs_learning_exponential_decay} we also partially decoupled the Hamiltonian learning problem from the $W_1$ learning one by resorting to the uniform Markov condition. Thus, it would be important to establish the latter for a larger number of systems.
	
	It would be interesting to investigate the sharpness of our bounds, and to understand if exponential decay of correlations is really necessary. One way of settling this question would be to prove polynomial lower bounds for learning in Wasserstein distance for states at critical temperatures. 
	
	\medskip

	\textbf{Learning Phases of Matter.} 
	\ Second, we improved the results of~\cite{huang2021provably} for learning a class of states in several directions, including the scaling in precision, the classes of states it applies to and the form of the recovery guarantee. 
	In particular, the results now apply to Gibbs states, which are the states of matter commonly encountered experimentally.
	Interestingly, we did not need to resort to machine learning techniques to achieve an exponentially better scaling in precision by making arguably mild assumptions on the distributions the states are drawn from. 
	Although the results proved here push the state-of-the-art of learning quantum states, we believe that our methods, for instance the novel continuity bounds for various local properties of quantum many-body states, will find applications in other areas of quantum information.
	
	Beyond the GALI thermal phases and ground states studied here, it would be interesting to find other families of states which can be efficiently learned, and indeed if more restrictive assumptions on the parameterisation of Hamiltonians can result in more efficient learning. One interesting open problem that goes beyond the present paper's scope is finding families of states satisfying GALI without belonging to a common gapped phase of matter. If such a family existed, it would clarify the differences between our framework and that of \cite{Lewis_Huang_Preskill_2023}.
	Finally, we realise that although the results proved here are for lattice systems, they almost certainly generalise to non-lattice configurations of particles.

	\clearpage 
	\section{Acknowledgments}
	
	The authors gratefully recognise useful discussions with
	\begingroup
	\hypersetup{urlcolor=navyblue}
	\href{https://orcid.org/0000-0001-5317-2613}{Hsin-Yuan (Robert) Huang} and \href{https://orcid.org/0000-0001-9537-9663}{Haonan Zhang}.
	We thank Laura Lewis,  Viet T. Tran, Sebastian Lehner, 
	\href{https://orcid.org/0000-0002-8291-648X}{Richard Kueng}, 
	\href{https://orcid.org/0000-0001-5317-2613}{Hsin-Yuan (Robert) Huang}, and 
	\href{https://inspirehep.net/authors/992791}{John Preskill} for sharing a preliminary of the manuscript \cite{Lewis_Huang_Preskill_2023}, which was discussed in \Cref{Sec:Prev_Work_Phases}.
	\endgroup
	
	EO is supported by the Munich Quantum Valley and the Bavarian state government, with funds from the Hightech Agenda Bayern Plus.
	CR acknowledges financial support from a Junior Researcher START Fellowship from the DFG cluster of excellence 2111 (Munich Center for Quantum Science and Technology), from the ANR project QTraj (ANR-20-CE40-0024-01) of the French National Research Agency (ANR), as well as from the Humboldt Foundation.
	DSF is supported by France 2030 under the French National Research Agency award number “ANR-22-PNCQ-0002”.
	JDW acknowledges support from the United States Department of Energy, Office of Science, Office of Advanced Scientific Computing Research, Accelerated Research in Quantum Computing program, and also NSF QLCI grant OMA-2120757.
	
	\bibliography{References}
	\bibliographystyle{alpha}

	\clearpage
	\appendix

	\section*{Supplemental Material}

	\section{Preliminaries}
	
	Given a finite dimensional Hilbert space $\mathcal{H}$,
	we denote by $\mathcal{B}(\mathcal{H})$ 
	{the algebra of bounded operators on $\mathcal{H}$}, whereas $\mathcal{B}_{\operatorname{sa}}(\cH)$ denotes the subspace of self-adjoint operators.
	We denote by $\mathcal{D}(\mathcal{H})$ the set of positive operators on $\mathcal{H}$ of unit trace, and by $\mathcal{D}_{+}(\mathcal{H})$ the subset of positive, full-rank operators on $\mathcal{H}$. 
	Schatten norms are denoted by $\|.\|_p$ for $p\ge 1$. The identity matrix in $\mathcal{B}(\mathcal{H})$ is denoted by $I$. 
	Given a bipartite system $AB$, the normalised partial trace over a subsystem $A$ is written $\tau_A$, i.e. $\tau_A:=2^{-|A|}\tr_A$.

	In this work, we consider a family of local qubit interactions $\{h_j(x_j)\}_{x_j\in[-1,1]^\ell}$, $j=1,\dots ,n$ over the $D$-dimensional lattice $\Lambda=[-L,L]^D$, for some fixed integer $\ell$, where $n=(2L+1)^D$ denotes the total number of qubits constituting the system. For each $j$ and all $x_j\in [-1,1]^\ell$, $h_j(x_j)$ is supported on a ball $A_j$ around site $j\in\Lambda $ of radius $r_0$. We also assume that the matrix-valued functions $x_j\mapsto h_j(x_j)$ as well as their derivatives are uniformly bounded: $\|h_j\|_\infty,\|\nabla_x h_j(x)\|_\infty\le h$. 
	For sake of simplicity, we assume that the interactions are linear functions of their parameters, that is $h_j(x_j) = x_jV_j$ for some fixed operator $V_j$. 
	However this assumption is not necessary in any of our proofs, as commented in \Cref{nonlinearparam}. 
	Concatenating the vectors $x_j$ into $x=(x_1,\dots, x_n)=(x'_1,\dots,x'_m)$, $m=n\ell$, the local interactions induce the following family of Hamiltonians $\{H(x)\}_{x\in [-1,1]^{m}}$, with:
	\begin{align}\label{Hamiltoniandef}
		H(x)=\sum_{j=1}^m h_j(x_j)\,.
	\end{align}
	More generally, given a region $B\subset \Lambda$ of the lattice, we denote by $H_B(x):=\sum_{j|A_j\subset B}h_j(x)$ the Hamiltonian restricted to $B$. 
	We denote by $x|_{\mathcal{S}(r)}$ the concatenation of vectors $x_j$ corresponding to interactions $h_j$ supported on regions intersecting $S(r):=\{l\in\Lambda|\,\operatorname{dist}(l,S)\le r\}$.
	Given two operators $A,B$, then $\operatorname{dist}(A,B)$ denotes the distance between $\operatorname{supp}(A)$ and $\operatorname{supp}(B)$. For much of the following, we will be concerned with Gibbs states, defined as 
	\begin{align*}
		\sigma(\beta,x):=\frac{e^{-\beta H(x)}}{\tr[e^{-\beta H(x)}]}.
	\end{align*}
	Note that with the parametrisation above the maximally mixed is always contained in the image of $\sigma(x,\beta)$ by taking $\sigma(0,\beta)$. For some of our results we will want to make sure that we are not taking a parametrisation that also has states in the trivial in its image. In that case, we will write 
	
	\begin{align}\label{equ:shifted_Hamiltonian}
		\sigma(\beta,x,H_0):=\frac{e^{-\beta H(x)-H_0}}{\tr[e^{-\beta H(x)-H_0}]},
	\end{align}
	where $H_0$ is another Hamiltonian that potentially shfits us away from the trivial phase.
	
	In particular, we will be interested in systems satisfying the following type of correlation decay: 
	\begin{condition}[Exponential Decay of Correlations] \label{decaycorr}
		For a state $\sigma$ and any operator $X_A$, resp.~$X_B$, supported on region $A$, resp. $B$, we say the state satisfies \emph{exponential decay of correlations} if
		\begin{align}
			\operatorname{Cov}_{\sigma}(X_A,X_B)\le C\,\min\{|A|,\,|B|\}\,\|X_A\|_\infty\,\|X_B\|_\infty\,e^{-\nu \operatorname{dist}(A,B)}\,,
		\end{align}
		for any choice of $X_A$,$X_B$, and
		for some parameters $C,\nu>0$ which we assume independent of $x$ and of the lattice size $n$, and where 
		\begin{align*}
			\operatorname{Cov}_\sigma(A,B):=\frac{1}{2}\,\tr\Big[\sigma\,\big\{A-\tr[\sigma A],B-\tr[\sigma B]\big\}\Big]\,.
		\end{align*} 
	\end{condition}
	
	\Cref{decaycorr} is satisfied by many classes of Gibbs states, including high-temperature Gibbs states~\cite{harrow2020classical,kuwahara_gibbs} and $1$D Gibbs states at any constant temperature~\cite{harrow2020classical,Bluhm2022exponentialdecayof}. It is also known to hold for ground states of gapped Hamiltonians~\cite{Hastings2006}. In fact, the class of Gibbs states for which \Cref{decaycorr} holds is larger than that for which polylog algorithms to learn the parameters of the Hamiltonian are known. In \Cref{sec:learning_gibbs_examples} we will discuss several examples for which it is known how to learn the parameters efficiently. In \Cref{sec:gibbs_learning_exponential_decay} we will also consider the case when we have the additional assumption of uniform Markovianity to show that then it is possible to bypass having to learn the parameters.

	\section{Lipschitz observables}\label{AppendixLip}
	
	In this appendix, we argue that Lipschitz observables and the induced Wasserstein distance capture most observables of physical interest, such as local and quasi-local observables, as well as quasi-local polynomials of the state and entropic quantities of subsystems. They can even capture global properties, including some of physical interest like global entropies. These classes of examples justify the claim that Lipschitz observables and the Wasserstein distance capture well both linear and nonlinear extensive properties of quantum states. 
	
	Let us illustrate our previous claims. An important class of Lipschitz observables are those of the form
	\begin{align}\label{equ:local_average}
		\sum_{i=1}^MO_i, \quad M=\mathcal{O}(n),\quad \|O_i\|=\mathcal{O}(1),\quad \max\limits_{1\leq j\leq n}|\{i:\operatorname{supp}(O_i)\cap\{j\}\not= \emptyset\}|=\mathcal{O}(1).
	\end{align}
	
	Observables like those defined in \Cref{equ:local_average} include local observables w.r.t. to a regular lattice.
	However, it is also not difficult to see that the expectation values of such observables are characterised by the marginals of the states on a few qubits. But Lipschitz observables capture more than strictly local properties. Indeed, as shown in~\cite{rouze2021learning}, the time evolution of local observables like those in~\Cref{equ:local_average} by a shallow quantum circuit or a short continuous-time evolution satisfying a Lieb-Robinson bound are Lipschitz. These include evolutions by Hamiltonians with algebraically decaying interactions, which will map strictly local Hamiltonians to quasi-local observables. In fact, recent results~\cite{randomlip} show that Lipschitz observables can distinguish two random quantum states almost optimally. As such states are locally indistinguishable~\cite[Corollary 15]{Brando2016}, this fact shows that Lipschitz observables capture much more than just quasi-local properties of quantum states.

	Although so far we only discussed how to use the Wasserstein distance to control linear functionals of the state, the fact that the Wasserstein distance behaves well under tensor products means that it is also easy to control the error for non-linear functions. Indeed, in~\cite[Propostion 4]{de2021quantum}, the authors show that the Wasserstein distance is additive under tensor products. i.e. for all states $\rho,\sigma$ and integer $k$ we have
	\begin{align}
		W_1(\rho^{\otimes k},\sigma^{\otimes k})=kW_1(\rho,\sigma).
	\end{align}
	We can then combine this additivity with the standard trick that a polynomial of degree $k$ on a quantum state can be expressed as the expectation value of a certain observable $O$ on $\rho^{\otimes k}$. In particular, if this polynomial is an average over polynomials in reduced density matrices of constant size, it is not difficult to see that the corresponding observable on $\rho^{\otimes k}$ will be Lipschitz as well.
	
	Let us exemplify this in the case of the average purity of a state. For a subset $A\subset[n]$ of the qubits of size $l$, let $\mathbb{F}_A\in\left(\mathbb{C}^{2}\right)^{\otimes 2l}$ be the flip operator acting on two copies of those qubits:
	\begin{align}
		\mathbb{F}_A(\ket{\psi}\otimes \ket{\varphi})=\ket{\varphi}\otimes \ket{\psi}.
	\end{align}
	It can be shown in a few lines that $\tr\left[\mathbb{F}_A\rho^{\otimes 2}\right]=\tr\left[\rho_A^2\right]$. Furthermore, observables of the form
	\begin{align}
		O=\sum_{i=1}^M\mathbb{F}_{A_i},\quad M=\mathcal{O}(n),\quad \max\limits_{1\leq j\leq n}|\{i:A_i\cap\{j\}\not= \emptyset\}|=\mathcal{O}(1).
	\end{align}
	satisfy $\|O\|_{\operatorname{Lip}}=\mathcal{O}(1)$. Then
	\begin{align}
		\sum_{i=1}^M\tr\left[\rho_{A_i}^2-\sigma_{A_i}^2\right]=\tr\left[O(\rho^{\otimes 2}-\sigma^{\otimes2})\right]\leq \|O\|_{\operatorname{Lip}}W_1(\rho^{\otimes 2},\sigma^{\otimes 2})=2\|O\|_{\operatorname{Lip}}W_1(\rho,\sigma).
	\end{align}
	By a direct generalisation of the above, we see that $W_1(\rho,\sigma)=\mathcal{O}(\epsilon n/k)$ is sufficient to ensure that degree $k$ polynomials of the states are approximated to a multiplicative error. As we will see later in \Cref{beyondlin}, this polynomial trick can be used to ensure that averages of various subsystem entropies, mutual informations and conditional mutual informations are well-approximated given a Wasserstein bound.
	
	Once again it should be emphasised that a Wasserstein bound can be used to control global properties, even non-linear ones. A good example of that is the entropy of a quantum state. In~\cite[Theorem 1]{de2021quantum}, the authors show the continuity bound:
	\begin{align}\label{eqentropycontbound}
		|S(\rho)-S(\sigma)|\leq g(W_1(\rho,\sigma))+W_1(\rho,\sigma)\log(4n),
	\end{align}
	where $g(t)=(t+1)\log(t+1)-t\log(t)$. In this case, it turns out that a Wasserstein distance of $W_1(\rho,\sigma)=\mathcal{O}(\epsilon n/\log(n))$ suffices to obtain a multiplicative error for the entropy.
	Finally, it is also worth mentioning observables that are not Lipschitz. Simple examples include linear combinations of high-weight Paulis.

	\section{Gibbs states tomography}\label{sectomogibbs}
	
	In this section, our main goal is to devise an efficient tomography algorithm for Gibbs states $\sigma(\beta,x)$.
	In particular, we wish to learn the parameters $x$ to high precision.
	We prove the following lemma:
	
	\begin{theorem}[Tomography algorithm for decaying Gibbs states ]\label{maintheorem_full}
		Let $H(x)= \sum_i h_i(x_i)$ be a Hamiltonian such that each $h_i(x_i), \ x_i\in [-1,1]^\ell$, is not more than $k$-local, for $k=O(1)$, and all terms commute.  
		For some unknown $x$, let $\sigma(\beta,x)$ be its associated Gibbs state satisfying exponential decay of correlations as per \Cref{decaycorr}.
		Then there exists an algorithm that provides the description of parameters $x'$ such that the state $\sigma(\beta,x')$ satisfies:
		\begin{align}\label{equ:target_W1}
			W_1(\sigma(\beta,x),\sigma(\beta,x'))\leq \epsilon n  
		\end{align}
		with probability greater than $1-\delta,$ such that the algorithm requires access to no more than $N=\mathcal{O}\big(\log(\delta^{-1})\operatorname{polylog}(n)\,\epsilon^{-2}\big)$ samples of the state (see \Cref{seccommuting}). 
		
		The result extends to the case where  $\{h_i(x_i)\}_i$ do not commute whenever one of the following two assumptions is satisfied: 
		\begin{enumerate}[(i)]
			\item  the high-temperature regime, $\beta<\beta_c$ (see \Cref{sechightemperature}).
			\item uniform clustering/Markov conditions (see \Cref{coroapproxmarkov}).
		\end{enumerate}
		In case (ii), we find good approximation guarantees under the following slightly worst scaling in the precision $\epsilon$: $N=\mathcal{O}(\epsilon^{-4}\operatorname{polylog}(n\delta^{-1}))$.
	\end{theorem}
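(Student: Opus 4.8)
The plan is to decouple the problem into two independent pieces: a deterministic continuity estimate bounding $W_1(\sigma(\beta,x),\sigma(\beta,x'))$ in terms of $\|x-x'\|_{\ell_1}$, and a statistical piece recovering the parameters $x$ in $\ell_1$ (equivalently, since $m=\mathcal{O}(n)$, in $\ell_\infty$) from few samples. For the first piece I would invoke the Wasserstein continuity bound \Cref{upperboundW1} (i.e.\ \Cref{cor:continuity_W1}): along the segment $x(t)=(1-t)x+t x'$ one differentiates $t\mapsto\sigma(\beta,x(t))$; in the commuting case the Duhamel derivative formula is manifestly quasi-local, while in the non-commuting case one rewrites $\partial_t e^{-\beta H(x(t))}$ via quantum belief propagation as $e^{-\beta H(x(t))}$ times a quasi-local operator whose exponential tails are controlled by \Cref{decaycorr}; duality of $W_1$ with the Lipschitz norm, together with the fact that a ball of radius $r$ on the lattice contains $\mathcal{O}(r^D)$ sites, then turns this quasi-locality into the $\operatorname{polylog}(n)$ prefactor after summing the geometric tail.

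Granting \Cref{upperboundW1}, it suffices to produce $x'$ with $\|x-x'\|_{\ell_1}\le \epsilon n/\operatorname{polylog}(n)$, hence with $\|x-x'\|_{\ell_\infty}=\mathcal{O}(\epsilon/\operatorname{polylog}(n))$. For commuting $k$-local Hamiltonians at constant inverse temperature I would feed this target accuracy into the estimator of \cite{anshuweb}, which learns $x$ to $\ell_\infty$-error $\epsilon'$ from $e^{\mathcal{O}(\beta k^D)}\,\mathcal{O}(\log(n\delta^{-1})\,\epsilon'^{-2})$ copies and in time $\mathcal{O}(n\,e^{\mathcal{O}(\beta k^D)}\epsilon'^{-2}\operatorname{polylog}(n\delta^{-1}))$; setting $\epsilon'=\Theta(\epsilon/\operatorname{polylog}(n))$ and absorbing $e^{\mathcal{O}(\beta k^D)}=\mathcal{O}(1)$ yields $N=\mathcal{O}(\log(\delta^{-1})\operatorname{polylog}(n)\,\epsilon^{-2})$ and an efficient algorithm, proving the commuting case (\Cref{seccommuting}). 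For the non-commuting high-temperature regime $\beta<\beta_c$ I would instead invoke the computationally efficient estimator of \cite{haah2021optimal}, which attains the same $\ell_\infty$ guarantee from $\mathcal{O}(\epsilon'^{-2}\operatorname{polylog}(n\delta^{-1}))$ samples; the exponential decay of correlations assumed in \Cref{decaycorr} is exactly what makes the covariance/Lieb--Robinson inputs of this subroutine available, and one again obtains the stated count (\Cref{sechightemperature}).

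For case (ii) neither subroutine above applies, so I would follow and extend the maximum-entropy approach of \cite{anshu2021sample}: define a $W_1$ strong-convexity constant for the log-partition function $x\mapsto\log\tr[e^{-\beta H(x)}]$ and show that, under uniform clustering of correlations together with the uniform (approximate) Markov condition, its Hessian — which equals the covariance Gram matrix $\big(\operatorname{Cov}_{\sigma(\beta,x)}(V_i,V_j)\big)_{ij}$ — is bounded below by a quantity scaling linearly in $n$. The mechanism is: clustering forces this Gram matrix to be close to block-diagonal with blocks indexed by bounded-radius neighbourhoods, the Markov condition lets one patch the resulting local spectral lower bounds into a global one, and the constant converts an $\ell_2$ estimate of the local energies $\tr[V_i\,\sigma(\beta,x)]$ (obtained from classical shadows) into an $\ell_1$ — hence, via \Cref{upperboundW1}, a $W_1$ — recovery guarantee. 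The extra factor $\epsilon^{-2}$ relative to the earlier cases is the price of minimising a quadratic loss rather than reading off each coordinate, producing $N=\mathcal{O}(\epsilon^{-4}\operatorname{polylog}(n\delta^{-1}))$ (\Cref{coroapproxmarkov}).

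I expect the main obstacle to be this last step: proving that the $W_1$ strong-convexity constant of the log-partition function is $\Omega(n)$ under clustering and the Markov condition, which requires carefully quantifying how much the off-block entries of the covariance Gram matrix degrade the local spectral lower bounds and then aggregating these estimates over a growing collection of neighbourhoods without incurring a dimension-dependent loss. A secondary technical point is ensuring the continuity bound \Cref{upperboundW1} is uniform along the interpolation segment, i.e.\ that every $\sigma(\beta,x(t))$ still obeys the decay hypotheses feeding the belief-propagation estimate; this is handled by assuming \Cref{decaycorr} holds uniformly over the parameter family.
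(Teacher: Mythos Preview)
Your treatment of the commuting and high-temperature cases is correct and essentially identical to the paper's: reduce $W_1$ recovery to $\ell_1$ (hence $\ell_\infty$) parameter recovery via the continuity bound of \Cref{cor:continuity_W1}, then invoke \cite{anshuweb} and \cite{haah2021optimal} respectively. Your sketch of how \Cref{cor:continuity_W1} is proved (belief propagation plus Lieb--Robinson to localise $\partial_{x_k}\sigma$, then \Cref{decaycorr} to kill the tail) is also right.

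For case (ii), however, your route diverges from the paper's and runs into a real obstacle. You propose to lower-bound the Hessian $\big(\operatorname{Cov}_{\sigma(\beta,x)}(V_i,V_j)\big)_{ij}$, arguing that clustering makes it nearly block-diagonal and the Markov condition patches local spectral gaps into a global one. The paper explicitly tries this line and abandons it: the best known $\ell_2$ strong-convexity constant $\alpha_2$ from \cite{anshu2021sample} is only inverse-polynomial in $n$, and the required $\ell_1$ strong convexity (equivalently, a genuine transportation-cost inequality \Cref{TC}) is noted to be a strengthening that ``appears difficult'' to prove directly. Your block-diagonal heuristic does not obviously overcome this, because clustering controls off-diagonal \emph{entries} of the Gram matrix, not its smallest eigenvalue, and near-degeneracies within a block are not ruled out.

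What the paper actually does is bypass the Hessian entirely. It proves a \emph{generalised} transportation-cost inequality (\Cref{prop:generalized_transportation}) by paving the lattice into overlapping regions, building an approximate preparation channel $\mathcal{F}$ for $\sigma(\beta,x)$ out of the universal recovery maps of \cite{sutter2017multivariate,junge2018universal}, and telescoping $W_1(\rho,\sigma(\beta,x))$ through the action of $\mathcal{F}$. The uniform Markov condition bounds the conditional-mutual-information terms arising from each recovery step; uniform clustering together with \cite{Brandao_Kastoryano_2018} bounds $\|\mathcal{F}(\rho)-\sigma(\beta,x)\|_1$. This yields $W_1(\rho,\sigma(\beta,x))^2\le \mathcal{O}(n\operatorname{polylog}(n))\,D(\rho\|\sigma(\beta,x))+o(\epsilon^2 n^2)$, which combined with the maximum-entropy upper bound \Cref{upperboundZ} gives the result. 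The $\epsilon^{-4}$ arises because the square root in this TC inequality forces the local energies $e_{k,l}$ to be estimated to precision $\mathcal{O}(\epsilon^2/\operatorname{polylog}(n))$, not (as you wrote) from minimising a quadratic loss.
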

	\begin{proof}[Proof Outline]
		The full proof is laid out in sections \ref{Sec:Quantum_Belief_Propagation}, \ref{Sec:Continuity_Estimates} and \ref{sec:learning_gibbs_examples}.
		
		The fundamental part of the result uses the continuity estimate of the Wasserstein distance between two Gibbs states that is of interest on its own. 
		In \Cref{cor:continuity_W1} we will show that under exponential decay of correlations we have:
		\begin{align}\label{upperboundW1_first}
			W_1(\sigma(\beta,x),\sigma(\beta,y))\le  \,\|x-y\|_{\ell_1}\,\operatorname{polylog}(n)\,.
		\end{align}
		The significance of the bound in \Cref{upperboundW1_first} is that it reduces the problem of obtaining a good estimate of $\sigma(\beta,x)$ in $W_1$ to estimating the parameters $x$ in $\ell_1$ distance. 
		This is a variation of the Hamiltonian learning problem~\cite{anshu2021sample,coles_learning,haah2021optimal}, and we can then directly import results from the literature for our tomography algorithm.
	\end{proof}


	As we argued before in \Cref{Lipsec}, the recovery guarantee in \Cref{equ:target_W1} suffices to ensure that $\sigma(\beta,x')$ mirrors all the quasi-local properties of $\sigma(\beta,x)$. 
	Furthermore, the polylog complexity in system size is exponentially better than what is required to obtain a recovery guarantee in trace distance~\cite[Appendix G]{rouze2021learning}, even for product states.

	\subsection{Quantum belief propagation} \label{Sec:Quantum_Belief_Propagation}

	We start by recalling a well-known tool in the analysis of quantum Gibbs states known as quantum belief propagation \cite{hastings2007quantum,kim2017markovian,kato2019quantum}. 
	We assume a parameterisation of the Hamiltonian as $H(x) = \sum_{j=1}^m x_j V_j$ for appropriate operators $V_j$ (we will generalise this to other parameterisations later) and for some observable $L$ we define the function $f_L(\beta,x)=\tr\left[\sigma(\beta,x)L\right]$.
	The belief propagation method then states that we have that for any $k\in[m]$,
	\begin{align*}
		\partial_{x'_k}f_L(\beta,x)=-\frac{\beta}{2}\,\tr\Big[L\big\{\Phi_{H(x)}(\partial_{x'_k} H(x)),\sigma(\beta,x)\big\}\Big]+\beta \tr(\partial_{x'_k}H(x)\sigma(\beta,x))\,\tr(L \sigma(
		\beta,x))\,.
	\end{align*}
	where the quantum belief propagation operator $\Phi_{H(x)}$ is defined as
	\begin{align*}
		\Phi_{H(x)}(V):=\int_{-\infty}^\infty\,dt\,\kappa_\beta(t)\,e^{-iH(x)t}Ve^{iH(x)t}\,, 
	\end{align*}
	for some smooth, fast-decaying probability density function $\kappa_\beta(t):=\frac{1}{2\pi}\int \widetilde{\kappa}_\beta(\omega)e^{i\omega t}d\omega$ of Fourier transform
	\begin{align*}
		\widetilde{\kappa}_\beta(\omega):=\frac{\tanh(\beta\omega/2)}{\beta\omega/2}\,.
	\end{align*}
	The function $\kappa_\beta$ was in fact computed in \cite[Appendix B]{anshu2021sample}: for $t\in\mathbb{R}\backslash \{0\}$:
	\begin{align}\label{fbetabound}
		\kappa_\beta(t):=\frac{2}{\pi\beta}\,\log\frac{e^{\pi |t|/\beta}+1}{e^{\pi|t|/\beta}-1}\le  \frac{4}{\pi\beta}\, \frac{1}{e^{\pi |t|/\beta}-1}
	\end{align}
	Rewriting the above derivative, and using the notations $\langle O\rangle_{\beta,x} \equiv \tr(\sigma(\beta,x)O)$ for the expected value of an observable $O$ in the Gibbs state $\sigma(\beta,x)$, we have that 
	\begin{align} \label{Eq:Partial_f_L}
		\partial_{x'_k}f_L(\beta,x)=-\frac{\beta}{2}\,\langle \big\{L,\,\widetilde{H}_{k}(x)-\langle \widetilde{H}_{k}(x)\rangle_{\beta,x}\big\}  \rangle_{\beta,x}
	\end{align}
	where $\widetilde{H}_{k}(x):=\Phi_{H(x)}(\partial_{x'_k}H(x))$. We define the covariance between two observables $A$ and $B$ in the state $
	\sigma$ as 
	\begin{align*}
		\operatorname{Cov}_\sigma(A,B):=\frac{1}{2}\,\tr\Big[\sigma\,\big\{A-\tr[\sigma A],B-\tr[\sigma B]\big\}\Big]\,.
	\end{align*}
	Therefore 
	\begin{align}\label{derivativefL}
		\partial_{x'_k}f_L(\beta,x)=-\beta\operatorname{Cov}_{\sigma(\beta,x)}\,(L,\widetilde{H}_{k}(x))\,.
	\end{align}
	In what follows, we will need to approximate $\widetilde{H}_{k}(x)$ by observables supported on bounded regions. For this, we make use of Lieb-Robinson bounds for Hamiltonians of finite-range interactions \cite{lieb1972finite,poulin2010lieb,kliesch2014lieb,PhysRevLett.97.050401,hastings2010locality,sidoravicius2009new,cubitt2015stability}. Here we choose a version proven in \cite[Lemma 5.5]{cubitt2015stability}: for any observable $O_A$ supported on a region $A$ of the lattice, and any $B\supset A$, we denote by $\alpha_t$, resp. by $\alpha^B_t$, the unitary evolution generated by $H(x)$, resp. by $H_B(x)$, up to time $t$, i.e.
	\begin{align*}
		\alpha_t(O):=e^{-iH(x)t}Oe^{iH(x)t}\,,\qquad  \alpha^B_t(O):=e^{-iH_B(x)t}Oe^{iH_B(x)t}\,.
	\end{align*}
	which then satisfy
	\begin{align}\label{LRbound}
		\|\alpha_t(O_A)-\alpha_t^B(O_A)\|_\infty \le c\, |A|\,\|O_A\|_\infty \,e^{vt-\mu\,\operatorname{dist}(A,B^c)}\,,
	\end{align}
	for some parameters $c,v,\mu>0$ which depend on the interactions $h_j$ but can be chosen independent of $n$ and $x$.

	\begin{lemma}\label{LemmaapproxPhi}
		For any region $A\subset B\subset \Lambda$ and operator $O_A$ supported in $A$ and all $x$,
		\begin{align*}
			\|\Phi_{H(x)}(O_A)-\Phi_{H_B(x)}(O_A)\|_\infty \le c'\,|A| \,\|O_A\|_\infty\,   e^{-\mu' \operatorname{dist}(A,B^c)}
		\end{align*}
		for some parameters $c'$ and $\mu'$ depending on $H(x)$ and $\beta$ but independent of $n$.
	\end{lemma}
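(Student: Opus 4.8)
The plan is to reduce the claim about the belief-propagation operator $\Phi$ to the Lieb–Robinson bound \eqref{LRbound} for the Heisenberg dynamics, using the integral representation $\Phi_{H(x)}(O_A)=\int_{-\infty}^\infty \kappa_\beta(t)\,\alpha_t(O_A)\,dt$ and the analogous formula $\Phi_{H_B(x)}(O_A)=\int_{-\infty}^\infty \kappa_\beta(t)\,\alpha_t^B(O_A)\,dt$ for the truncated Hamiltonian. Subtracting, the difference is $\int_{-\infty}^\infty \kappa_\beta(t)\big(\alpha_t(O_A)-\alpha_t^B(O_A)\big)\,dt$, so by the triangle inequality
\begin{align*}
	\|\Phi_{H(x)}(O_A)-\Phi_{H_B(x)}(O_A)\|_\infty \le \int_{-\infty}^\infty \kappa_\beta(t)\,\|\alpha_t(O_A)-\alpha_t^B(O_A)\|_\infty\,dt\,.
\end{align*}
Now I would insert \eqref{LRbound}, which gives the integrand bound $c\,|A|\,\|O_A\|_\infty\, \kappa_\beta(t)\, e^{v|t|-\mu\,\operatorname{dist}(A,B^c)}$ (using $|t|$ in the exponent, valid since the commutator norm is symmetric in $t$).

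The heart of the matter is then to show that $\int_{-\infty}^\infty \kappa_\beta(t)\,e^{v|t|}\,dt < \infty$ and controlled by a constant depending only on $\beta$ and the Lieb–Robinson velocity $v$. This uses the explicit asymptotics of $\kappa_\beta$ from \eqref{fbetabound}: since $\kappa_\beta(t)\le \tfrac{4}{\pi\beta}\,(e^{\pi|t|/\beta}-1)^{-1}$, for large $|t|$ we have $\kappa_\beta(t)=\mathcal{O}(e^{-\pi|t|/\beta})$, so the integral converges provided $v<\pi/\beta$. If $v\ge\pi/\beta$ one should instead exploit the freedom in choosing the Lieb–Robinson data: the bound \eqref{LRbound} holds with $e^{vt-\mu\,\operatorname{dist}(A,B^c)}$ replaced by $e^{v(t-\operatorname{dist}(A,B^c)/v')}$ for any velocity $v'$ below the true Lieb–Robinson velocity by reparametrising, and more simply one splits the time integral at a cutoff $T$ proportional to $\operatorname{dist}(A,B^c)$: for $|t|\le T$ use $\|\alpha_t(O_A)-\alpha_t^B(O_A)\|_\infty\le c|A|\|O_A\|_\infty e^{vT-\mu\,\operatorname{dist}(A,B^c)}$ and integrate $\kappa_\beta$ (whose total mass is $1$), and for $|t|> T$ use the trivial bound $\|\alpha_t(O_A)-\alpha_t^B(O_A)\|_\infty\le 2\|O_A\|_\infty$ together with the exponential tail $\int_{|t|>T}\kappa_\beta(t)\,dt = \mathcal{O}(e^{-\pi T/\beta})$. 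Choosing $T=\tfrac{\mu}{v+\pi/\beta}\operatorname{dist}(A,B^c)$ balances the two contributions and yields a bound of the form $c'\,|A|\,\|O_A\|_\infty\,e^{-\mu'\operatorname{dist}(A,B^c)}$ with $\mu'=\tfrac{\mu\pi}{\beta v+\pi}$ and $c'$ depending only on $c$, $\beta$, $v$ — in particular independent of $n$ (and of $x$, since $c,v,\mu$ can be chosen uniformly in $x$).

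The main obstacle, and the only genuinely delicate point, is the second regime just described: one cannot naively plug the Lieb–Robinson bound into the $\Phi$-integral because the factor $e^{vt}$ overwhelms the subexponential decay of $\kappa_\beta$ at the velocity-independent rate, so the split-and-optimise argument over the cutoff $T$ is essential. Everything else — writing $\Phi$ as a Bochner integral against $\kappa_\beta$, normalising $\int\kappa_\beta = \widetilde\kappa_\beta(0) = 1$, bounding the tail of $\kappa_\beta$ from \eqref{fbetabound} — is routine. I would also remark that the representation $\Phi_{H_B(x)}(O_A)=\int \kappa_\beta(t)\,\alpha_t^B(O_A)\,dt$ uses only that $O_A$ is supported in $B$ and $H_B(x)$ generates $\alpha_t^B$, so no compatibility issue arises between $\Phi_{H(x)}$ and $\Phi_{H_B(x)}$ beyond the common kernel $\kappa_\beta$, which is the same because it depends only on $\beta$.
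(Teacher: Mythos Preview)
Your proposal is correct and follows essentially the same approach as the paper: both split the time integral at a cutoff proportional to $\operatorname{dist}(A,B^c)$, apply the Lieb--Robinson bound \eqref{LRbound} on the inner region and the trivial bound $\|\alpha_t(O_A)-\alpha_t^B(O_A)\|_\infty\le 2\|O_A\|_\infty$ together with the exponential tail of $\kappa_\beta$ from \eqref{fbetabound} on the outer region, then optimise the cutoff. Your treatment of the inner region is in fact slightly cleaner than the paper's, since you bound $e^{v|t|}\le e^{vT}$ and use $\int\kappa_\beta=\widetilde\kappa_\beta(0)=1$ directly, whereas the paper carries out a more explicit (and marginally lossier) estimate of $\int_{-\delta}^\delta|\kappa_\beta(t)|e^{vt}\,dt$.
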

	\begin{proof}
		We make use of the exponential decay of $\kappa_\beta$ provided in \Cref{fbetabound} together with the Lieb-Robinson bound \Cref{LRbound}:  
		\begin{align*}
			\|\Phi_{H(x)}(O_A)-\Phi_{H_B(x)}(O_A)\|_\infty &\le \int_{-\infty}^\infty |\kappa_\beta(t)|\,\|\alpha_t(O_A)-\alpha_t^B(O_A)\|_\infty\,dt\\
			&\le c\, |A|\,\|O_A\|_\infty  e^{-\mu \operatorname{dist}(A,B^c)} \int_{-\delta}^{\delta}\,|\kappa_\beta(t)| \,\,e^{vt}\,dt\\
			&\quad +2 \,\|O_A\|_\infty\, \int_{[-\delta,\delta]^c}\,|\kappa_\beta(t)|\,dt\,.
		\end{align*}
		For the first integral above, we use that $|\kappa_\beta(t)|\propto \log(1/t)$ for $t$ small. More precisely,
		\begin{align*}
			\int_{-\delta}^\delta |\kappa_\beta(t)|\,e^{vt}\,dt&\le \frac{4e^{v\delta}}{\pi\beta}\,\int_{0}^\delta\, \log\left(\frac{e^{\pi t/\beta}+1}{t\pi/\beta}\right)\,dt\le \frac{4e^{(v+\pi/\beta)\delta}}{\pi^2}
		\end{align*}
		For the other integral, we use the exponential decay of $\kappa_\beta$:
		\begin{align*}
			\int_{[-\delta,\delta]^c}\,|\kappa_\beta(t)|\,dt\le\frac{8}{\pi\beta} \int_{\delta}^\infty\,\frac{1}{e^{\pi t/\beta}-1}\,dt\le \frac{8}{\pi\beta}\int_\delta^\infty e^{-\frac{\pi t}{2\beta}}dt=\frac{16}{\pi^2}\,e^{-\frac{\pi\delta}{2\beta}}\,,
		\end{align*}
		where the second inequality holds for $\delta\ge \frac{2\beta}{\pi}\operatorname{sh}^{-1}\big(\frac{1}{2}\big)\equiv \delta_1$.
		Choosing $\delta:=\delta_1+\mu \operatorname{dist}(A,B^c)/(2\big(v+\pi/\beta\big))$, we get
		\begin{align*}
			\|\Phi_{H(x)}(O_A)-\Phi_{H_B(x)}(O_A)\|_\infty \le c'\,|A|\|O_A\|_\infty\,e^{-\mu'\operatorname{dist}(A,B^c)}\,,
		\end{align*}
		for some constant $c'\equiv c'(\beta,v)$, where $\mu'=\mu\min\big\{\frac{1}{2},\,\frac{\pi}{4\beta(v+\pi/\beta)} \big\}$.
	\end{proof}

	\subsection{Continuity estimate for $W_1$ distance on Gibbs states}
	\label{Sec:Continuity_Estimates}
	In this subsection, we will prove \Cref{upperboundW1_first}.
	First, we use the bound derived in \Cref{LemmaapproxPhi} together with the assumption that $\sigma(\beta,x)$ has exponential decay of correlations in order to control the derivatives $\partial_{x'_k}f_L$:
	\begin{proposition}\label{prop:bounds}
		Assume that $\sigma(\beta,x)$ satisfies the condition of decay of correlations, \Cref{decaycorr}. 
		Then for any $k\in[m]$,
		\begin{align}\label{eqpartiali}
			|  \partial_{x'_k}f_L(\beta,x)|\le \,\|L\|_{\operatorname{Lip}}\,\operatorname{polylog}(n)\,,
		\end{align}
		for some polynomial of $\log(n)$ of degree $D$ with coefficients depending on $\beta,r_0,D,h,c',\nu,\mu'$ and $C$.
	\end{proposition}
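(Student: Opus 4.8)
The plan is to start from the belief-propagation identity \eqref{derivativefL}, which gives $\partial_{x'_k}f_L(\beta,x) = -\beta\,\operatorname{Cov}_{\sigma(\beta,x)}(L,\widetilde H_k(x))$ with $\widetilde H_k(x)=\Phi_{H(x)}(\partial_{x'_k}H(x))$. Since $\partial_{x'_k}H(x)=V_k$ is supported on the ball $A_k$ of radius $r_0$, the operator $\widetilde H_k(x)$ is quasi-local around $A_k$ with exponential tails controlled by \Cref{LemmaapproxPhi}. The idea is to decompose $\widetilde H_k(x)$ into a sum of increments supported on annuli of increasing radius around $A_k$ and then bound each increment's contribution to the covariance using exponential decay of correlations together with the Lipschitz norm of $L$.

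Concretely, first I would write $B_\ell := \{i\in\Lambda:\operatorname{dist}(i,A_k)\le \ell\}$ and set $\widetilde H_k^{(\ell)}(x):=\Phi_{H_{B_\ell}(x)}(V_k)$, which is supported on $B_\ell$, with $\widetilde H_k^{(0)}:= \langle\widetilde H_k^{(0)}\rangle$ say, or more simply telescope $\widetilde H_k(x)=\sum_{\ell\ge r_0}\big(\widetilde H_k^{(\ell)}(x)-\widetilde H_k^{(\ell-1)}(x)\big)$ with the convergence guaranteed by \Cref{LemmaapproxPhi}. Each increment $\Delta_\ell := \widetilde H_k^{(\ell)}(x)-\widetilde H_k^{(\ell-1)}(x)$ is supported in $B_\ell$ and satisfies $\|\Delta_\ell\|_\infty \le 2c'\,|A_k|\,\|V_k\|_\infty\,e^{-\mu'(\ell-r_0-1)}=\mathcal{O}(e^{-\mu'\ell})$ by the triangle inequality. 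Then by bilinearity of the covariance, $\partial_{x'_k}f_L = -\beta\sum_{\ell\ge r_0}\operatorname{Cov}_{\sigma(\beta,x)}(L,\Delta_\ell)$. To bound $\operatorname{Cov}_{\sigma(\beta,x)}(L,\Delta_\ell)$ I would use the Lipschitz structure of $L$: for the piece of $L$ far from $B_\ell$ use \Cref{decaycorr}, while for the piece near $B_\ell$ use the crude bound $|\operatorname{Cov}_\sigma(L,\Delta_\ell)|\le \|L_{\text{near}}\|_\infty\|\Delta_\ell\|_\infty$ where the near-piece of $L$ has norm $\mathcal{O}(|B_\ell|\,\|L\|_{\operatorname{Lip}})=\mathcal{O}(\ell^D\|L\|_{\operatorname{Lip}})$ — using that tracing out sites one at a time and applying the definition of $\|L\|_{\operatorname{Lip}}$ bounds the norm of the "active part" of $L$ on any region of size $s$ by $\mathcal{O}(s\|L\|_{\operatorname{Lip}})$. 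Balancing the two estimates — exponential decay $e^{-\nu\operatorname{dist}}$ from correlations against the polynomial volume growth $\ell^D$ — at a cutoff distance $\sim \log n$ (so that the residual is $\operatorname{poly}(1/n)$, which is where the $n$-dependence enters) yields $|\partial_{x'_k}f_L|\le \|L\|_{\operatorname{Lip}}\operatorname{polylog}(n)$ with a polynomial of degree $D$ in $\log n$, as claimed. The geometric factor $e^{-\mu'\ell}$ from the belief-propagation tails and the $e^{-\nu\ell}$ from correlation decay both help, and the $\sum_\ell \ell^D e^{-c\ell}$ sums converge to a constant, so in fact the leading $\operatorname{polylog}(n)$ really comes from the truncation radius needed to make the tail of $L$ (which a priori can be spread over all $n$ sites) negligible.

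\textbf{Main obstacle.} The delicate point is handling the covariance $\operatorname{Cov}_{\sigma(\beta,x)}(L,\Delta_\ell)$ when $L$ is only Lipschitz rather than local: one must split $L=L_{B_\ell(\text{enl})}\otimes I + (\text{remainder})$ in a way that simultaneously controls the operator norm of the localised part (linear in the volume times $\|L\|_{\operatorname{Lip}}$) and ensures the remainder is a genuine observable supported away from $B_\ell$ so that \Cref{decaycorr} applies with the correct distance. Iterating the Lipschitz bound site-by-site to get the $\mathcal{O}(|B_\ell|\|L\|_{\operatorname{Lip}})$ norm control, and making sure the constants stay independent of $n$, is the technical heart of the argument; everything else is summation of convergent series.
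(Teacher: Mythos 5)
Your proposal is correct and follows essentially the same route as the paper's proof: the belief-propagation identity \eqref{derivativefL}, localisation of $\widetilde H_k(x)$ via \Cref{LemmaapproxPhi}, a near/far splitting of $L$ using $\|L-\tau_{B(r)}(L)\|_\infty\le |B(r)|\,\|L\|_{\operatorname{Lip}}$ for the near piece and \Cref{decaycorr} (with $\|L\|_\infty\le n\|L\|_{\operatorname{Lip}}$) for the far piece, and cutoffs at radius $\Theta(\log n)$ to produce the degree-$D$ polylog. The only difference is that you telescope $\widetilde H_k$ over annuli while the paper makes a single truncation at radius $\lfloor\log n/\mu'\rfloor$; this is immaterial to the argument.
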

	
	\begin{proof}
		Denoting by $j_k$ the index of the interaction $h_{j_k}$ which depends on variable $x_k'$, we have that, given $\Phi_{H(x)}(\partial_{x'_k}h_j)=\delta_{j,j_k}\Phi_{H(x)}(\partial_{x'_k}h_{j_k})$, and denoting $\widetilde{h}_{k}=\Phi_{H(x)}(\partial_{x'_k}h_{j_k})$, from \Cref{derivativefL} we have:
		\begin{align*}
			|  \partial_{x'_k}f_L(\beta,x)| ={\beta}\,\operatorname{Cov}_{\sigma(\beta,x)}(L,\widetilde{H}_k(x))=\beta\,\operatorname{Cov}_{\sigma(\beta,x)}(L,\,\widetilde{h}_{k})\,.
		\end{align*}
		Next, given a region $B_k\supset A_{j_k}$, define the observable 
		\begin{align}\label{obsOBl}
			O_{B_k}:=\Phi_{H_{B_k}(x)}(\partial_{x'_k}h_{j_k})-\langle\Phi_{H_{B_k}(x)}(\partial_{x'_k}h_{j_k})\rangle_{\beta, x}\, .
		\end{align}
		Then by \Cref{LemmaapproxPhi} we have that
		\begin{align*}
			\operatorname{Cov}_{\sigma(\beta,x)}(L,\,\widetilde{h}_{k}(x)) &=
			\operatorname{Cov}_{\sigma(\beta,x)}(L,\,\widetilde{h}_{k}(x)-O_{B_k})+\operatorname{Cov}_{\sigma(\beta,x)}(L,\,O_{B_k})\\ 
			&\le 2 \|L\|_{\infty}\, \|\Phi_{H(x)}(\partial_{x'_k}h_{j_k})-\Phi_{H_{B_k}(x)}(\partial_{x'_k}h_{j_k})\|_\infty+ \operatorname{Cov}_{\sigma(\beta,x)}(L,\,O_{B_k})\\
			&\le  2n c'(2r_0)^D\,h\,\|L\|_{\operatorname{Lip}}\,e^{-\mu'\operatorname{dist}(A_{j_k},B_k^c)}+\operatorname{Cov}_{\sigma(\beta,x)}(L,O_{B_k})\,.
		\end{align*}
		Next, we estimate the last covariance above. Denoting $B_k(r):=\{i\in\Lambda:\,\operatorname{dist}(i,B_k)\le r\}$, we get
		\begin{align*}
			\operatorname{Cov}_{\sigma(\beta,x)}(L,O_{B_k})&=\operatorname{Cov}_{\sigma(\beta,x)}(L-\tau_{B_k(r)}(L),O_{B_k})+\operatorname{Cov}_{\sigma(\beta,x)}(\tau_{B_k(r)}(L),O_{B_k})\\
			&\le 2h \|L-\tau_{B_k(r)}(L)\|_\infty+ 2C|B_k|\,h \|L\|_\infty\,e^{-\nu r} \\
			&\le 2h |B_k(r)|\,\|L\|_{\operatorname{Lip}}+ 2C|B_k|\,h\,n \|L\|_{\operatorname{Lip}}\,e^{-\nu r}\,,
		\end{align*}
		where the second line above follows from the condition of decay of correlations \Cref{decaycorr}. 
		Choosing $B_k=A_{j_k}(\lfloor \log({n})/\mu'\rfloor)$, so that $\operatorname{dist}(A_{j_k},B_k^c)=\lfloor \log({n})/\mu'\rfloor$, and $r=\lfloor \log(n)/\nu\rfloor$, we have shown that, given $1/\nu':=1/\mu'+1/\nu$,
		\begin{align*}
			|\partial_{x'_k}f_L(\beta,x)|\le 2\beta\,h\,\|L\|_{\operatorname{Lip}}  \Big( c'(2r_0)^D\,h\,+(2(r_0+\log(n)/\nu'))^D(1+C)\Big)
		\end{align*}
		The result follows.
	\end{proof}

	With the bound of \Cref{prop:bounds}, we show that for Gibbs states belonging to a phase with exponentially decaying correlations, the difference of expected values of Lipschitz observables in two such states is controlled by the $\ell_1$-norm of their associated parameters. 
	
	
	\begin{corollary}\label{cor:continuity_W1}
		With the conditions of \Cref{prop:bounds}, for any $x,y\in[-1,1]^m$,
		\begin{align}\label{upperboundW11}
			W_1(\sigma(\beta,x),\sigma(\beta,y))\le  \,\|x-y\|_{\ell_1}\,\operatorname{polylog}(n)\,.
		\end{align}
		Furthermore, this inequality is tight up to a $\operatorname{polylog}(n)$ factor for $\beta=\Theta(1)$.
	\end{corollary}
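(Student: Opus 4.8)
The plan is to obtain the upper bound directly from the derivative estimate of \Cref{prop:bounds} by integrating along a straight-line path in parameter space, and to certify tightness via an explicit non-interacting model.

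\emph{Upper bound.} By the dual definition of the Wasserstein distance (\Cref{Def:Wasserstein_Distance}) it suffices to bound $|f_L(\beta,x)-f_L(\beta,y)|$ by $\operatorname{polylog}(n)\,\|x-y\|_{\ell_1}$ for every observable $L$ with $\|L\|_{\operatorname{Lip}}\le 1$, where $f_L(\beta,x)=\tr[L\sigma(\beta,x)]$. I would interpolate along $\gamma(s):=(1-s)x+sy$, $s\in[0,1]$, which remains in $[-1,1]^m$ by convexity, so that $\sigma(\beta,\gamma(s))$ satisfies \Cref{decaycorr} with the same constants $C,\nu$ for all $s$; consequently \Cref{prop:bounds} applies at every point of the path. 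Since $f_L(\beta,\cdot)$ is (real-)analytic in $x$, the fundamental theorem of calculus gives $f_L(\beta,y)-f_L(\beta,x)=\int_0^1\sum_k (y_k-x_k)\,\partial_{x_k'}f_L(\beta,\gamma(s))\,ds$; bounding each $|\partial_{x_k'}f_L(\beta,\gamma(s))|\le\|L\|_{\operatorname{Lip}}\operatorname{polylog}(n)$ uniformly in $s$ and pairing $\ell_1$ with $\ell_\infty$ over the index $k$ then yields \eqref{upperboundW11} after taking the supremum over $L$.

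\emph{Tightness.} I would take the non-interacting family $H(x)=\sum_j x_j Z_j$, whose Gibbs state factorises as $\bigotimes_j \tfrac12(I-\tanh(\beta x_j)Z_j)$ and trivially satisfies \Cref{decaycorr}. Testing against $L=\tfrac12\sum_j\operatorname{sgn}(y_j-x_j)Z_j$ --- which is $1$-Lipschitz, since subtracting its restriction to the complement of any site leaves a single-site term of operator norm $1/2$ --- one computes $\tr[L(\sigma(\beta,x)-\sigma(\beta,y))]=\tfrac12\sum_j|\tanh(\beta x_j)-\tanh(\beta y_j)|$, which the mean value theorem together with $|x_j|,|y_j|\le 1$ bounds below by $\tfrac{\beta\operatorname{sech}^2(\beta)}{2}\,\|x-y\|_{\ell_1}$. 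For $\beta=\Theta(1)$ this prefactor is $\Theta(1)$, so $W_1(\sigma(\beta,x),\sigma(\beta,y))$ matches the upper bound up to the $\operatorname{polylog}(n)$ factor.

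The upper bound is essentially bookkeeping once \Cref{prop:bounds} is available; the only point requiring care is that the decay-of-correlations hypothesis holds with uniform constants along the entire interpolating path, which is exactly where convexity of $[-1,1]^m$ enters. For the lower bound the one genuine choice is a test observable that is simultaneously $1$-Lipschitz and sensitive to each coordinate of $x-y$ at once, and the single-site ansatz above is the natural such candidate; the remaining estimates are elementary calculus.
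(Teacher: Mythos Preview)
Your proposal is correct and follows essentially the same route as the paper's proof: the upper bound via straight-line interpolation and the derivative estimate of \Cref{prop:bounds}, and tightness via the non-interacting model $H(x)=\sum_j x_j Z_j$ combined with a sign-adjusted $Z$-observable and the mean value theorem. Your treatment is in fact slightly cleaner in two places --- you make the sign choice in the test observable explicit (the paper handles it by a WLOG argument), and you correctly record the derivative as $\beta\operatorname{sech}^2(\beta y)$ rather than $\operatorname{sech}$ --- but the substance is identical.
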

	\begin{proof}
		To get the upper bound \Cref{upperboundW11}, it suffices to interpolate between the two states as follows: for any Lipschitz observable $L$, and a path $x(s)=(1-s)x+sy$, 
		\begin{align*}
			|    \tr\left[L(\sigma(\beta,x)-\sigma(\beta,y))\right] |\le\,\sum_{k=1}^m |x_k'-y_k'|\,\int_0^1\,|\partial_{k}f_L(\beta,x)|\, ds\,.
		\end{align*}
		The result follows from using \Cref{eqpartiali} above, and using the resulting inequality in the definition of Wasserstein distance, \cref{Def:Wasserstein_Distance}. 
		
		To see that the inequality is tight up to the $\operatorname{polylog}(n)$ factor, consider the family of Hamiltonians $H(x)=\sum_ix_iZ_i$, which gives rise to diagonal, product Gibbs states that clearly satisfy exponential decay of correlations. We then have:
		\begin{align}
			W_1(\sigma(\beta,x),\sigma(\beta,y))\geq \frac{1}{2}\tr\left[\sum_iZ_i(\sigma(\beta,x)-\sigma(\beta,y))\right],
		\end{align}
		as $\sum_iZ_i$ has Lipschitz constant $2$. A simple computation shows that:
		\begin{align}
			\frac{1}{2}\tr\left[\sum_iZ_i(\sigma(\beta,x)-\sigma(\beta,y))\right]=\frac{1}{2}\sum_i\left(\frac{e^{-\beta x_i}}{e^{-\beta x_i}+e^{+\beta x_i}}-\frac{e^{-\beta y_i}}{e^{-\beta y_i}+e^{+\beta y_i}}\right).
		\end{align}
		We will assume without loss of generality that $x_i<y_i$ (as otherwise we can consider the observable with $-Z_i$ instead). Under this condition, the summands are all positive and thus:
		\begin{align}
			\frac{1}{2}\tr\left[\sum_iZ_i\sigma(\beta,x)-\sigma(\beta,y))\right]=\frac{1}{2}\sum_i\left|\frac{e^{-\beta x_i}}{e^{-\beta x_i}+e^{+\beta x_i}}-\frac{e^{-\beta y_i}}{e^{-\beta y_i}+e^{+\beta y_i}}\right|.
		\end{align}
		Yet another simple computation shows that the derivative of the function $y\mapsto \frac{e^{-\beta y}}{e^{-\beta y}+e^{+\beta y}}$ is given by 
		\begin{align}\label{equ:derivative_func}
			-\frac{\beta}{2}\operatorname{sech}(\beta y).
		\end{align}
		Let $c_\beta$ denote the minimum of the function in \Cref{equ:derivative_func} for a fixed $\beta=\Theta(1)$ over $y\in[-1,1]$. Then, by the mean value theorem:
		\begin{align}
			\frac{1}{2}\sum_i\left|\frac{e^{-\beta x_i}}{e^{-\beta x_i}+e^{+\beta x_i}}-\frac{e^{-\beta y_i}}{e^{-\beta y_i}+e^{+\beta y_i}}\right|\geq \frac{c_\beta}{2}\sum_i\left|x_i-y_i\right|,
		\end{align}
		from which we conclude that:
		\begin{align}
			W_1(\sigma(\beta,x),\sigma(\beta,y))\geq\frac{c_\beta}{2}\|x-y\|_{\ell_1}.
		\end{align}
		
	\end{proof}

	We next prove that when given a local observable $O$ supported on a ball $S\subset \Lambda$ of diameter at most $k_0$ around site $i$ of the lattice, to study its behaviour as $H(x)$ varies for Gibbs states, it is sufficient to only consider the components of $x$ which parameterise local terms which are geometrically close to the observable $O$ (up to some small error).  
	
	Before we prove this, we remember that we denote by $x|_{\mathcal{S}(r)}$ the concatenation of vectors $x_j$ corresponding to interactions $h_j$ supported on regions intersecting $S(r):=\{i\in\Lambda|\,\operatorname{dist}(i,S)\le r\}$.

	\begin{lemma}[Gibbs local indistinguishability] \label{Lemma:Local_Perturbation}
		Assuming the exponential decay of correlations in \Cref{decaycorr}, then for any observable $O$ supported on region $S$, any $r\in\mathbb{N}$,  denoting $f_{O}(x):=\tr[O\,\sigma(\beta,x)]$ and identify $x|_{\mathcal{S}(r)}$ with the vector $(x|_{\mathcal{S}(r)},0_{\mathcal{S}(r)^c})\in [-1,1]^{m}$, then the following bound holds:
		\begin{align*}
			\sup_{x\in[-1,1]^m} |f_{O}(x)-f_{O}(x|_{\mathcal{S}(r)}) | \le C_1\,e^{-\frac{r}{2\xi}}\,\|O\|_\infty\,,
		\end{align*}
		for $\mathcal{O}(1)$ constants $C_1,\xi>0$ independent of $n$. In other words:
		\begin{align}\label{eq:tracelocaldistance}
			\sup_{x\in[-1,1]^m}  \|\tr_{S^c}(\sigma(\beta,x)-\sigma(\beta,x|_{\mathcal{S}(r)}))\|_1\le C_1\,e^{-\frac{r}{2\xi}}\,.
		\end{align}
		
	\end{lemma}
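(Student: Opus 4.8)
The plan is to bound $f_O(x) - f_O(x|_{\mathcal{S}(r)})$ by interpolating along a path in parameter space that switches off, one at a time, the components of $x$ outside $\mathcal{S}(r)$, and to control each step using the derivative bound $\partial_{x'_k}f_O$ established via belief propagation. The key observation is that although \Cref{prop:bounds} gives $|\partial_{x'_k}f_O(\beta,x)| \le \|O\|_{\operatorname{Lip}}\operatorname{polylog}(n)$ uniformly, this crude bound is far too weak here: summing it over the $\Theta(n)$ coordinates outside $\mathcal{S}(r)$ would give something growing with $n$. Instead I would revisit the proof of \Cref{prop:bounds} and keep the distance-dependence explicit. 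Recall from \eqref{derivativefL} that
\begin{align*}
\partial_{x'_k}f_O(\beta,x) = -\beta\,\operatorname{Cov}_{\sigma(\beta,x)}(O,\widetilde{h}_k(x))\,,
\end{align*}
with $\widetilde{h}_k(x) = \Phi_{H(x)}(\partial_{x'_k}h_{j_k})$ essentially supported near site $j_k$. When $j_k$ lies outside $S(r)$, i.e.\ at distance $\ge r$ from $\operatorname{supp}(O)$, the same two-step argument as in \Cref{prop:bounds} — first replace $\widetilde{h}_k$ by its truncation $\Phi_{H_{B_k}(x)}(\cdot)$ onto a ball $B_k$ of radius $r/3$ around $j_k$ (using \Cref{LemmaapproxPhi}), then replace $O$ by $\tau_{S(r/3)}(O)$ and apply \Cref{decaycorr} to the now well-separated observables — yields a bound of the form
\begin{align*}
|\partial_{x'_k}f_O(\beta,x)| \le C_2\,\|O\|_\infty\,\operatorname{poly}(r)\,e^{-r/(3\xi)}
\end{align*}
for suitable constants, where the $\operatorname{poly}(r)$ absorbs volume factors $|B_k|$, $|S(r/3)|$ and the $\operatorname{dist}$-dependence is now the genuine $r$ rather than $\log n$.

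With such a coordinate-wise bound in hand, the interpolation is routine: write $x^{(0)} = x$, $x^{(1)} = x|_{\mathcal{S}(r)}$, connect them by the straight-line path $x(s) = (1-s)x^{(0)} + s x^{(1)}$ (which only moves coordinates in $\mathcal{S}(r)^c$), and integrate,
\begin{align*}
|f_O(x) - f_O(x|_{\mathcal{S}(r)})| \le \sum_{k:\,j_k\notin S(r)} |x'_k|\int_0^1 |\partial_{x'_k}f_O(\beta,x(s))|\,ds \le \sum_{k:\,j_k\notin S(r)} 2\,C_2\,\|O\|_\infty\,\operatorname{poly}(r)\,e^{-r/(3\xi)}\,.
\end{align*}
Here one must be slightly more careful: to get the clean statement with $C_1 e^{-r/(2\xi)}$, the sum over $k$ with $j_k$ at distance exactly $d \ge r$ has at most $\mathcal{O}(d^{D-1})$ terms on a $D$-dimensional lattice (times the constant $\ell$), so actually I would group the coordinates into shells at distance $d = r, r+1, \dots$ and bound $\sum_{d\ge r} \mathcal{O}(d^{D-1})\,e^{-d/(3\xi)}$, which is a geometric-type tail that is $\le C_1 e^{-r/(2\xi)}$ after adjusting the constant in the exponent to swallow the polynomial prefactor. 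The second displayed inequality \eqref{eq:tracelocaldistance} is then immediate: $\|\tr_{S^c}(\sigma(\beta,x) - \sigma(\beta,x|_{\mathcal{S}(r)}))\|_1 = \sup_{\|O\|_\infty \le 1,\,\operatorname{supp}(O)\subseteq S} |f_O(x) - f_O(x|_{\mathcal{S}(r)})|$ by the dual characterisation of trace distance, and each such $O$ is covered by the bound just proved.

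The main obstacle is getting the distance-dependence out of \Cref{prop:bounds} cleanly rather than reproving everything from scratch: one needs to rerun the truncation-of-$\Phi$ plus decay-of-correlations estimate with the truncation radius tied to $r$ (not to $\log n$) and to check that the resulting volume factors $|B_k|$, $|S(r/3)|$ — which are now $\operatorname{poly}(r)$, independent of $n$ — really do get dominated by the exponential $e^{-\Omega(r)}$ after the shell-summation, so that the final constant $C_1$ and rate $1/(2\xi)$ are genuinely $n$-independent. A minor additional point is that along the path $x(s)$ the decay-of-correlations constants $C,\nu$ in \Cref{decaycorr} are assumed uniform in $x$, which is exactly what we need, but one should state that this uniformity is being used. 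Everything else — the triangle/interpolation step, the shell counting, the passage to trace distance — is bookkeeping.
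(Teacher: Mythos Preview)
Your proposal is correct and follows essentially the same approach as the paper: interpolate along the straight-line path switching off coordinates outside $\mathcal{S}(r)$, express each $\partial_{x'_k}f_O$ as a covariance via belief propagation, control that covariance by truncating $\Phi_{H(x)}$ to a bounded ball (\Cref{LemmaapproxPhi}) and then invoking exponential decay of correlations, and finally sum over shells at distance $d\ge r$ to get the exponential tail. One small sharpening: rather than fixing the truncation radius at $r/3$ uniformly (which only gives $e^{-r/(3\xi)}$ per coordinate), the paper places $B_l$ roughly halfway between $S$ and $A_{j_l}$, yielding the genuinely distance-dependent bound $e^{-\operatorname{dist}(i,j_l)/\xi}$ for each coordinate --- this is exactly the refinement you flag as ``be slightly more careful'' and is needed to make the shell sum converge cleanly.
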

	
	\begin{proof}
		We identify $x|_{\mathcal{S}(r)}$ with the vector $(x|_{\mathcal{S}(r)},0_{\mathcal{S}(r)^c})\in [-1,1]^{m}$. Given the path $x(s)=(1-s)x+sx|_{\mathcal{S}(r)}$ with components $\{x_l'(s)\}_{l=1}^m$, we get
		
		\begin{align}
			\label{eqlocalization}   | f_{O}(x)-f_{O}(x|_{\mathcal{S}(r)}) |&\le \sum_{l\in \mathcal{S}(r)^c}|x_l'(0)|\,\int_0^1\,\big|\partial_{l} \tr\big[ O\sigma(\beta,x(s))\big]\big|\,ds\\
			&=\beta  \sum_{l\in \mathcal{S}(r)^c}|x'_l(0)|\,\int_0^1\,\big|\operatorname{Cov}_{\sigma(\beta,x(s))}\big(O,\widetilde{H}_l(x(s))\big)|\,ds\,,\nonumber
		\end{align}
		
		for $\widetilde{H}_{l}(x):=\Phi_{H(x)}(\partial_{l}H(x))$, where the second line comes from \Cref{derivativefL}. 
		Next, we call $j_l\in\Lambda$ the unique site such that $x_l'$ is a coordinate of $x_{j_l}$, and denote $A_{j_l}$ be the support of $h_{j_l}$. Now, the above covariance is small if $r$ is large enough, since $\widetilde{H}_j(x(s))$ can be well approximated by an observable on $S^c$. 
		Indeed,
		\begin{align*}
			\partial_l H(x)=\partial_lh_{j_\ell}\,,
		\end{align*}
		where $j_l$ denotes the index of interaction $h_{j_l}$ which depends on variable $x'_l$. 
		Therefore, whenever ${A}_{j_l}\cap {S}=\emptyset$, we proceed similarly to \Cref{prop:bounds}: given a region $B_l\supset A_{j_l}$ such that $B_{l}\cap S=\emptyset$, denoting the observable $$O_{B_l}:=\Phi_{H_{B_l}(x)}(\partial_{x'_l}h_{j_l})-\langle\Phi_{H_{B_l}(x)}(\partial_{x'_l}h_{j_l})\rangle_{\beta, x}\,,$$ we have by \Cref{LemmaapproxPhi} as well as the  assumption that the state $\sigma(\beta,x)$ has exponential decay of correlations we have the following (refer to \Cref{Fig:Exp_Decay_Bounds} for a diagram of the regions):
		\begin{align*}
			\operatorname{Cov}_{\sigma(\beta,x)}(O,\,\widetilde{H}_{l}(x)) &= \operatorname{Cov}_{\sigma(\beta,x)}(O,\,\widetilde{H}_{l}(x)-O_{B_l})+\operatorname{Cov}_{\sigma(\beta,x)}(O,\,O_{B_l})\\ 
			&\le 2 \|O\|_{\infty}\,\|\Phi_{H(x)}(\partial_{x'_l}h_{j_l})-\Phi_{H_{B_l}(x)}(\partial_{x'_l}h_{j_l})\|_\infty+ \operatorname{Cov}_{\sigma(\beta,x)}(O,\,O_{B_l})\\
			&\le 2\|O\|_\infty c'|A_{j_l}|\,h\,e^{-\mu'\operatorname{dist}(A_{j_l},B_l^c)}+ 2C|S|\,\|O\|_\infty\,h\,e^{-\nu \operatorname{dist}(S,B_l)}\\
			&\le 2(C+c')\,\|O\|_\infty\,(2r_0+k_0)^D\, h\,\Big(e^{-\mu'\operatorname{dist}(A_{j_l},B_l^c)}+e^{-\nu \operatorname{dist}(S,B_l)} \Big)
		\end{align*}
		
		\begin{figure}[h!]
			\centering
			\includegraphics[width=0.8\textwidth]{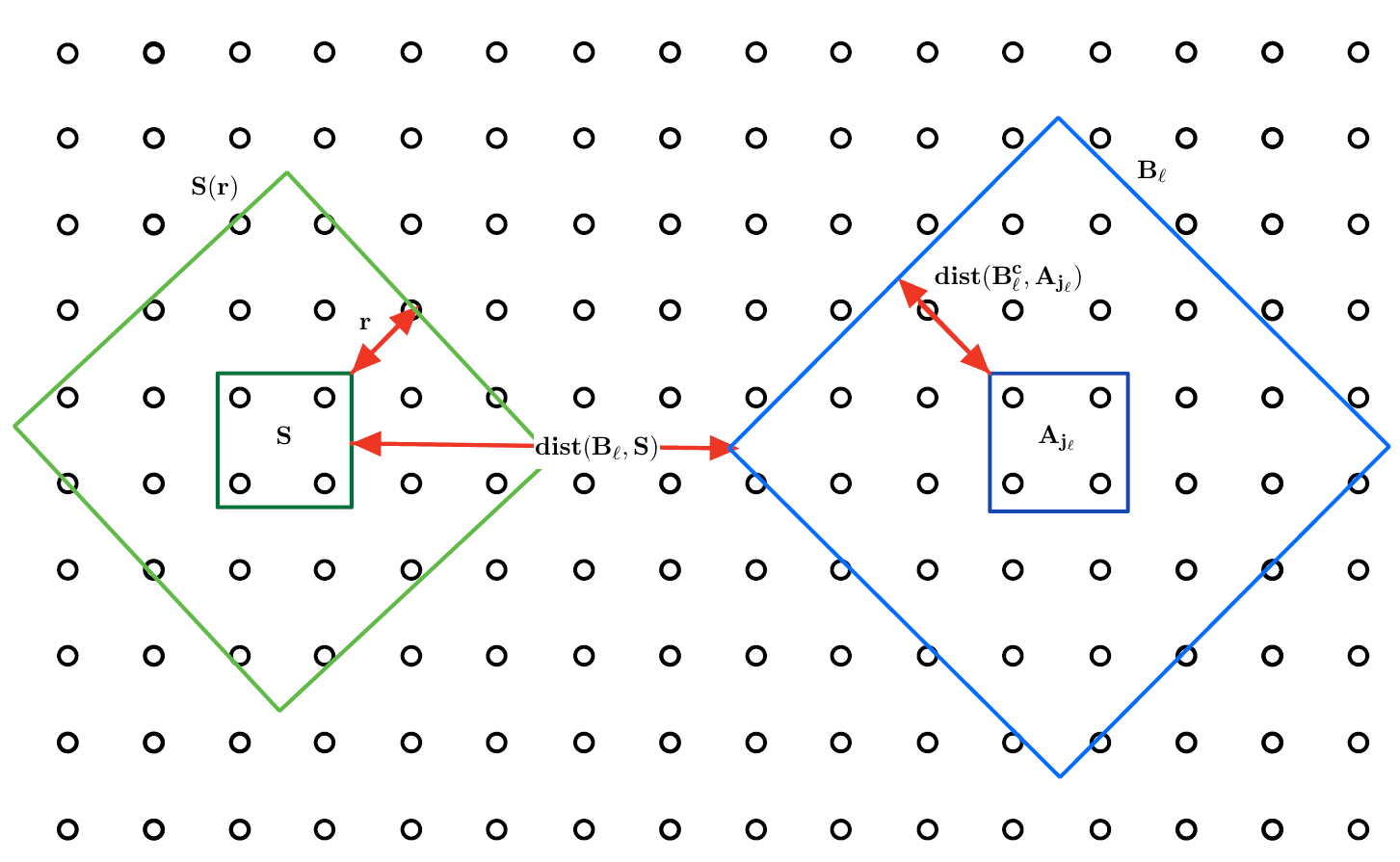}
			\caption{Diagram showing the regions involved in the proof \Cref{Lemma:Local_Perturbation}.}
			\label{Fig:Exp_Decay_Bounds}
		\end{figure}

		By construction, for $r>2r_0$, the condition that $A_{j_l}\cap S=\emptyset$ is met, and therefore the bound holds. We recall that $i\in\Lambda$ is defined as the center of $S$. Since $\operatorname{dist}(i,j_l)=k_0/2+\operatorname{dist}(S,B_l)+\operatorname{dist}(A_{j_l},B_l)+r_0$, we can choose $B_l$ so that $\operatorname{dist}(S,B_l),\operatorname{dist}(A_{j_l},B_l)\ge \operatorname{dist}(i,j_l)/2-k_0/4-r_0/2-1$. Therefore,
		\begin{align*}
			\operatorname{Cov}_{\sigma(\beta,x)}(O,\,\widetilde{H}_{l}(x))\le 4(C+c')C''\|O\|_\infty(2r_0+k_0)^D\,h e^{-\operatorname{dist}(i,j_l)/\xi}
		\end{align*}
		where $1/\xi=\min\{\mu',\nu\}$ and $C'':=e^{\max\{\mu',\nu\}(k_0/4+r_0/2+1)}$. Therefore
		\begin{align*}
			|f_{O}(x)-f_{O}(x|_{\mathcal{S}_i(r)})| 
			&\le 4\beta (C+c')\,h\,(2r_0+k_0)^D\, \|O\|_\infty\sum_{l\in \mathcal{S}(r)^c}\,e^{-\operatorname{dist}(i,j_l)/\xi}\,.
		\end{align*}
		Upon shifting the center of the lattice at site $i$, we get
		\begin{align*}
			|f_{O}(x)-f_{O}(x|_{\mathcal{S}(r)})|&\le 
			4\beta (C+c')C''\,h\,(2r_0+k_0)^D\, \|O\|_\infty\sum_{|l|\ge r+k_0/2}\,e^{-|l|/\xi}\\
			&=  4\beta (C+c')C''\,h\,(2r_0+k_0)^D\, \|O\|_\infty\sum_{a>r+k_0/2}\, \binom{a+D-1}{D-1} \,e^{-a/\xi}\\\
			&\le  4\beta (C+c')C''\,h\,(2r_0+k_0)^D\,D^{D-1} \|O\|_\infty\sum_{a>r+k_0/2}\,  a^{D-1}\,e^{-a/\xi}\\
			&\le  4\beta (C+c')C''\,h\,(2r_0+k_0)^D (D-1)!(2\xi)^{D-1}\,D^{D-1}\|O\|_\infty\sum_{a>r+k_0/2}\,  \,e^{-\frac{a}{2\xi}}\\
			&\le  4\beta (C+c')C''\,h\,(2r_0+k_0)^D(D-1)!(2\xi)^{D-1}\,D^{D-1}\|O\|_\infty  \,\frac{e^{-\frac{r+k_0/2+1}{2\xi}}}{1-e^{-\frac{1}{2\xi}}}\\
			&\equiv C_1\,e^{-\frac{r}{2\xi}}\,\|O\|_\infty\,,
		\end{align*}

		where $C_1$ depends upon all the parameters of the problem. 
		
	\end{proof}


	In the case when we are interested in distinguishing two Gibbs states with Lipschitz observables, over extended subregions of the lattice, the following extension of \Cref{cor:continuity_W1} can be easily shown to hold:

	\begin{corollary}\label{localW1bound}
		Assume that the states $\sigma(\beta,x)$ satisfy the condition of decay of correlations \Cref{decaycorr}. Then for any region $S$ of the lattice and any two $x,y\in[-1,1]^m$ 
		\begin{align*}
			W_1(\tr_{S^c}(\sigma(\beta,x)),\tr_{S^c}(\sigma(\beta,y)))\le \|x|_{\mathcal{S}(r)}-y|_{\mathcal{S}(r)}\|_{\ell_1}\,\operatorname{polylog}(|S(r)|)\,,
		\end{align*}
		where $r=\max\Big\{r_0,\, 2\xi\log\Big(\frac{2|S|C_1}{\|x|_{\mathcal{S}(r_0)}-y|_{\mathcal{S}(r_0)}\|_{\ell_1}}\Big)\Big\}$ with $r_0$ being the smallest integer such that $x|_{\mathcal{S}(r_0)}\ne y|_{\mathcal{S}(r_0)}$, and $C_1,\xi$ are the same constants as in \Cref{Lemma:Local_Perturbation}.
	\end{corollary}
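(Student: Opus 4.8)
The plan is to first localise the parameters using \Cref{Lemma:Local_Perturbation} --- which lets us replace $x,y$ by their truncations $x|_{\mathcal{S}(r)},y|_{\mathcal{S}(r)}$ at a cost controlled by $e^{-r/(2\xi)}$ --- and then to apply the continuity bound \Cref{cor:continuity_W1}, but on the bounded region $T$ on which the truncated Hamiltonians are supported rather than on the full lattice. The key observation that produces a $\operatorname{polylog}(|S(r)|)$ instead of a $\operatorname{polylog}(n)$ is that the truncated Gibbs state equals a Gibbs state on $T$ tensored with a maximally mixed state on $T^c$, so its marginal on $S$ depends only on $T$, and $|T|=\mathcal{O}(|S(r)|)$, while the polylogarithmic factor in \Cref{prop:bounds}/\Cref{cor:continuity_W1} scales with the size of the ambient region. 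The thing I expect to be most delicate is precisely this transfer of \Cref{cor:continuity_W1} to the subsystem $T$; the rest is bookkeeping with the triangle inequality and the calibration of $r$.

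\emph{Step 1 (localisation).} Fix an observable $L$ supported on $S$ with $\|L\|_{\operatorname{Lip}}\le 1$; then $\|L\|_\infty\le |S|$ by \cite[Proposition 15]{de2021quantum} and a triangle inequality. \Cref{Lemma:Local_Perturbation} gives $|\tr[L(\sigma(\beta,x)-\sigma(\beta,x|_{\mathcal{S}(r)}))]|\le C_1 e^{-r/(2\xi)}\|L\|_\infty\le C_1|S|e^{-r/(2\xi)}$, and likewise with $y$ in place of $x$. Taking the supremum over such $L$ and using the triangle inequality for $W_1$,
\begin{align*}
W_1\big(\tr_{S^c}\sigma(\beta,x),\tr_{S^c}\sigma(\beta,y)\big)\le W_1\big(\tr_{S^c}\sigma(\beta,x|_{\mathcal{S}(r)}),\tr_{S^c}\sigma(\beta,y|_{\mathcal{S}(r)})\big)+2C_1|S|e^{-r/(2\xi)}\,.
\end{align*}
The radius $r=\max\{r_0,\,2\xi\log\big(2|S|C_1\,\|x|_{\mathcal{S}(r_0)}-y|_{\mathcal{S}(r_0)}\|_{\ell_1}^{-1}\big)\}$ is calibrated precisely so that $2C_1|S|e^{-r/(2\xi)}\le\|x|_{\mathcal{S}(r_0)}-y|_{\mathcal{S}(r_0)}\|_{\ell_1}\le\|x|_{\mathcal{S}(r)}-y|_{\mathcal{S}(r)}\|_{\ell_1}$, the last step using $\mathcal{S}(r_0)\subseteq\mathcal{S}(r)$ since $r\ge r_0$.

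\emph{Step 2 (effective subsystem and conclusion).} Let $T:=\bigcup_{j\in\mathcal{S}(r)}A_j$, so that (since each site hosts an interaction centred there) $S\subseteq S(r)\subseteq T\subseteq S(r+2r_0)$ and hence $|T|=\mathcal{O}(|S(r)|)$. As $H(x|_{\mathcal{S}(r)})$ is supported in $T$, the Gibbs state factorises, $\sigma(\beta,x|_{\mathcal{S}(r)})=\sigma^x_T\otimes 2^{-|T^c|}I_{T^c}$ with $\sigma^x_T$ the Gibbs state of $H_T(x|_{\mathcal{S}(r)})$ on $\cH_T$, whence $\tr_{S^c}\sigma(\beta,x|_{\mathcal{S}(r)})=\tr_{T\setminus S}\sigma^x_T$, and similarly for $y$. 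I would then invoke \Cref{cor:continuity_W1} with the lattice replaced by $T$: its hypothesis holds because \Cref{decaycorr} for the full-lattice states $\sigma(\beta,z)$, evaluated on observables supported in $T$, is precisely \Cref{decaycorr} for $\{\sigma^z_T\}$ with the same constants $C,\nu$ (in particular along the segment $z(s)=(1-s)x|_{\mathcal{S}(r)}+sy|_{\mathcal{S}(r)}$), and this is all that enters the proof of \Cref{prop:bounds}. Since $W_1$ is monotone under the partial trace $\tr_{T\setminus S}$ (any Lipschitz observable on $S$ lifts to one on $T$ with the same Lipschitz norm) and $x|_{\mathcal{S}(r)}-y|_{\mathcal{S}(r)}$ is supported in $\mathcal{S}(r)\subseteq T$,
\begin{align*}
W_1\big(\tr_{S^c}\sigma(\beta,x|_{\mathcal{S}(r)}),\tr_{S^c}\sigma(\beta,y|_{\mathcal{S}(r)})\big)\le W_1(\sigma^x_T,\sigma^y_T)\le \|x|_{\mathcal{S}(r)}-y|_{\mathcal{S}(r)}\|_{\ell_1}\,\operatorname{polylog}(|S(r)|)\,.
\end{align*}
Substituting this into Step 1 and absorbing the additive term $\|x|_{\mathcal{S}(r)}-y|_{\mathcal{S}(r)}\|_{\ell_1}$ into the polylogarithm would finish the proof. (As an alternative to the "effective subsystem" move in Step 2 one could instead re-run the proof of \Cref{prop:bounds} directly with $\|L\|_\infty\le|S|\|L\|_{\operatorname{Lip}}$ in place of $\|L\|_\infty\le n\|L\|_{\operatorname{Lip}}$, which already makes all enlargement radii $\mathcal{O}(\log|S|)$ and yields the same polylogarithmic dependence.)
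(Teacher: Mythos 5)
Your proposal is correct and follows essentially the same route as the paper: localise via \Cref{Lemma:Local_Perturbation} (i.e.\ \Cref{eq:tracelocaldistance}), apply \Cref{cor:continuity_W1} to the truncated states, and calibrate $r$ so the localisation error is absorbed into $\|x|_{\mathcal{S}(r)}-y|_{\mathcal{S}(r)}\|_{\ell_1}$. Your Step 2 in fact spells out a detail the paper leaves implicit, namely why \Cref{cor:continuity_W1} applied to $\sigma(\beta,x|_{\mathcal{S}(r)})$ yields $\operatorname{polylog}(|S(r)|)$ rather than $\operatorname{polylog}(n)$ (via the factorisation onto the effective region $T$).
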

	
	\begin{proof}
		Given $L_S$ a Lipchitz observable supported on region $S$ of the lattice, we have for any $r\in\mathbb{N}$:
		\begin{align*}
			\big| f_{L_S}(x)-f_{L_S}(y)\big|&\le \|L_S\|_\infty\,\Big(\big\|\tr_{S^c}\big(\sigma(\beta,x)-\sigma(\beta,x|_{\mathcal{S}(r)})\big)\big\|_1+\big\|\tr_{S^c}\big(\sigma(\beta,y)-\sigma(\beta,y|_{\mathcal{S}(r)})\big)\big\|_1\Big)\\
			&\qquad + W_1\big(\sigma(\beta,x|_{\mathcal{S}(r)}),\sigma(\beta,y|_{\mathcal{S}(r)})\big)\\
			&\le 2\, |S|\,\|L_S\|_{\operatorname{Lip}}\, C_1\,e^{-\frac{r}{2\xi}} +W_1\big(\sigma(\beta,x|_{\mathcal{S}(r)}),\sigma(\beta,y|_{\mathcal{S}(r)})\big)\,,
		\end{align*}
		where the second line follows from \Cref{eq:tracelocaldistance}. By \Cref{cor:continuity_W1}, we conclude that 
		\begin{align*}
			W_1\big(\tr_{S^c}(\sigma(\beta,x)),\tr_{S^c}(\sigma(\beta,y))\big)&\le 2\,|S|\,C_1\,e^{-\frac{r}{2\xi}} + W_1\big(\sigma(\beta,x|_{\mathcal{S}(r)}),\sigma(\beta,y|_{\mathcal{S}(r)})\big)\\
			&\le 2\,|S|\,C_1\,e^{-\frac{r}{2\xi}} + \|x|_{\mathcal{S}(r)}-y|_{\mathcal{S}(r)}\|_{\ell_1}\,\operatorname{polylog}(|S(r)|)\,.
		\end{align*}
		Next, we choose $r=2\xi\log\Big(\frac{2|S|C_1}{\|x|_{\mathcal{S}(r_0)}-y|_{\mathcal{S}(r_0)}\|_{\ell_1}}\Big)$, where $r_0$ is the smallest integer such that $x|_{\mathcal{S}(r_0)}\ne y|_{\mathcal{S}(r_0)}$. 
	\end{proof}
	
	\subsection{Gibbs state tomography beyond the trivial phase}\label{sec:beyond_trivial}
	All of the results discussed so far in this section concerned parametrized families of Gibbs states that included the maximally mixed state. Let us now discuss to what extent our results generalize to the setting of where we add a fixed Hamiltonian term $H_0$, as discussed around Eq.~\eqref{equ:shifted_Hamiltonian}.
	As made explicit in the proof of \Cref{cor:continuity_W1}, the main technical result required for the continuity of $W_1$ w.r.t. the parameters is a bound on the derivative $|  \partial_{x'_k}f_L(\beta,x)|$ in terms of the Lipschitz constant we obtained in \Cref{prop:bounds}. In turn, to prove those bounds, the main technical assumptions we required were exponential decay of correlations and the existence of LR bounds. We then have:
	\begin{corollary}\label{cor:continuity_trivial}
		Assume that $\sigma(\beta,x,H_0)$ satisfies the condition of decay of correlations, \Cref{decaycorr} and that the time evolution defined by the Hamiltonian $H(x)+H_0$ satisfies a LR-bound as in \Cref{LRbound}.
		Define the function $f_{L,H_0}(\beta,x)=\tr[L\sigma(x,\beta,H_0)]$. Then for any $k\in[m]$,
		\begin{align}\label{eqpartialiv2}
			|  \partial_{x'_k}f_{L,H_0}(\beta,x)|\le \,\|L\|_{\operatorname{Lip}}\,\operatorname{polylog}(n)\,,
		\end{align}
		for some polynomial of $\log(n)$ of degree $D$ with coefficients depending on $\beta,r_0,D,h,c',\nu,\mu'$ and $C$. In particular, it follows that 
		\begin{align}\label{equ:W_1cont2}
			W_1(\sigma(\beta,x,H_0),\sigma(\beta,y,H_0))\le  \,\|x-y\|_{\ell_1}\,\operatorname{polylog}(n)\,.
		\end{align}
	\end{corollary}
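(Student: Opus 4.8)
The plan is to re-run the proofs of \Cref{prop:bounds} and \Cref{cor:continuity_W1} essentially verbatim, checking that the only two ingredients those arguments relied on --- the quantum belief propagation representation of the parameter-derivative $\partial_{x'_k}f_L$, and the localization \Cref{LemmaapproxPhi} of the belief propagation operator --- survive the replacement of $H(x)$ by $H(x)+H_0$. Since $H_0$ does not depend on $x$, no genuinely new idea is needed; what has to be verified is that each estimate that previously used "$H(x)$" still holds with the shifted generator.

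First I would note that differentiating $\sigma(\beta,x,H_0)=e^{-\beta H(x)-H_0}/\tr[e^{-\beta H(x)-H_0}]$ with respect to $x'_k$ still only produces the \emph{local} term $\partial_{x'_k}h_{j_k}$, supported on the ball $A_{j_k}$, because $H_0$ is $x$-independent. Writing $\sigma(\beta,x,H_0)=e^{-\beta G(x)}/\tr[e^{-\beta G(x)}]$ with $G(x):=H(x)+\beta^{-1}H_0$, the belief propagation identity then gives
\begin{align*}
	\partial_{x'_k}f_{L,H_0}(\beta,x)=-\beta\,\operatorname{Cov}_{\sigma(\beta,x,H_0)}\big(L,\,\Phi_{G(x)}(\partial_{x'_k}h_{j_k})\big)\,,
\end{align*}
which is formally \Cref{derivativefL} with the belief propagation operator now built from $G(x)$. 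The rescaling by $\beta^{-1}$ inside $G$ only rescales time in the evolution $e^{-iG(x)t}$, hence the Lieb--Robinson velocity, which is harmless.

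Second, I would establish the analogue of \Cref{LemmaapproxPhi} for $\Phi_{G(x)}$, namely $\|\Phi_{G(x)}(O_A)-\Phi_{G_B(x)}(O_A)\|_\infty\le c'\,|A|\,\|O_A\|_\infty\,e^{-\mu'\operatorname{dist}(A,B^c)}$ for $A\subset B$. Its proof is word-for-word the same: combine the decay of $\kappa_\beta$ from \Cref{fbetabound} (unchanged, as it depends only on $\beta$) with the Lieb--Robinson bound \Cref{LRbound}, which is now an explicit hypothesis for the evolution generated by $H(x)+H_0$ (equivalently, up to the time rescaling above, by $G(x)$), to control $\|\alpha_t(O_A)-\alpha_t^B(O_A)\|_\infty$. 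With this in hand the covariance estimate in the proof of \Cref{prop:bounds} goes through unchanged: split $\Phi_{G(x)}(\partial_{x'_k}h_{j_k})$ into a piece localized on $B_k=A_{j_k}(\lfloor\log n/\mu'\rfloor)$ plus an exponentially small error; split $L$ into $\tau_{B_k(r)}(L)$ plus a term bounded by $|B_k(r)|\,\|L\|_{\operatorname{Lip}}$; and apply the assumed exponential decay of correlations for $\sigma(\beta,x,H_0)$ with $r=\lfloor\log n/\nu\rfloor$. This yields \Cref{eqpartialiv2}, and then \Cref{equ:W_1cont2} follows exactly as in \Cref{cor:continuity_W1} by integrating $|\partial_{x'_k}f_{L,H_0}|$ along the segment $x(s)=(1-s)x+sy$ and taking the supremum over $\|L\|_{\operatorname{Lip}}\le1$ in \Cref{Def:Wasserstein_Distance}.

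The main point to be careful about is the localization lemma: one must ensure that the restricted Hamiltonian $G_B(x)$ (equivalently $H_B(x)+\beta^{-1}(H_0)_B$) and the corresponding Lieb--Robinson comparison are meaningful for the given $H_0$. This is precisely what the hypothesis "the time evolution defined by $H(x)+H_0$ satisfies a LR-bound as in \Cref{LRbound}" is packaging --- automatic when $H_0$ is itself a finite-range lattice Hamiltonian, and an assumption otherwise. Everything else is a routine transcription of the earlier arguments, and in particular the $\operatorname{polylog}(n)$ in \Cref{eqpartialiv2} and \Cref{equ:W_1cont2} inherits exactly the dependence on $\beta,r_0,D,h,c',\nu,\mu',C$ recorded in \Cref{prop:bounds}.
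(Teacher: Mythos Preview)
Your proposal is correct and follows essentially the same approach as the paper's own proof, which simply records the belief propagation identity for $\sigma(\beta,x,H_0)$ with $\Phi_{H(x)+H_0}$ in place of $\Phi_{H(x)}$ and then observes that the proof of \Cref{prop:bounds} required only a Lieb--Robinson bound and exponential decay of correlations, both of which are now hypotheses. Your treatment is in fact slightly more careful than the paper's in making explicit the rescaling $G(x)=H(x)+\beta^{-1}H_0$ needed to invoke the belief propagation formula with kernel $\kappa_\beta$, and in noting that this only rescales the Lieb--Robinson velocity.
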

	\begin{proof}
		How \Cref{equ:W_1cont2} follows from \Cref{eqpartialiv2} is completely equivalent to \Cref{cor:continuity_W1}. 
		To obtain the inequality in \Cref{eqpartialiv2} we use the fact that
		\begin{align*}
			&\partial_{x'_k}f_{L,H_0}(\beta,x)=-\frac{\beta}{2}\,\tr\Big[L\big\{\Phi_{H(x)+H_0}(\partial_{x'_k} (H(x))),\sigma(\beta,x,H_0)\big\}\Big]+\\ &\beta \tr(\partial_{x'_k}(H(x))\sigma(\beta,x,H_0))\,\tr(L \sigma(
			\beta,x,H_0))\,.
		\end{align*}
		A close inspection of the proof of \Cref{prop:bounds} shows that all that was required to then bound the derivative was a LR-bound and the exponential decay of correlations, which we assume to also be given in this setting.
	\end{proof}
	
	The other results of the previous section also generalize to our setting with constant shift $H_0$, but we will not state them explicitly for the sake of conciseness.

	\subsection{Hamiltonian estimation and optimal Gibbs state tomography}\label{sec:learning_gibbs_examples}
	
	From \Cref{cor:continuity_W1} it is immediate that we reduced the problem of obtaining a good estimate in $W_1$ to the problem of estimating the parameters of the Gibbs state $\sigma(\beta,x)$. Indeed, it is clear that if we can obtain an estimate $x'$ of $x$ satisfying 
	\begin{align}\label{equ:required_guarantee}
		\|x-x'\|_{\ell_1}=\mathcal{O}(\epsilon n /\textrm{polylog}(n)),
	\end{align}
	then it suffices to ensure that $W_1(\sigma(\beta,x),\sigma(\beta,x'))=\epsilon n$. Let us discuss some examples where we can obtain this efficiently with $\mathcal{O}(\epsilon^{-2}\textrm{polylog}(n))$ samples.

	\subsubsection{Commuting Hamiltonians}\label{seccommuting}
	In~\cite{anshuweb}, the authors give an algorithm which with 
	\begin{align}
		e^{\mathcal{O}(\beta k^D)}\mathcal{O}(\log(\delta^{-1}n)\epsilon^{-2})
	\end{align}
	copies of $\sigma(\beta,x)$ learns $x$ up to $\epsilon$ in $\ell_\infty$ distance when $\sigma(\beta,x)$ belongs to a family of commuting, $k$-local Hamiltonians on a $D$-dimensional lattice. As we assumed that the number of parameters $m=\mathcal{O}(n)$, this translates to an algorithm with sample complexity $e^{\mathcal{O}(\beta k^D)}\mathcal{O}(\epsilon^{-2}\textrm{polylog}(\delta^{-1}n))$ to learn $x$ up to $\epsilon n$ in $\ell_1$ distance. It should be noted that the time complexity of their algorithm is $\mathcal{O}(ne^{\mathcal{O}(\beta k^D)}\epsilon^{-2}\textrm{polylog}(\delta^{-1}n))$.

	Thus, any commuting model at constant temperature satisfying exponential decay of correlations can be efficiently learned with $\operatorname{polylog}(n)$ samples. Examples of classes of commuting states that satisfy exponential decay of correlations include:
	\begin{enumerate}
		\item $1$D translation-invariant Hamiltonians at any positive temperature~\cite{araki1969gibbs}.
		\item Commuting Gibbs states of Hamiltonians on regular lattices below a threshold temperature~\cite{kliesch2014locality,harrow2020classical}.
		\item Classical spin models away from criticality~\cite{dobrushin1987completely,liu2019fisher,harrow2020classical}.
		\item Ground states of gapped systems \cite{bravyi2010topological,bravyi2011short,michalakis2013stability,nachtergaele2022quasi}.
	\end{enumerate}
	
	\subsubsection{High-temperature Gibbs states}\label{sechightemperature}
	Another class of states for which the conditions of our results hold are local Gibbs states on a lattice above a threshold temperature that depends on the locality of the Hamiltonian and the dimension of the lattice. These systems are known to have exponential decay of correlations~\cite{kliesch2014locality,harrow2020classical}. Furthermore, in~\cite{haah2021optimal} the authors give an algorithm to learn $x$ up to error $\epsilon$ in $\ell_\infty$ norm from $\mathcal{O}(\epsilon^{-2}\textrm{polylog}(\delta^{-1}n))$ samples. This again translates to a $\mathcal{O}(\epsilon n)$ error in $\ell_1$ norm. Note that their algorithm also is computationally efficient.
	
	We note that in~\cite{anshu2021sample} the authors give an algorithm to learn the Hamiltonian of any Gibbs state of positive temperature through the maximum entropy method. However, their results require a polynomial number of samples to recover the parameters in $\ell_1$ distance. Thus, their results do not work for the polylog regime investigated in this work.

	\subsubsection{Gibbs state of exponentially decaying correlations and conditional mutual information}\label{sec:gibbs_learning_exponential_decay}

	In the previous section, we extracted two regimes for which there exist efficient Gibbs tomography algorithms from previous works, namely the commuting and the high-temperature regimes. As said before, depending on the Hamiltonian, exponential decay of correlations can also occur in the low-temperature regime, and it is an interesting open question whether our strategy can be adapted to that setting for non-commuting interactions. 
	
	Here, we show that the Gibbs state $\sigma(\beta,x)$ of a possibly non-commuting Hamiltonian $H(x)$ can also be estimated in Wasserstein distance up to multiplicative error $\epsilon n$ given $\operatorname{polylog}(n)$ copies of it as long as the latter has exponentially decaying correlations and is close to a quantum Markov chain, hence partially answering an open problem previously raised in \cite{anshu2021sample}.
	
	To be more precise, in this section we will require a stronger notion of decay of correlations.
	\begin{definition}[Uniform clustering]
		The Gibbs state $\sigma(\beta,x)$ is said to be uniformly $\zeta(\ell)$-clustering if for
		any $X\subset\Lambda$ and any $A\subset X$ and $B\subset X$ 
		such that $\operatorname{dist}(A,B) \ge \ell$,
		\begin{align*}
			\operatorname{Cov}_{\sigma(\beta,x,X)}(X_A, X_B) \le  \|X_A\|_{\infty}\,\|X_B\|_{\infty}\,\zeta(\ell) 
		\end{align*}
		for any $X_A$ supported on $A$ and $X_B$ supported on $B$.
	\end{definition}
	
	As pointed out in \cite{Brandao_Kastoryano_2018}, this property is called uniform clustering to contrast with regular clustering property that usually only refers to properties of the state $\sigma(\beta,x)$.
	
	\begin{definition}[Uniform Markov condition]
		
		The Gibbs state $\sigma_\beta(x)$ is said to satisfy the uniform $\delta(\ell)$-Markov condition if for any $ABC = X \subset \Lambda$ with $B$ shielding $A$ away from $C$ and such that $\operatorname{dist}(i,j) \ge \ell$ for any $i \in A$ and $j \in C $, we have
		\begin{align*}
			I(A:C|B)_{\sigma(\beta,x,X)} \le  \delta(\ell)\,. 
		\end{align*}
	\end{definition}
	
	This property always holds for commuting Gibbs states for a function $\delta(\ell)=0$ as soon as $\ell$ is larger than twice the interaction range. Although not proven yet, it is believed that the approximate Markov property holds with some generality for non-commuting Gibbs states. The 1D and high-temperature settings were investigated in \cite{kato2019quantum} and \cite{kuwahara_gibbs}, respectively. The decay of the conditional mutual information was also shown for finite temperature Gibbs states of free fermions, free bosons, conformal field theories, and holographic models \cite{PhysRevB.94.155125}, as well as more recently for purely generated finitely correlated states in \cite{doi:10.1063/5.0085358}.

	We will now show how to learn states that satisfy both the uniform Markov condition and the uniform clustering of correlations. Our strategy consists in using the maximum entropy estimation \cite{jaynes1957information,jaynes1957informationb,jaynes1982rationale,brandao2017exponential}, already appearing in \cite{anshu2021sample}, to construct an estimator $\hat{x}$ of the parameter $x\in[-1,1]^m$. The condition of exponential decay of correlations and that of approximate Markov chain will ensure that $W_1(\sigma(\beta,\hat{x}),\sigma(\beta,x))=o(n)$. Thus, we once again emphasise that our goal is to obtain a good recovery of the state, not of the parameter $x$.

	For sake of clarity and simplicity of presentation, we only consider the $1$D setting, although our method easily extends to arbitrary dimension. We assume that each interaction $h_j(x_j)$ is of the form
	\begin{align*}
		h_j(x_j):=\sum_{l=1}^{\ell} x_{j,l} \,h_{j,l}
	\end{align*}
	for some self-adjoint operators $h_{j,l}$ supported in $A_j:=\{k\in\Lambda|\operatorname{dist}(k,j)\le r_0\}$ with $\|h_{j,l}\|\le h$, where we denoted by $x_{j,l}$ the entries of $x_k$. We also recall that given a region $R$ of the lattice, we denote $H_R(x):=\sum_{k\in R} h_k(x_k)$. In what follows, with a slight abuse of notations, we denote by the same symbol a vector $y=\{y_{k,l}\}_{k\in N_j}$ and its embedding $(y,0)$ onto $[-1,1]^m$. Then, given an inverse temperature $\beta>0$, we define the partition function as
	\begin{align*}
		Z_{\beta}(x)=\tr\big[e^{-\beta H(x)}\big]\,.
	\end{align*}
	The maximum entropy problem consists in the following strongly convex optimisation problem.
	\begin{theorem}[\cite{anshu2021sample}]
		Given an unknown Hamiltonian $H(x)=\sum h_j(x_j)$, define $e_{k,l}= \tr[h_{k,\ell}\sigma(\beta, x)  ]$.
		Solving the following optimisation problem:
		\begin{align}\label{localmaxent}
			\hat{x}:=\underset{y\in[-1,1]^m}{\operatorname{arg~min}}\,L(y)\,,\qquad \text{ where }\qquad L(y):=\log Z_{\beta}(y)+\beta\sum_{k\in\Lambda}\sum_{l=1}^\ell y_{k,l}\,e_{k,l}\,
		\end{align}
		gives $\hat{x}$ such that $\sigma(\beta, \hat{x}) = \sigma(\beta, x)$.
	\end{theorem}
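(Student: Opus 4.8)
The plan is to recognise the objective $L(y)$, up to a $y$-independent constant, as a quantum relative entropy and then to invoke the nonnegativity of relative entropy. First I would use the explicit form $\log\sigma(\beta,y)=-\beta H(y)-\log Z_\beta(y)$ together with $H(y)=\sum_{k,l}y_{k,l}h_{k,l}$ and the definition $e_{k,l}=\tr[h_{k,l}\sigma(\beta,x)]$ to compute, for every $y\in[-1,1]^m$,
\begin{align*}
D\big(\sigma(\beta,x)\,\big\|\,\sigma(\beta,y)\big)&:=\tr[\sigma(\beta,x)\log\sigma(\beta,x)]-\tr[\sigma(\beta,x)\log\sigma(\beta,y)]\\
&=\Big(\log Z_\beta(y)+\beta\textstyle\sum_{k,l}y_{k,l}e_{k,l}\Big)-\Big(\log Z_\beta(x)+\beta\sum_{k,l}x_{k,l}e_{k,l}\Big)=L(y)-L(x)\,.
\end{align*}
Both cross terms here reduce to expressions of the form $-\beta\sum_{k,l}(\cdot)_{k,l}e_{k,l}-\log Z_\beta(\cdot)$ precisely because the expectation values appearing are all taken in the \emph{same} state $\sigma(\beta,x)$; this is the only genuine calculation in the proof and it is routine.

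Next I would record that the optimisation problem is convex with attained minimum: $y\mapsto\log Z_\beta(y)$ is convex (a standard fact, since it is the logarithm of the trace of the exponential of an operator affine in $y$; concretely its Hessian is a positive-semidefinite covariance matrix of the $h_{k,l}$, the Kubo--Mori/Bogoliubov correlation matrix, or one may simply invoke Peierls--Bogoliubov), and the added linear term $\beta\sum_{k,l}y_{k,l}e_{k,l}$ preserves convexity, so $L$ is convex on the box $[-1,1]^m$ and attains its minimum at some $\hat x$. Since the true parameter $x$ is itself a feasible point of this box, any minimiser $\hat x$ satisfies $L(\hat x)\le L(x)$.

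Putting the two steps together, $D\big(\sigma(\beta,x)\,\big\|\,\sigma(\beta,\hat x)\big)=L(\hat x)-L(x)\le 0$, and since relative entropy is always nonnegative and vanishes exactly when its two arguments coincide, we conclude $\sigma(\beta,\hat x)=\sigma(\beta,x)$. Equivalently, one can verify directly that $\partial_{y_{k,l}}L(x)=\beta\big(e_{k,l}-\tr[h_{k,l}\sigma(\beta,x)]\big)=0$, using $\partial_{y_{k,l}}\log Z_\beta(y)=-\beta\,\tr[h_{k,l}\sigma(\beta,y)]$, so that $x$ is a critical point of the convex function $L$, hence a global minimiser, which again forces $L(\hat x)=L(x)$ and the same conclusion. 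There is no serious obstacle at the level of this exact-reconstruction statement: the only care needed is the bookkeeping in the relative-entropy identity and explicitly using that $x$ is feasible, and one should avoid over-claiming, since $\hat x$ need not equal $x$ (the minimiser may fail to be unique) --- only the induced Gibbs state is pinned down. The genuinely hard, quantitative content --- that $L$ is moreover \emph{strongly} convex with a modulus that can be controlled uniformly in the system size, which is what upgrades this noiseless identity into a sample-efficient estimator robust to statistical fluctuations in the measured $e_{k,l}$ --- is where the clustering and approximate-Markov hypotheses enter, and is treated separately.
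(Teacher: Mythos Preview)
Your argument is correct. The paper does not actually supply its own proof of this statement --- it is quoted from \cite{anshu2021sample} --- but the two ingredients you use are precisely those the paper records in the surrounding discussion: the derivative identity $\partial_{y_{k,l}}\log Z_\beta(y)=-\beta\,\tr[h_{k,l}\sigma(\beta,y)]$ and the relative-entropy identity $D(\sigma(\beta,x)\|\sigma(\beta,\hat x))=\beta\sum_{k,l}(\hat x_{k,l}-x_{k,l})e_{k,l}(x)+\log Z_\beta(\hat x)-\log Z_\beta(x)$, which is exactly your $L(\hat x)-L(x)$. Your observation that only the Gibbs state, not the parameter $\hat x$ itself, is pinned down is a useful clarification not made explicit in the paper.
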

	
	In an experimental setting, we will not have access to the exact $\{e_{k,l}\}_{k,l}$, but instead may be able to approximate them using by having access to the state.
	However, we want to be sure that having a reasonably good approximation to $e_{k,l}$ is sufficient to approximate $x$.
	To do so one can make use of the fact that
	\begin{align}\label{upperboundZ}
		\log Z_\beta(\hat{x})\le \log Z_\beta(x)+\beta\sum_{k\in\Lambda}\sum_{l=1}^\ell\,(x_{k,l}-\hat{x}_{k,l})\,\widetilde{e}_{k,l}\,.
	\end{align}
	Further assuming $\alpha_2$ is a lower bound on the strong convexity constant associated to the function $x\mapsto \log Z_\beta(x)$, that is $\nabla^2 Z_{\beta}\ge \alpha_2 \,I$, we have by Taylor expansion and since $\partial_{x_{k,l}}\log Z_\beta(x)=-\beta e_{k,l}(x)$:
	\begin{align}\label{strongconvecl2}
		\log Z_\beta(\hat{x})\ge \log Z_\beta(x)-\beta\sum_{k\in\Lambda}\sum_{l=1}^\ell\, (\hat{x}_{k,l}-x_{k,l})\,e_{k,l}(x)+\frac{\alpha_2}{2}\,\|x-\hat{x}\|_{\ell_2}^2\,.
	\end{align}
	Combining the two bounds above, we find that
	\begin{align*}
		\|x-\hat{x}\|_{\ell_2}^2\le \frac{2\beta}{\alpha_2}\,\sum_{k,l} (x_{k,l}-\hat{x}_{k,l}) (\widetilde{e}_{k,l}-e_{k,l}(x))\le \frac{2\beta}{\alpha_2}\,\|x-\hat{x}\|_{\ell_2}\,\|e-\widetilde{e}\|_{\ell_2}\,,
	\end{align*}
	and hence $\|x-\hat{x}\|_{\ell_2}\le \frac{2\beta\sqrt{\ell |\Lambda|}\,\eta}{\alpha_2}$, thus giving the following theorem:
	\begin{theorem}[\cite{anshu2021sample}] \label{Theorem:l2_Norm_Estimate}
		Suppose $\widetilde{e}_{k,l}$ is an approximation of $e_{k,l}(x):=\tr \big[h_{k,l}\,\sigma(\beta,x)\big]$ with $\|\widetilde{e}-e(x)\|_{\ell_\infty}\le \eta$.
		Assume that the following inequality is satisfied for some $\alpha_2$: $\nabla^2 Z_\beta \geq \alpha I$.
		Solving the following optimisation problem:
		\begin{align}\label{localmaxent}
			\hat{x}:=\underset{y\in[-1,1]^m}{\operatorname{arg~min}}\,L(y)\,,\qquad \text{ where }\qquad L(y):=\log Z_{\beta}(y)+\beta\sum_{k\in\Lambda}\sum_{l=1}^\ell y_{k,l}\,\widetilde{e}_{k,l}\,
		\end{align}
		gives an output $\hat{x}$ satisfying:
		\begin{align*}
			\norm{\hat{x}-x}_{\ell_2}\leq \frac{2\beta\eta \sqrt{\ell \Lambda}}{\alpha_2}.
		\end{align*}
	\end{theorem}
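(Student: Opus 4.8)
The plan is to run the standard argument for minimising a strongly convex function from noisy first-order data, which is precisely the short computation already carried out in \eqref{upperboundZ} and \eqref{strongconvecl2} above; the genuine content lies entirely in the assumed lower bound on the strong convexity constant of the log-partition function, and everything else is bookkeeping.

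First I would exploit feasibility and optimality. The true parameter $x$ lies in $[-1,1]^m$, hence is feasible, so the minimiser satisfies $L(\hat x)\le L(x)$. Expanding $L(y)=\log Z_\beta(y)+\beta\sum_{k,l}y_{k,l}\,\widetilde e_{k,l}$ and rearranging gives
\[
\log Z_\beta(\hat x)\le \log Z_\beta(x)+\beta\sum_{k,l}(x_{k,l}-\hat x_{k,l})\,\widetilde e_{k,l}\,,
\]
which is \eqref{upperboundZ}, and this is the only step in which the noisy moments $\widetilde e$ (as opposed to the exact ones $e(x)$) appear.

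Second I would use convexity. The map $y\mapsto \log Z_\beta(y)$ is convex with gradient $\partial_{y_{k,l}}\log Z_\beta(y)=-\beta\,e_{k,l}(y)$, and by hypothesis $\nabla^2\log Z_\beta\ge \alpha_2 I$. A second-order Taylor expansion at $x$, with the quadratic remainder bounded below uniformly on the segment $[x,\hat x]\subset[-1,1]^m$ by strong convexity, yields
\[
\log Z_\beta(\hat x)\ge \log Z_\beta(x)-\beta\sum_{k,l}(\hat x_{k,l}-x_{k,l})\,e_{k,l}(x)+\tfrac{\alpha_2}{2}\,\|x-\hat x\|_{\ell_2}^2\,,
\]
which is \eqref{strongconvecl2}. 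Subtracting this from the previous display cancels the $\log Z_\beta(x)$ terms and combines the two linear contributions into $\beta\sum_{k,l}(x_{k,l}-\hat x_{k,l})(\widetilde e_{k,l}-e_{k,l}(x))$, so that $\tfrac{\alpha_2}{2}\|x-\hat x\|_{\ell_2}^2\le \beta\,\|x-\hat x\|_{\ell_2}\,\|\widetilde e-e(x)\|_{\ell_2}$ by Cauchy--Schwarz. Bounding $\|\widetilde e-e(x)\|_{\ell_2}\le \sqrt{\ell|\Lambda|}\,\|\widetilde e-e(x)\|_{\ell_\infty}\le \sqrt{\ell|\Lambda|}\,\eta$ (there being $\ell|\Lambda|=m$ coordinates) and dividing through by $\|x-\hat x\|_{\ell_2}$ (the estimate being trivial when it vanishes) produces $\|\hat x-x\|_{\ell_2}\le 2\beta\eta\sqrt{\ell|\Lambda|}/\alpha_2$, as claimed.

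The step I expect to be the real obstacle is not in this derivation, which is routine, but in justifying the hypothesis $\nabla^2\log Z_\beta\ge \alpha_2 I$ with $\alpha_2$ independent of the system size: this is where the structural assumptions on the Gibbs state --- exponential decay of correlations, and, for the $W_1$ variant pursued in \Cref{sec:gibbs_learning_exponential_decay}, the uniform Markov condition --- are genuinely needed. I would therefore keep that lower bound as a separate input and not attempt to prove it within this argument.
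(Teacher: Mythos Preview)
Your proposal is correct and follows the paper's own argument essentially line for line: the derivation you outline is precisely the one appearing in \eqref{upperboundZ}--\eqref{strongconvecl2} and the two lines following them, with the same use of optimality, strong convexity, Cauchy--Schwarz, and the $\ell_\infty\to\ell_2$ bound on the error vector. Your closing remark correctly identifies that the nontrivial content is in establishing the hypothesis $\nabla^2\log Z_\beta\ge\alpha_2 I$ with a useful $\alpha_2$, which is indeed taken as an input here.
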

	
	Using the bound on $\hat{x}$ from \Cref{Theorem:l2_Norm_Estimate} the equivalence between $\ell_1$ and $\ell_2$-norms, we have that
	\begin{align*}
		\|x-\hat{x}\|_{\ell_1}\le \frac{2\beta\ell n\eta}{\alpha_2}\,,
	\end{align*}
	which provides us with the right scaling for our $\ell_1$ approximation problem as long as $\eta=o(1)$ and $\alpha_2=\Omega(1)$. 
	Unfortunately, the constant $\alpha_2$ could only be proved to scale inverse polynomially with $n$ in \cite{anshu2021sample}. 
	A first idea from there is to try and find a constant $\alpha_1=\Omega(n^{-1})$ such that the following strong convexity bound with respect to the $\ell_1$-norm holds.
	As per \cref{strongconvecl2}, this would imply:
	\begin{align}\label{strongconvexl1}
		\log Z_\beta(\hat{x})\ge \log Z_\beta(x)-\beta\sum_{k,l}(\hat{x}_{k,l}-x_{k,l})\,e_{k,l}(x)+\frac{\alpha_1}{2}\,\|x-\hat{x}\|_{\ell_1}^2\,.
	\end{align}
	If such a bound held, we would conclude similarly to the previous setting that
	\begin{align*}
		\|x-\hat{x}\|_{\ell_1}\le \frac{2\beta \eta}{\alpha_1}=o(\eta n)\,.
	\end{align*}
	Which together with the continuity bound \Cref{upperboundW11} would allow us to get the desired recovery estimate in Wasserstein distance. Now, it can be seen that \Cref{strongconvexl1} is equivalent to 
	\begin{align}\label{eq_TCparameters}
		\|x-\hat{x}\|_{\ell_1}^2\le\,\frac{2}{\alpha_1}\, D(\sigma(\beta,x)\|\sigma(\beta,\hat{x}))\,.
	\end{align}
	Here we recall that the relative entropy between two quantum states $\rho$ and $\sigma$ with $\operatorname{supp}(\rho)\subseteq \operatorname{supp}(\sigma)$ is $D(\rho\|\sigma):=\tr\rho\log\rho-\tr\rho\log\sigma$.
	This together with \Cref{upperboundW11} would lead to the following local version of the \textit{transportation cost} inequality 
	\begin{align}\label{TC}
		W_1(\sigma(\beta,x),\sigma(\beta,\hat{x}))^2\le \mathcal{O}(n\operatorname{polylog}(n))\,D(\sigma(\beta,x)\|\sigma(\beta,\hat{x}))\,.
	\end{align}
	In \cite{DePalma2022}, such inequality was shown to hold in the high-temperature regime only for commuting $H$, albeit when $\sigma(\beta,x)$ can be replaced by an arbitrary state $\rho$ on the lattice. The latter is referred to as a transportation-cost inequality for the state $\sigma(\beta,\hat{x})$. Since \Cref{strongconvexl1} consists in a strengthening of 
	\Cref{TC}, proving it directly appears difficult. Here instead, we want to show the following weakening of \eqref{TC}:
	\begin{align*}
		W_1(\sigma(\beta,x),\sigma(\beta,\hat{x}))^2\le \mathcal{O}(n\operatorname{polylog}(n))\,D(\sigma(\beta,x)\|\sigma(\beta,\hat{x}))\,+ o(\epsilon n)\,,
	\end{align*}
	for some constant $\delta$ which depends on the approximate Markov as well as the correlation decay properties of the Gibbs state $\sigma(\beta,\hat{x})$. More precisely, we show the following extension of \cite[Theorem 4]{DePalma2022} to Gibbs states of non-commuting Hamiltonians.
	\begin{proposition}[Generalised transportation-cost inequality]\label{prop:generalized_transportation}
		With the notations of the above paragraph, for all states $\rho$:
		\begin{align*}
			W_1(\rho,\sigma(\beta,x))\,\le\, \inf_{\ell\in\mathbb{N}}\mathcal{O}(\ell \sqrt{n})\,\sqrt{D(\rho\|\sigma(\beta,x))} + n^2\big(\delta(\mathcal{O}(\ell))+\zeta(\mathcal{O}(\ell))+e^{-\mathcal{O}(\ell)}\big)\,.
		\end{align*}
		In particular, if both $\zeta(l),\delta(l)=\mathcal{O}(e^{-\xi l})$, then for $l=\mathcal{O}(\xi^{-1}\log(n\epsilon^{-1}))$ we have
		\begin{align}\label{equ:final_w1_exp}
			W_1(\rho,\sigma(\beta,x))\,\le\, \mathcal{O}(\log(n\epsilon^{-1}) \sqrt{n})\,\sqrt{D(\rho\|\sigma(\beta,x))}+o(\epsilon n).
		\end{align}
	\end{proposition}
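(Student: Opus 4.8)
The plan is to transplant the block-decoupling argument behind the commuting transportation-cost inequality of \cite{DePalma2022} to the non-commuting setting, replacing exact conditional independence of the reference Gibbs state by the uniform $\delta(\ell)$-Markov condition and using the uniform $\zeta(\ell)$-clustering bound together with the Lieb--Robinson bound \Cref{LRbound} to make the required corrections quasi-local. I would carry out the argument in the $1$D geometry in which \Cref{prop:generalized_transportation} is stated. Fix a coarse-graining length $\ell\in\mathbb{N}$, partition $\Lambda$ into $K=\mathcal{O}(n/\ell)$ consecutive blocks $Y_1,\dots,Y_K$ of width $\Theta(\ell)$, and write $Y_j\partial$ for $Y_j$ together with its two adjacent buffer regions, again of width $\Theta(\ell)$. \textbf{Step 1 (hybrid interpolation).} I would first build a chain of states $\rho=\tau_0,\tau_1,\dots,\tau_K=\sigma(\beta,x)$ with $\tau_j=\mathcal{R}_j(\tau_{j-1})$, where $\mathcal{R}_j$ is a quantum channel supported on $Y_j\partial$ that overwrites the state on $Y_j$ by the $\sigma(\beta,x)$-conditional state on $Y_j$ given $\partial Y_j$ --- concretely, a truncation of the Petz recovery map of $\sigma(\beta,x)$ for the splitting $Y_{<j}\mid\partial Y_j\mid Y_{\ge j}$. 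Exactly three error sources appear, and they match the three terms in the bound: the uniform $\delta(\ell)$-Markov condition controls, through a Fawzi--Renner / Petz-recovery estimate, how far $\mathcal{R}_j$ is from exactly recovering $\sigma(\beta,x)$; the uniform $\zeta(\ell)$-clustering condition controls the error incurred by truncating $\mathcal{R}_j$ to a window of width $\mathcal{O}(\ell)$; and the Lieb--Robinson bound \Cref{LRbound} produces the residual $e^{-\mathcal{O}(\ell)}$ contribution from approximating the conditional structure of $\sigma(\beta,x)$ by a finite-range one.

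\textbf{Step 2 (telescoping the cost).} Since each step $\tau_{j-1}\to\tau_j$ acts by a channel on a region of size $\mathcal{O}(\ell)$, its marginal on the complement is unchanged, and the locality property of the quantum $W_1$ distance (namely $W_1(\mu,\nu)\le|R|\,\|\mu-\nu\|_1$ whenever $\tr_R\mu=\tr_R\nu$) gives $W_1(\tau_{j-1},\tau_j)\le \mathcal{O}(\ell)\,\|\tau_{j-1}-\tau_j\|_1$. I would split $\|\tau_{j-1}-\tau_j\|_1$ into an ``ideal'' part, bounded via Pinsker's inequality and the chain rule for relative entropy by $\sqrt{2D_j}$ with $D_j$ a local relative-entropy contribution attached to $Y_j\partial$, and an ``error'' part of size $\mathcal{O}(n(\delta(\mathcal{O}(\ell))+\zeta(\mathcal{O}(\ell))+e^{-\mathcal{O}(\ell)}))$ inherited from Step 1. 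Summing over the $K=\mathcal{O}(n/\ell)$ steps, the errors accumulate to the $n^2(\delta(\mathcal{O}(\ell))+\zeta(\mathcal{O}(\ell))+e^{-\mathcal{O}(\ell)})$ term, while Cauchy--Schwarz gives $\sum_j\mathcal{O}(\ell)\sqrt{D_j}\le\mathcal{O}(\ell)\sqrt{K}\,\sqrt{\sum_j D_j}$, which is $\mathcal{O}(\ell\sqrt n)\sqrt{D(\rho\|\sigma(\beta,x))}$ once the $D_j$ are summed.

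\textbf{Step 3 (approximate superadditivity of relative entropy) and conclusion.} It remains to control $\sum_j D_j$ by $D(\rho\|\sigma(\beta,x))$. For a reference state that is an exact Markov chain across the partition this is the standard superadditivity $D(\rho\|\sigma)\ge\sum_j D(\rho_{Y_j\partial}\|\sigma_{Y_j\partial})$ (a consequence of subadditivity of the von Neumann entropy, modulo the bounded overlap of the buffers), and the uniform $\delta(\ell)$-Markov and $\zeta(\ell)$-clustering conditions upgrade it to an inequality valid up to an additive $n(\delta(\mathcal{O}(\ell))+\zeta(\mathcal{O}(\ell)))$. Substituting into Step 2 and taking the infimum over $\ell$ gives the first claim; specialising to $\zeta(\ell),\delta(\ell)=\mathcal{O}(e^{-\xi\ell})$ and $\ell=\mathcal{O}(\xi^{-1}\log(n\epsilon^{-1}))$ turns the error term into $o(\epsilon n)$ and the prefactor into $\mathcal{O}(\log(n\epsilon^{-1})\sqrt n)$, which is \Cref{equ:final_w1_exp}.

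\textbf{Main obstacle.} I expect the delicate point to be Steps 1--2 rather than the relative-entropy bookkeeping: in the non-commuting case the recovery channels $\mathcal{R}_j$ are neither exactly localised nor exact, so the truncation radius must be tuned so that the Markov, clustering and Lieb--Robinson errors are simultaneously $\mathcal{O}(\ell)$-local and their per-step contribution stays $\mathcal{O}(n(\delta(\mathcal{O}(\ell))+\zeta(\mathcal{O}(\ell))+e^{-\mathcal{O}(\ell)}))$ --- in particular not degrading with $K$ --- while still leaving enough structure for the local relative entropies $D_j$ to telescope to $D(\rho\|\sigma(\beta,x))$. Making these two requirements compatible, i.e.\ closing the error accounting of the approximate Petz recovery uniformly along the chain of hybrids, is the heart of the argument.
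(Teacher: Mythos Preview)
Your proposal captures the right architecture --- block decomposition, recovery-map hybrids, Pinsker, Cauchy--Schwarz --- but the paper organises the argument differently in two places, and this removes exactly the difficulties you identify.

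First, the paper does \emph{not} force the hybrid chain to terminate at $\sigma(\beta,x)$. It builds a single channel $\mathcal{F}=\mathcal{F}_A\circ\mathcal{F}_B$ out of local Gibbs replacements and (untruncated) universal recovery maps, and writes
\[
W_1(\rho,\sigma(\beta,x))\le W_1(\rho,\mathcal{F}(\rho))+n\,\|\mathcal{F}(\rho)-\sigma(\beta,x)\|_1\,.
\]
All Markov, clustering and Lieb--Robinson errors live in the last term, which is bounded in one shot by a black-box appeal to \cite[Theorem~6]{Brandao_Kastoryano_2018}, producing the $n^2(\delta+\zeta+e^{-\mathcal{O}(\ell)})$ contribution. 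This eliminates the obstacle you flag at the end: there is no need to keep the per-step approximation error uniform along a chain of $K$ hybrids, because the hybrids never have to reach $\sigma$.

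Second, and more substantively, the paper bypasses your Step~3 entirely. Rather than bounding $\|\tau_{j-1}-\tau_j\|_1$ by $\sqrt{2D_j}$ with local relative entropies $D_j$ and then arguing an ``approximate superadditivity'' $\sum_j D_j\lesssim D(\rho\|\sigma)$, it applies Pinsker with the \emph{measured} relative entropy $D_{\mathbb{M}}$ and invokes the universal-recovery inequality
\[
D(\sigma_{AB}\|\omega_{AB})-D(\sigma_A\|\omega_A)\ge D_{\mathbb{M}}\!\big(\sigma_{AB}\,\big\|\,\Phi_{A\to AB}(\sigma_A)\big)
\]
of \cite{sutter2017multivariate,junge2018universal}. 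Telescoping this along the chain gives $\sum_j D_{\mathbb{M}}(\cdot)\le D(\rho\|\sigma(\beta,x))$ \emph{exactly}, with no error term and no use of the Markov condition on $\sigma$ at this stage. Your proposed inequality $\sum_j D(\rho_{Y_j\partial}\|\sigma_{Y_j\partial})\le D(\rho\|\sigma)+n(\delta+\zeta)$ is not a standard fact and would itself require an argument; the paper's route sidesteps it by choosing the recoverability form that telescopes automatically.
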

	
	\begin{proof}
		The proof is adapted from that of \cite[Theorem 4]{DePalma2022}. We first consider a bipartite quantum subsystem $AB\subset \Lambda$ and a joint quantum state $\omega_{AB}$ of $AB$. We then define the so-called quantum recovery map \cite{sutter2017multivariate,junge2018universal} by its action on a quantum state $\rho_A$ on region $A$:
		\begin{equation}\label{eq:Phirecov}
			\Phi_{A\to AB}(\rho_A) = \int_\mathbb{R}\omega_{AB}^{\frac{1-it}{2}}\,\omega_A^\frac{it-1}{2}\,\rho_A\,\omega_A^{-\frac{1+it}{2}}\,\omega_{AB}^{\frac{1+it}{2}}\,d\mu_0(t)\,,
		\end{equation}
		where $\mu_0$ is the probability distribution on $\mathbb{R}$ with density
		\begin{equation}
			d\mu_0(t) = \frac{\pi\,dt}{2\left(\cosh(\pi t)+1\right)}\,.
		\end{equation}
		If $A$ is in the state $\omega_A$, the recovery map $\Phi_{A\to AB}$ recovers the joint state $\omega_{AB}$, \emph{i.e.}, $\Phi_{A\to AB}(\omega_A) = \omega_{AB}$.
		The relevance of the recovery map comes from the recoverability theorem \cite{sutter2017multivariate}, which states that $\Phi_{A\to AB}$ can recover a generic joint state $\rho_{AB}$ from its marginal $\rho_A$ if removing the subsystem $B$ does not significantly decrease the relative entropy between $\rho$ and $\omega$.
		More precisely, for any quantum state $\sigma_{AB}$ of $AB$ we have
		\begin{equation}\label{eq:recovery}
			D(\sigma_{AB}\|\omega_{AB}) - D(\sigma_A\|\omega_A) \ge D_{\mathbb{M}}(\sigma_{AB}\|\Phi_{A\to AB}(\sigma_A))\,,
		\end{equation}
		where $D_{\mathbb{M}}$ denotes the measured relative entropy \cite{donald1986relative,petz1986sufficient,hiai1991proper,berta2017variational}
		\begin{align}
			D_{\mathbb{M}}(\sigma\|\omega):=\sup_{(\mathcal{X},M)}\,D(P_{\sigma,M}\|P_{\omega,M})\,,
		\end{align}
		where the supremum above is over all positive operator valued measures $M$ that map the input quantum state to a probability distribution on a finite set $\mathcal{X}$ with probability mass function given by $P_{\rho,M}(x)=\tr\rho M(x)$.
		
		Next, we split region $A$ into regions $A_1$ and $A_2$ such that $A_1$ shields $A_2$ away from $B$, and take $\sigma_{AB}:=\tr_{(AB)^c}(\sigma(\beta,x))$ and $\omega_{AB}=\sigma_{A_1B}\otimes\sigma_{A_2}$ In that case, \eqref{eq:recovery} becomes 
		\begin{align}
			I(B:A_2|A_1)_{\sigma}\ge D_{\mathbb{M}}\big(\sigma\|\Phi_{A_1\to A_1B}(\sigma_A)\big)\,,
		\end{align}
		where we also used that the state $\omega$ is a tensor product in the cut $A_1B-A_2$, so that $\Phi_{A\to B}=\Phi_{A_1\to B}$. 
		
		Next, we pave the chain $\Lambda$ into unions of intervals $A=\cup_{i=1}^M A_i$ and $B=\cup_{i=1}^MB_i$ such that $A_i\cap B_i\ne \emptyset$ and $B_i\cap A_{i+1}\ne \emptyset$. As in \cite{Brandao_Kastoryano_2018}, we then define the channel $\mathcal{F}:=\mathcal{F}_A\circ \mathcal{F}_B$ where $\mathcal{F}_B:=\bigotimes_{i}\sigma(\beta,x,B_i)\otimes\tr_{B_i}$ and $\mathcal{F}_A:=\prod_j \Phi_{A_i\backslash B\to A_i}\circ \tr_{A_i}$. In words, the channel $\mathcal{F}_B$ first prepares the Gibbs state in the region $B$, whereas $\mathcal{F}_A$ prepares the remaining of the Gibbs state onto region $A\backslash B$.
		Then, we have, for any state $\rho$
		\begin{align*}
			W_1(\rho,\sigma(\beta,x))&\le W_1(\rho,\mathcal{F}_B(\rho))+W_1(\mathcal{F}_B(\rho),\mathcal{F}_A\circ\mathcal{F}_B(\rho))+W_1(\mathcal{F}(\rho),\sigma(\beta,x))\\
			&\le \sum_{i}W_1(\sigma(A,i),\sigma(A,i+1))+\sum_i\,W_1(\sigma(B,i),\sigma(B,i+1))+n\,\|\mathcal{F}(\rho)-\sigma(\beta,x)\|_1\\
			&\overset{(1)}{\le} R\,\sum_i \,\|\sigma(A,i)-\sigma(A,i+1)\|_1+\|\sigma(B,i)-\sigma(B,i+1)\|_1+n\|\mathcal{F}(\rho)-\sigma(\beta,x)\|_1\,,
		\end{align*}
		where $\sigma(A,i):=\bigotimes_{j<i}\sigma(\beta,x,B_j)\otimes\tr_{B_j}(\rho)$ and $\sigma(B,i):=\bigotimes_{j<i}\Phi_{A_j\backslash B,\to A_j}\circ \tr_{A_j}(\mathcal{F}_B(\rho))$, and where $R=\max\{|B_i|,|A_i|\}$ so that $(1)$ follows from \cite[Propositiom 5]{de2021quantum}. Next, we use Pinsker's inequality, so
		\begin{align*}
			&W_1(\rho,\sigma(\beta,x))\\
			&\qquad \le R\sqrt{2}\,\sum_i \sqrt{D_{\mathbb{M}}(\sigma(A,i)\|\sigma(A,i+1))}+\sqrt{D_{\mathbb{M}}(\sigma(B,i)\|\sigma(B,i+1))}\,+n\,\|\mathcal{F}(\rho)-\sigma(\beta,x)\|_1\\
			&\qquad  \le 2R\sqrt{M}\,\left(\sum_{i}D_{\mathbb{M}}(\sigma(A,i)\|\sigma(A,i+1))+D_{\mathbb{M}}(\sigma(B,i)\|\sigma(B,i+1))\right)^{\frac{1}{2}}+n\|\mathcal{F}(\rho)-\sigma(\beta,x)\|_1\\
			&\qquad \overset{(1)}{\le} 2R\sqrt{M} \,\sqrt{D(\rho\|\sigma(\beta,x))}+n\|\mathcal{F}(\rho)-\sigma(\beta,x)\|_1\\
			&\qquad\overset{(2)}{\le} 2R\sqrt{M}\, \sqrt{D(\rho\|\sigma(\beta,x))} + n^2\big(\delta(\mathcal{O}(R))+\zeta(\mathcal{O}(R))+C_1e^{-c_2 R}\big)\,,   
		\end{align*}
		where $(1)$ follows from multiple uses of \eqref{eq:recovery} as well as the sub-additivity of the relative entropy under tensor products in its second argument, whereas $(2)$ comes from \cite[Theorem 6]{Brandao_Kastoryano_2018}. The result follows.
	\end{proof}
	
	We can then easily turn the previous statement into one about learning the state $\sigma(\beta,x)$:
	\begin{corollary}[$W_1$ learning from the uniform Markov condition]\label{coroapproxmarkov}
		Under the same conditions of \Cref{prop:generalized_transportation} assume further that $\zeta(l),\delta(l)=\mathcal{O}(e^{-\xi l})$. Then $\mathcal{O}(\epsilon^{-4}\operatorname{polylog}(n\delta^{-1}))$ samples of $\sigma(\beta,x)$ suffice to learn a state $\sigma(\beta,x')$ s.t. with probability at least $1-\delta$
		\begin{align}
			W_1(\sigma(\beta,x),\sigma(\beta,x'))=\mathcal{O}(\epsilon n).
		\end{align}
	\end{corollary}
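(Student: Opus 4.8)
The plan is to combine three ingredients, two of them already in place: an empirical estimate of the local energy densities $e_{k,l}(x)=\tr[h_{k,l}\,\sigma(\beta,x)]$, the maximum-entropy estimator $x'$ (the minimiser over $[-1,1]^m$ of $L(y)=\log Z_\beta(y)+\beta\sum_{k,l}y_{k,l}\,\widetilde e_{k,l}$ from \Cref{Theorem:l2_Norm_Estimate}), and the generalised transportation-cost inequality \Cref{prop:generalized_transportation}. The guiding observation is that \eqref{equ:final_w1_exp} already supplies the ``$W_1$ strong convexity'' of the log-partition function one would otherwise be missing, so — unlike the $\ell_2$ analysis of \Cref{Theorem:l2_Norm_Estimate} — the argument never invokes the $\ell_2$ strong convexity constant $\alpha_2$, which is only known to be $\Omega(n^{-1})$. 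It therefore suffices to control $D(\sigma(\beta,x)\|\sigma(\beta,x'))$ using nothing but the trivial diameter bound $\|x-x'\|_{\ell_1}\le 2m$ of the parameter box.

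First I would estimate the $m=n\ell$ quantities $e_{k,l}(x)$. Each $h_{k,l}$ is supported on a ball of radius $r_0=\mathcal{O}(1)$ with $\|h_{k,l}\|\le h=\mathcal{O}(1)$, so they can all be estimated simultaneously from $N$ copies of $\sigma(\beta,x)$ — for instance via classical shadows, or by partitioning the terms into $\mathcal{O}(1)$ families of pairwise disjoint support and taking empirical means — and a Hoeffding bound with a union bound over the $m$ terms yields $\widetilde e$ with $\|\widetilde e-e(x)\|_{\ell_\infty}\le\eta$ with probability at least $1-\delta$ provided $N=\mathcal{O}\big(\eta^{-2}\log(m\delta^{-1})\big)=\mathcal{O}\big(\eta^{-2}\operatorname{polylog}(n\delta^{-1})\big)$. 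On this event let $x'$ minimise $L$. Expanding $\log\sigma(\beta,\cdot)=-\beta H(\cdot)-\log Z_\beta(\cdot)$ and using $H(y)=\sum_{k,l}y_{k,l}h_{k,l}$ gives
\begin{align*}
D(\sigma(\beta,x)\|\sigma(\beta,x'))=\beta\sum_{k,l}(x'_{k,l}-x_{k,l})\,e_{k,l}(x)+\log Z_\beta(x')-\log Z_\beta(x)\,,
\end{align*}
and since $L(x')\le L(x)$ one has $\log Z_\beta(x')-\log Z_\beta(x)\le\beta\sum_{k,l}(x_{k,l}-x'_{k,l})\,\widetilde e_{k,l}$; adding the two relations,
\begin{align*}
D(\sigma(\beta,x)\|\sigma(\beta,x'))\le\beta\sum_{k,l}(x'_{k,l}-x_{k,l})\big(e_{k,l}(x)-\widetilde e_{k,l}\big)\le\beta\,\|x'-x\|_{\ell_1}\,\|e(x)-\widetilde e\|_{\ell_\infty}\le 2\beta m\eta=\mathcal{O}(n\eta)\,.
\end{align*}

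Finally I would feed this into \Cref{prop:generalized_transportation}, applied with the \emph{learned} state $\sigma(\beta,x')$ as reference and $\rho=\sigma(\beta,x)$: since $\zeta(\ell),\delta(\ell)=\mathcal{O}(e^{-\xi\ell})$, \eqref{equ:final_w1_exp} gives
\begin{align*}
W_1(\sigma(\beta,x),\sigma(\beta,x'))\le\mathcal{O}\big(\log(n\epsilon^{-1})\sqrt{n}\big)\,\sqrt{D(\sigma(\beta,x)\|\sigma(\beta,x'))}+o(\epsilon n)\le\mathcal{O}\big(n\sqrt{\eta}\,\log(n\epsilon^{-1})\big)+o(\epsilon n)\,.
\end{align*}
Choosing $\eta=\Theta\big(\epsilon^2/\log^2(n\epsilon^{-1})\big)$ makes the first term $\mathcal{O}(\epsilon n)$, and substituting this $\eta$ into the estimation cost yields $N=\mathcal{O}\big(\epsilon^{-4}\operatorname{polylog}(n\delta^{-1})\big)$, as claimed. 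The point to be careful about — rather than a deep obstacle, since all the analytic heavy lifting already sits inside \Cref{prop:generalized_transportation} — is that the transportation-cost inequality must be used with $\sigma(\beta,x')$ in its second slot, so the uniform clustering and uniform Markov hypotheses have to be imposed on the whole parametrised family $\{\sigma(\beta,y)\}_y$ and not merely on the unknown $\sigma(\beta,x)$; beyond that it is only bookkeeping of the $\operatorname{polylog}$ factors and checking that the block size $\sim\log(n\epsilon^{-1})$ producing the $o(\epsilon n)$ slack in \eqref{equ:final_w1_exp} keeps that term genuinely below $\epsilon n$.
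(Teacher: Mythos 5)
Your proposal is correct and follows essentially the same route as the paper: the maximum-entropy estimator, the cancellation $D(\sigma(\beta,x)\|\sigma(\beta,x'))\le\beta\sum_{k,l}(x'_{k,l}-x_{k,l})(e_{k,l}(x)-\widetilde e_{k,l})=\mathcal{O}(n\eta)$ via the trivial $\ell_1$ diameter bound, and then the generalised transportation-cost inequality with $\eta=\Theta(\epsilon^2/\operatorname{polylog})$ to convert the relative-entropy bound into $W_1=\mathcal{O}(\epsilon n)$ at cost $\mathcal{O}(\epsilon^{-4}\operatorname{polylog}(n\delta^{-1}))$ samples. Your explicit remark that the transportation-cost inequality must be invoked with the learned state $\sigma(\beta,x')$ in its second slot, so that uniform clustering and Markovianity must hold over the whole parametrised family, is a correct refinement of a point the paper leaves implicit.
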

	\begin{proof}

		We simply need to adapt the proof of \Cref{eq_TCparameters}. First, we recall that 
		\begin{align*}
			D(\sigma(\beta,x)\|\sigma(\beta,\hat{x}))=\beta\sum_{k,l}(\hat{x}_{k,l}-x_{k,l})\,e_{k,l}(x)+\log Z_\beta(\hat{x})-\log Z_\beta(x)\,.
		\end{align*}
		Together with \Cref{equ:final_w1_exp}, we have the following approximate strong convexity bound for the log partition function in the Wasserstein topology:
		\begin{align*}
			W_1(\sigma(\beta,x),\sigma(\beta,\hat{x}))^2\le \mathcal{O}(\log(n\epsilon^{-1})^2 {n})\,D(\rho\|\sigma(\beta,x))+o(\epsilon^2 n^2)
		\end{align*}
		Combining with \Cref{upperboundZ} and assuming that $\|\widetilde{e}-e(x)\|_{\ell_\infty}\le \eta$, we get 
		\begin{align*}
			W_1(\sigma(\beta,x),\sigma(\beta,\hat{x}))^2&\le \mathcal{O}(\log(n\epsilon^{-1})^2 {n})\,\beta\,\sum_{k,l}(\hat{x}_{k,l}-x_{k,l})(e_{k,l}(x)-\widetilde{e}_{k,l}) +o(\epsilon^2 n^2)\\
			&\le  \mathcal{O}(\log(n\epsilon^{-1})^2 {n^2})\,\eta\,+o(\epsilon^2 n^2)\,.
		\end{align*}
		Above, we have managed to reduce the problem to that estimating the coefficients $e_{k,l}$ to precision $\eta=\mathcal{O}(\epsilon^2/\operatorname{polylog}(n))$. This can be done with probability $1-\delta$ given $\mathcal{O}(\epsilon^{-4}\operatorname{polylog}(n\delta^{-1}))$ copies of the state $\sigma(\beta,x)$ through various methods (see e.g.~\cite[Appendix A]{rouze2021learning} or \cite[Corrolary 27]{anshu2021sample} for more details).

		
		
	\end{proof}
	
	\begin{remark} The main difference between the approach outlined in the proposition above and that of \Cref{cor:continuity_W1} is that, besides requiring the additional assumption of uniform Markovianity, it has an $\epsilon^{-4}$ scaling with the precision instead of $\epsilon^{-2}$. However, it completely bypasses the need for a good algorithm to learn the parameters $x$, which is required to apply \Cref{cor:continuity_W1}. Indeed, note that all that was required was to find another Gibbs state that approximates the local expectation values $e_{k,l}(x)$, which can be achieved through the maximum entropy principle. However, to the best of our knowledge, it is not known how to run the maximum entropy principle efficiently for the classes of states considered in \Cref{cor:continuity_W1} on a classical computer. On the other hand, the results of~\cite{Brandao_Kastoryano_2018} show that we can run maximum entropy efficiently for the states under consideration for \Cref{cor:continuity_W1} on a quantum computer.
		
		Conversely, if combined with an $\ell_2$ strong convexity guarantee on the partition function, the $\epsilon^{-2}$ scaling can be recovered (see e.g.~\cite[Theorem A.1]{rouze2021learning} for more details). But, since we are currently only able to get this guarantee for commuting Hamiltonians or in the high-temperature regimes, \Cref{cor:continuity_W1} provides a more direct path than the strategy previously exhibited in \cite{rouze2021learning}.
	\end{remark}

	\section{Algorithm for learning observables in high temperature phases}\label{seclearningalgogibbs}
	
	Next, we assume we are given different Gibbs states $\sigma(\beta,x)$ where $x$ is sampled according to the uniform distribution $U$ over $x\in [-1,1]^m$, and wish to learn the expectation value of an unknown observable for all values of $x\in [-1,1]^m$.
	We assume that the Gibbs states in the interval $[-1,1]^m$ has exponentially decaying correlations everywhere, which can be thought of as defining a continuous phase of matter.
	For commuting Hamiltonians, this relationship can be made more precise \cite{harrow2020classical}.
	
	As before, for pedagogical reasons we will first solve this problem for phases that contain the maximally mixed state in them and, thus, we can compute local expectations efficiently on a classical computer. But then we will show how these results easily generalize to phases that are more computationally involved by once again shifting the center of the phase. We will discuss these more involved examples in~\Cref{sec:beyond_high_T}.

	During a training stage, we pick $N$ points $Y_1,\dots , Y_N\sim U$ independently distributed uniformly at random in $[-1,1]^m$ and are given access to the Gibbs states $\sigma(\beta,Y_j)$. 
	Next, fix $r\in\mathbb{N}$. 
	Given an observable $O=\sum_{i=1}^M O_i$, we define $S_i=\operatorname{supp}(O_i)$ and 
	for each $S_i$ there is a ball of diameter at most $k_0$ containing $S_i$, and $S_i(r):=\{j\in\Lambda|\operatorname{dist}(j,S_i)\le r\}$. 
	We construct for any $x\in [-1,1]^m$ the estimator 
	\begin{align}\label{equ:defin_estimator}
		\hat{f}_O(x)=\sum_{i=1}^M\tr \big[O_i\, \sigma(\beta, \hat{Y}_i(x))\big]\,,\quad \text{ with }\quad \hat{Y}_i(x)=\operatorname{argmin}_{{Y}_k}\|x|_{\mathcal{S}_i(r)}-{Y}_k|_{\mathcal{S}_i(r)}\|_{\ell^\infty}\,,
	\end{align}
	where we recall that we denote by $x_{\mathcal{S}_{i}(r)}$ the concatenation of vectors $x_j$ corresponding to interactions $h_j$ supported on regions intersecting $S_i(r)$. In words, we approximate the expectation value of $O_i$ by that of the Gibbs state whose parameters in a region around $S_i$ are the closest to the state of interest. We also denote $\mathcal{S}_i\equiv \mathcal{S}_i(0)$.

	\Cref{Lemma:Local_Perturbation} demonstrates that to measure a particular observable, it is sufficient to consider the parameters only spatially local to it.
	One important technical tool for that is the following Lemma from~\cite{brandao2017quantum}:
	\begin{lemma}[Lemma 16 of ~\cite{brandao2017quantum}]\label{lem:Gibbs_perturbation}
		Let $H,H'$ be Hermitian matrices. Then:
		\begin{align}
			\Big\|\frac{e^{-H_1}}{\tr\left[e^{-H_1}\right]}-\frac{e^{-H_2}}{\tr\left[e^{-H_2}\right]}\Big\|\leq 2(e^{\|H_1-H_2\|_\infty}-1).
		\end{align}
	\end{lemma}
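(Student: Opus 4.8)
The plan is to reduce the whole statement to a one‑parameter interpolation plus a single non‑commutative Hölder estimate. Write $\Delta:=H_1-H_2$, $\epsilon:=\|\Delta\|_\infty$, and set $Z_i:=\tr[e^{-H_i}]$, $\sigma_i:=e^{-H_i}/Z_i$. Since the operator norm is dominated by the trace norm, it suffices to establish the bound in $\|\cdot\|_1$ irrespective of which norm is meant. I would interpolate via $H_s:=H_2+s\Delta$, $Z_s:=\tr[e^{-H_s}]$, $\sigma_s:=e^{-H_s}/Z_s$ for $s\in[0,1]$ (so $\sigma_0=\sigma_2$, $\sigma_1$ as before), and bound $\int_0^1\|\tfrac{d}{ds}\sigma_s\|_1\,ds$.

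First I would differentiate. Duhamel's formula gives $\tfrac{d}{ds}e^{-H_s}=-K_s$ with $K_s:=\int_0^1 e^{-uH_s}\Delta e^{-(1-u)H_s}\,du$, hence $\dot Z_s=-\tr[e^{-H_s}\Delta]$ and
\begin{align}
\frac{d}{ds}\sigma_s=-\frac{K_s}{Z_s}+\sigma_s\,\tr[\sigma_s\Delta].
\end{align}
The second term has trace norm at most $|\tr[\sigma_s\Delta]|\le\|\Delta\|_\infty=\epsilon$. For the first term, the one step that is genuinely more than bookkeeping is to estimate each integrand by the three-factor Schatten--Hölder inequality with exponents $1/u,\infty,1/(1-u)$:
\begin{align}
\|e^{-uH_s}\Delta e^{-(1-u)H_s}\|_1\le\|e^{-uH_s}\|_{1/u}\,\|\Delta\|_\infty\,\|e^{-(1-u)H_s}\|_{1/(1-u)}=Z_s^{u}\,\epsilon\,Z_s^{1-u}=\epsilon\,Z_s,
\end{align}
using $\|X\|_p=(\tr X^p)^{1/p}$ for $X\succeq0$. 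This yields $\|K_s\|_1\le\epsilon Z_s$, hence $\|\tfrac{d}{ds}\sigma_s\|_1\le2\epsilon$ and therefore $\|\sigma_1-\sigma_2\|_1\le2\epsilon$, which is even slightly stronger than the claimed $2(e^{\|H_1-H_2\|_\infty}-1)$ since $\epsilon\le e^\epsilon-1$.

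If one wants to land exactly on $2(e^\epsilon-1)$ and avoid differentiating the normalised state, an equivalent route is to split $\sigma_1-\sigma_2=Z_1^{-1}(e^{-H_1}-e^{-H_2})+e^{-H_2}(Z_1^{-1}-Z_2^{-1})$; the first piece is bounded by Duhamel together with the same Hölder estimate and Golden--Thompson in the form $Z_s=\tr[e^{-H_1+(1-s)\Delta}]\le e^{(1-s)\epsilon}Z_1$, which integrates to $\epsilon\int_0^1 e^{(1-s)\epsilon}\,ds=e^\epsilon-1$, while the second piece equals $|Z_2/Z_1-1|\le e^\epsilon-1$ because Golden--Thompson forces $Z_2/Z_1\in[e^{-\epsilon},e^\epsilon]$. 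Either way, the entire content of the lemma is that non-commutativity of $H_1,H_2$ costs nothing here as long as one never tries to order exponentials and instead bounds products of exponentials through Schatten--Hölder; I expect that to be the only point requiring care, and it is entirely standard.
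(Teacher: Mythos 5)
Your proof is correct, and in fact there is nothing in the paper to compare it against: the lemma is imported verbatim from the cited reference (Lemma 16 of Brand\~ao et al.) and is never proved in this manuscript, where it is only invoked to control $\|\sigma(\beta,x|_{\mathcal{S}_i(r)})-\sigma(\beta,\hat Y_i(x)|_{\mathcal{S}_i(r)})\|_1$. Both of your routes check out. The interpolation argument is sound: Duhamel gives $\tfrac{d}{ds}e^{-H_s}=-K_s$, the three-exponent Schatten--H\"older step $\|e^{-uH_s}\Delta e^{-(1-u)H_s}\|_1\le Z_s^{u}\,\epsilon\,Z_s^{1-u}$ is valid since $u+0+(1-u)=1$ and $(e^{-uH_s})^{1/u}=e^{-H_s}\succeq 0$, and the normalisation term contributes at most $|\tr[\sigma_s\Delta]|\le\epsilon$; integrating yields $\|\sigma_1-\sigma_2\|_1\le 2\epsilon$, which is strictly stronger than the stated $2(e^{\epsilon}-1)$ and suffices everywhere the paper uses the lemma. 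Your second route, splitting off the change in normalisation and invoking Golden--Thompson to get $Z_s\le e^{(1-s)\epsilon}Z_1$ and $Z_2/Z_1\in[e^{-\epsilon},e^{\epsilon}]$, reproduces the constant $2(e^{\epsilon}-1)$ exactly, which is closer in spirit to how such perturbation bounds are usually derived in the Gibbs-sampling literature. Your preliminary remark that it suffices to work in trace norm is also the right reading of the ambiguous $\|\cdot\|$ in the statement, since $\|X\|_\infty\le\|X\|_1$ and the paper only ever applies the bound in $\|\cdot\|_1$.
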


	Next, we show that if we are given a number of samples uniformly sampled from $[-1,1]^m$, we can construct an estimator for the observable to high precision.

	\begin{proposition}\label{approxbetax}
		Assuming the exponential decay of correlations in \Cref{decaycorr}, the estimator $\hat{f}_O(x):=\sum_{i}\,\tr[O_i\,\sigma(\beta,\hat{Y}_i(x))]$ satisfies the bound
		\begin{align*}
			\sup_{x\in[-1,1]^m}\, |f_O(x)-\hat{f}_{O}(x)|\le \epsilon\,\sum_{i=1}^M\|O_i\|_\infty\,,
		\end{align*}
		with probability at least $1-\delta$, whenever
		\begin{align*}
			N=\Big(\frac{\gamma}{2}\Big)^{-[2(r+r_0+k_0)]^D\ell}{\log\Big(\frac{M}{\delta}\Big)+[2(r+r_0+k_0)]^D\ell\log\Big(\frac{2}{\gamma}\Big)}\Big(\frac{\gamma}{2}\Big)^{-[2(r+r_0+k_0)]^D\ell}
		\end{align*}
		with
		\begin{align*}
			&  r=\left\lceil 2\xi\log\left(\frac{16 \beta (C+c')\,h\,(2r_0+k_0)^D (D-1)!(2\xi)^{D-1}\,D^{D-1}}{\epsilon\, e^{\frac{k_0+1}{2\xi}}(1-e^{-\frac{1}{2\xi}})}\right)\right\rceil\,,\\
			& \gamma=\frac{\epsilon\,e^{-[2(r+k_0)]^D(3\log 2+5\beta h)}}{2[2(r+k_0)]^Dh\ell}\,.
		\end{align*}
	\end{proposition}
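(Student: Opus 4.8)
The plan is to control $\sup_{x}|f_O(x)-\hat f_O(x)|$ term by term. Writing $f_{O_i}(y):=\tr[O_i\,\sigma(\beta,y)]$, the triangle inequality gives $|f_O(x)-\hat f_O(x)|\le\sum_{i=1}^M|f_{O_i}(x)-f_{O_i}(\hat Y_i(x))|$, so it suffices to bound each summand by $\epsilon\|O_i\|_\infty$. For a fixed $i$ I would insert the two ``localised'' parameter vectors $x|_{\mathcal{S}_i(r)}$ and $\hat Y_i(x)|_{\mathcal{S}_i(r)}$ (padded with zeros outside $\mathcal{S}_i(r)$) as intermediate points, so that the summand is at most
\[
|f_{O_i}(x)-f_{O_i}(x|_{\mathcal{S}_i(r)})|+|f_{O_i}(x|_{\mathcal{S}_i(r)})-f_{O_i}(\hat Y_i(x)|_{\mathcal{S}_i(r)})|+|f_{O_i}(\hat Y_i(x)|_{\mathcal{S}_i(r)})-f_{O_i}(\hat Y_i(x))|.
\]
The first and third terms are precisely the content of \Cref{Lemma:Local_Perturbation} (Gibbs local indistinguishability): each is at most $C_1 e^{-r/(2\xi)}\|O_i\|_\infty$. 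The value of $r$ in the statement is chosen (using the explicit form of $C_1$ read off from the proof of \Cref{Lemma:Local_Perturbation}) exactly so that these two contributions together are at most $\tfrac{\epsilon}{2}\|O_i\|_\infty$.

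For the middle term I would use that the interactions are linear in the parameters and vanish at $0$, so $\sigma(\beta,x|_{\mathcal{S}_i(r)})$ and $\sigma(\beta,\hat Y_i(x)|_{\mathcal{S}_i(r)})$ are Gibbs states of Hamiltonians supported on a ball of at most $[2(r+r_0+k_0)]^D$ sites, and the difference of these Hamiltonians has operator norm at most $h\,\ell\,[2(r+r_0+k_0)]^D\,\|x|_{\mathcal{S}_i(r)}-\hat Y_i(x)|_{\mathcal{S}_i(r)}\|_{\ell^\infty}$. Applying \Cref{lem:Gibbs_perturbation} (Lemma~16 of \cite{brandao2017quantum}) with these Hamiltonians scaled by $\beta$ bounds the middle term by $2\|O_i\|_\infty\big(e^{\beta h\ell[2(r+r_0+k_0)]^D\|x|_{\mathcal{S}_i(r)}-\hat Y_i(x)|_{\mathcal{S}_i(r)}\|_{\ell^\infty}}-1\big)$. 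Hence it is enough that the nearest training point's restriction to $\mathcal{S}_i(r)$ lie within $\ell^\infty$-distance $\gamma$ of $x|_{\mathcal{S}_i(r)}$, with $\gamma$ as in the statement chosen to force this quantity below $\tfrac{\epsilon}{2}\|O_i\|_\infty$; and since $\hat Y_i(x)$ is by definition the $\ell^\infty$-nearest training point on $\mathcal{S}_i(r)$, the existence of \emph{one} such good point already guarantees $\|x|_{\mathcal{S}_i(r)}-\hat Y_i(x)|_{\mathcal{S}_i(r)}\|_{\ell^\infty}\le\gamma$. Summing over $i$ yields the claimed bound $\epsilon\sum_i\|O_i\|_\infty$ on the event that such a good training point exists for every $i$ and every target $x$.

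It then remains to show that this ``good event'' holds with probability at least $1-\delta$ for the stated $N$. Here I would set $\dim:=[2(r+r_0+k_0)]^D\ell$, partition the parameter cube $[-1,1]^{\dim}$ into $(2/\gamma)^{\dim}$ axis-aligned cells of side $\gamma$, and observe that two points in a common cell are within $\ell^\infty$-distance $\gamma$, so it suffices that each cell (for each of the $M$ regions $\mathcal{S}_i(r)$) is hit by at least one sample. Since each $Y_k\sim U([-1,1]^m)$ has uniform marginal on the $\dim$ coordinates of $\mathcal{S}_i(r)$, a single sample lands in a prescribed cell with probability $(\gamma/2)^{\dim}$, so a fixed cell is missed by all $N$ samples with probability $(1-(\gamma/2)^{\dim})^N\le e^{-N(\gamma/2)^{\dim}}$; a union bound over the $M\,(2/\gamma)^{\dim}$ (region, cell) pairs gives failure probability at most $M(2/\gamma)^{\dim}e^{-N(\gamma/2)^{\dim}}$, which is $\le\delta$ precisely when $N(\gamma/2)^{\dim}\ge\log(M/\delta)+\dim\log(2/\gamma)$, i.e. for the $N$ in the statement.

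The crux of the argument — and the step I expect to be the main obstacle — is exactly this last covering step: because $x$ ranges over a continuum while only finitely many samples are available, one cannot net all of $[-1,1]^m$, and the scheme works only because each local observable feels only the low-dimensional slice $\mathcal{S}_i(r)$ of parameters near its support, which has dimension $\dim=\operatorname{polylog}(\epsilon^{-1})^{D}$ thanks to the choice $r\sim\log(\epsilon^{-1})$ forced by \Cref{Lemma:Local_Perturbation}. This is where the anti-concentration (uniformity of the marginal) is used and where the quasi-polynomial-in-$\epsilon^{-1}$, logarithmic-in-$M$ sample complexity comes from, since $(\gamma/2)^{-\dim}=e^{\operatorname{polylog}(\epsilon^{-1})}$. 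The remaining difficulty is bookkeeping rather than conceptual: one must pin down the interdependent choices of $r$ (large enough to kill the two localisation errors, small enough to keep $\dim$ and hence $N$ controlled) and $\gamma$ (small enough for \Cref{lem:Gibbs_perturbation}, not so small that $(\gamma/2)^{-\dim}$ blows up) so that all three error contributions sit comfortably below $\epsilon\|O_i\|_\infty$, which is routine once the three-term split above is set up.
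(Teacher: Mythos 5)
Your proposal is correct and follows essentially the same route as the paper's proof: the same three-term split through the localised parameter vectors $x|_{\mathcal{S}_i(r)}$ and $\hat Y_i(x)|_{\mathcal{S}_i(r)}$, with \Cref{Lemma:Local_Perturbation} controlling the two localisation errors, \Cref{lem:Gibbs_perturbation} controlling the middle term via the $\ell^\infty$-closeness of the nearest sample, and the cube-partition/coupon-collector union bound yielding the stated $N$. The only differences are in constant-level bookkeeping (e.g.\ which volume factor appears in the exponent of the perturbation bound), which do not affect the argument.
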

	Before we prove this result, let us simplify its statement. First, $r=\Theta(\log(\epsilon^{-1}))$, $\gamma=\Theta\Big(\epsilon\, \frac{e^{-\log(\epsilon^{-1})^D}}{\log(\epsilon^{-1})^D}\Big)$, so that the number of samples needed is asymptotically
	\begin{align*}
		N=\Theta \left(\log\Big(\frac{M}{\delta}\Big)\, e^{\operatorname{polylog}(\epsilon^{-1})} \right)\,.
	\end{align*}
	
	\begin{proof}
		We fix $O_i$ and $r>0$, and restrict ourselves to the subset of parameters $x|_{\mathcal{S}_i(r)}$. The number of parameters in that subset is bounded by the volume $V(r+r_0+k_0)$ of the ball $S_i(r+r_0)$ times the number $\ell$ of parameters per interaction. We denote it by $m_r:=V(r+r_0+k_0)\ell$. Next, we partition the parameter space $[-1,1]^{m_r}$ onto cubes of side-size $\gamma\in (0,1)$. 
		By the coupon collector's problem, we have that the probability that none of the sub-vectors $Y_j|_{\mathcal{S}_i(r)}$ is within one of those cubes is upper bounded by $e^{-N(\gamma/2)^{m_r}+m_r\log(2/\gamma)}$.
		By union bound, the probability that for any $i\in[M]$, any cube is visited by at least one sub-vector $Y_j|_{\mathcal{S}_i(r)}$ is lower bounded by $1-\delta$, $\delta:=Me^{-N(\gamma/2)^{m_r}+m_r\log(2/\gamma)}$. 
		In other words, with probability $1-\delta$ there is a $\hat{Y}_i(x)|_{\mathcal{S}_i(r)}$ in the $N$ samples satisfying 
		\begin{align}\label{xminusYSir}
			\|x|_{\mathcal{S}_i(r)}-\hat{Y}_i(x)|_{\mathcal{S}_i(r)}\|_{\ell_\infty}\le \gamma
		\end{align}
		for all $i\in[M]$.

		Denoting $\hat{f}_{O_i}(x):=\tr\big[O_i\,\sigma(\beta,\hat{Y}_i(x)|_{\mathcal{S}_i(r)})\big]$, we next control $|f_{O_i}(x|_{\mathcal{S}_i(r)})-\hat{f}_{O_i}(x)|$. 
		This is easily done in terms of the $\ell_\infty$ norm distance between $\hat{Y}_i(x)|_{\mathcal{S}_i(r)}$ and $x|_{\mathcal{S}_i(r)}$ by Lipschitz estimate: writing $H(x|_{\mathcal{S}_i(r)})\equiv H_1$ and $H(\hat{Y}_i(x)|_{\mathcal{S}_i(r)})\equiv H_2$, 
		\begin{align*}
			\big|f_{O_i}(x|_{\mathcal{S}_i(r)})-\hat{f}_{O_i}(x)\big|&\le \|O_i\|_\infty\,\|\sigma(\beta,x|_{\mathcal{S}_i(r)})-\sigma(\beta,\hat{Y}_i(x)|_{\mathcal{S}_i(r)})\|_1\\
			&\le \|O_i\|_\infty\,\left\|\frac{e^{-\beta H_1}}{\tr[e^{-\beta H_1}]}-\frac{e^{-\beta H_2}}{\tr[e^{-\beta H_2}]}\right\|_1\leq 2\|O_i\|_\infty(e^{\|H_2-H_1\|_\infty}-1),
		\end{align*}
		by \Cref{lem:Gibbs_perturbation}.
		We can then further bound the operator norm in the exponential using the fact that $H_1-H_2$ is only supported on $S_i(r)$ to obtain:
		\begin{align*}
			\big|f_{O_i}(x|_{\mathcal{S}_i(r)})-\hat{f}_{O_i}(x)\big|\le 2\|O_i\|_{\infty}(e^{2\beta V(r+k_0)h\gamma}-1)
			.
		\end{align*}
		where the last line comes from \Cref{xminusYSir}. 
		Combining this with \Cref{Lemma:Local_Perturbation} and local indistinguishability,
		from which we have that 
		\begin{align*}
			|\tr\big[O_i\,(\sigma(\beta,x)-\sigma(\beta,x|_{\mathcal{S}_i(r)}))\big]|,\,    |\tr\big[O_i\,(\sigma(\beta,\hat{Y}_i(x))-\sigma(\beta,\hat{Y}_i(x)|_{\mathcal{S}_i(r)}))\big]|\le C_1\,e^{-\frac{r}{2\xi}}\,\|O_i\|_\infty,
		\end{align*}
		we have proven that for all $x\in[-1,1]^m$,
		\begin{align}\label{boundgibbslearning}
			|f_{O_i}(x)-\hat{f}_{O_i}(x)|\le \big(2C_1e^{-\frac{r}{2\xi}}\,+C_2(r)\gamma\big)\,\|O_i\|_\infty\,,
		\end{align}
		where $C_2(r):=2^{2V(r+k_0)+1}e^{5\beta V(r+k_0)h}V(r+k_0)h\ell$. Now, the volume $V(s)$ of a ball of radius $s$ in $\Lambda$ is equal to 
		\begin{align*}
			V(s)=\sum_{a\le s} \binom{a+D-1}{D-1}\le (2s)^D\,.
		\end{align*}
		We then fix $r$ so that $2C_1e^{-r/2\xi}\le \epsilon/2$, $\gamma$ so that $C_2(r)\gamma\le 2^{2^{D+1}(r+k_0)^D+1}e^{5\beta 2^D(r+k_0)^Dh}2^D(r+k_0)^Dh\ell\le \epsilon/2$, and therefore a lower bound on $N$ arises from the constraint $$\delta:=Me^{-N(\gamma/2)^{m_r}+m_r\log(2/\gamma)}\,.$$ Namely:
		\begin{align*}
			&r=\left\lceil 2\xi\log\left(\frac{16 \beta (C+c')\,h\,(2r_0+k_0)^D (D-1)!(2\xi)^{D-1}\,D^{D-1}}{\epsilon\, e^{\frac{k_0+1}{2\xi}}(1-e^{-\frac{1}{2\xi}})}\right)\right\rceil\,,\\
			&\gamma=\frac{\epsilon\,e^{-[2(r+k_0)]^D(3\log 2+5\beta h)}}{2[2(r+k_0)]^Dh\ell}\,.
		\end{align*}
		Therefore,
		\begin{align*}
			N=\Big(\frac{\gamma}{2}\Big)^{-[2(r+r_0+k_0)]^D\ell}{\log\Big(\frac{M}{\delta}\Big)+[2(r+r_0+k_0)]^D\ell\log\Big(\frac{2}{\gamma}\Big)}\Big(\frac{\gamma}{2}\Big)^{-[2(r+r_0+k_0)]^D\ell}
		\end{align*}
		copies suffice for the approximation claimed to hold with probability $1-\delta$.
		
	\end{proof}


	At this stage, we use the shadow tomography protocol to get classical shadows $\widetilde{\sigma}(\beta,\hat{Y}_i(x))$ for each of the states $\sigma(\beta,\hat{Y}_i(x))$. Since only one copy of each $\widetilde{\sigma}(\beta,\hat{Y}_i(x))$ is available, its reconstruction is likely going to be too noisy. 
	Instead, we will use several non-identical copies $\sigma(\beta,\hat{Y}_i(x))$ which almost coincide on large enough regions surrounding the supports of observables $O_i$ in order to improve the precision of the estimation of $\hat{f}_O$. We first develop the following Gibbs shadow tomography protocol which we believe to be of independent interest.
	
	Consider a Gibbs state $\sigma(\beta,x)$ and a family $\sigma(\beta,x_1),\dots,\sigma(\beta ,x_N)$ of Gibbs states with the promise that for any $i\in [M]$ there exist $t$ vectors $x_{i_1},\dots, x_{i_t}$ such that $\max_{j\in[t]}\|x|_{\mathcal{S}_i(r)}-x_{i_j}|_{\mathcal{S}_i(r)}\|_\infty \le \gamma$. 
	We run the shadow protocol and construct product operators $\widetilde{\sigma}(\beta, x_1),\dots,\widetilde{\sigma}(\beta,x_N)$. 
	Then for any ball $B$ of radius $k_0$, we select the shadows $\widetilde{\sigma}(\beta,x_{i_1}),\dots \widetilde{\sigma}(\beta,x_{i_t})$ and construct the empirical average
	\begin{align*}
		\widetilde{\sigma}_{B}(x):=\frac{1}{t}\sum_{j=1}^t\,\tr_{B^c}\big[\widetilde{\sigma}(\beta,x_{i_j})\big]\,.
	\end{align*}
	
	\begin{proposition}[Robust shadow tomography for Gibbs states]\label{propgibbsshadow}
		Fix $\epsilon,\delta\in(0,1)$. In the notations of \Cref{approxbetax}, with probability $1-\delta'$, for any ball $B$ of radius $k_0$, the shadow $\widetilde{\sigma}_{B}$ satisfies $\|\widetilde{\sigma}_{B}-\tr_{B^c}[\sigma(\beta,x)]\|_1\le 2C_1\,e^{-\frac{r}{2\xi}}+C_2(r)\gamma+\epsilon$  as long as
		\begin{align}\label{ttimes}
			t\ge \frac{8.12^{k_0}}{3.\epsilon^2}\log\left(\frac{n^{k_0}2^{k_0+1}}{\delta'}\right)\,.
		\end{align}
	\end{proposition}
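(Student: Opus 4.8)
The plan is to split the error in $\|\widetilde\sigma_B-\tr_{B^c}\sigma(\beta,x)\|_1$ into a deterministic ``bias'' term, coming from the fact that the selected samples $x_{i_1},\dots,x_{i_t}$ only $\gamma$-match $x$ on the relevant enlarged region rather than exactly, and a purely statistical fluctuation coming from the finitely many shadow snapshots, and then to control the latter by a concentration bound followed by a union bound over Paulis and over balls. Concretely, by the triangle inequality
\[
\big\|\widetilde\sigma_B-\tr_{B^c}\sigma(\beta,x)\big\|_1\;\le\;\Big\|\widetilde\sigma_B-\tfrac1t\textstyle\sum_{j=1}^t\tr_{B^c}\sigma(\beta,x_{i_j})\Big\|_1+\tfrac1t\textstyle\sum_{j=1}^t\big\|\tr_{B^c}\sigma(\beta,x_{i_j})-\tr_{B^c}\sigma(\beta,x)\big\|_1 .
\]
For each summand of the second term I would reuse verbatim the chain of estimates appearing in the last display block of the proof of \Cref{approxbetax}, with $B$ playing the role of $S_i$ and $x_{i_j}$ the role of $\hat Y_i(x)$: first replace both $\sigma(\beta,x)$ and $\sigma(\beta,x_{i_j})$ by their local truncations to the enlarged region at a cost of at most $C_1e^{-r/2\xi}$ each, via Eq.~\eqref{eq:tracelocaldistance} of \Cref{Lemma:Local_Perturbation}; then compare the two truncated Gibbs states using that their Hamiltonians differ by at most $\gamma$ on every parameter supported near $B$, which by \Cref{lem:Gibbs_perturbation} contributes at most $2(e^{2\beta V(r+k_0)h\gamma}-1)\le C_2(r)\gamma$. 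This makes the second term at most $2C_1e^{-r/2\xi}+C_2(r)\gamma$ deterministically, which is precisely the first two terms in the claimed bound.

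It then remains to show that, with probability $1-\delta'$ and simultaneously for every ball $B$ of radius $k_0$, the statistical term $A_B:=\widetilde\sigma_B-\tfrac1t\sum_j\tr_{B^c}\sigma(\beta,x_{i_j})$ has $\|A_B\|_1\le\epsilon$. For the single-qubit random-Clifford shadow protocol each snapshot $\widetilde\sigma(\beta,x_{i_j})$ is a tensor product over sites, so $\tr_{B^c}\widetilde\sigma(\beta,x_{i_j})$ depends only on the outcomes inside $B$, is unbiased for $\tr_{B^c}\sigma(\beta,x_{i_j})$, and for every Pauli $P$ supported in $B$ the scalar $\tr[P\,\tr_{B^c}\widetilde\sigma(\beta,x_{i_j})]$ is real, bounded by $3^{|B|}$ in absolute value, and has variance at most $3^{w(P)}\le 3^{|B|}$, where $w(P)$ denotes the weight of $P$. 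Conditioning on the promise, the $t$ snapshots used for a fixed $B$ come from $t$ distinct copies and so are independent. Expanding $A_B$ in the Pauli basis of $B$ and bounding $\|A_B\|_1$ by a constant times $2^{|B|}$ times the largest averaged-Pauli-coefficient deviation, it suffices to estimate each of the relevant Pauli coefficients to additive accuracy $\Theta(\epsilon\,2^{-|B|})$. A median-of-means estimate over $t$ independent snapshots of variance $\le 3^{|B|}$ reaches accuracy $\eta$ with failure probability $p$ once $t\gtrsim 3^{|B|}\eta^{-2}\log(1/p)$; taking $\eta=\Theta(\epsilon\,2^{-|B|})$ produces the $12^{|B|}=3^{|B|}4^{|B|}$ scaling, and a union bound over the relevant test operators per ball and over the $\le n^{k_0}$ balls $B$ produces the logarithmic factor. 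Combining with the triangle inequality of the first paragraph gives the stated estimate.

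The main obstacle is the bookkeeping that pins down the exact constants in \eqref{ttimes}: getting the base $12^{k_0}$ (rather than, say, the $36^{k_0}$ that a naive range-based Hoeffding bound on each Pauli coefficient would yield), the $\tfrac83$ prefactor, and the $n^{k_0}2^{k_0+1}$ inside the logarithm. This forces care about (i) using a variance-based (median-of-means / Bernstein) tail bound rather than a range-based one, since the single-shot variance $3^{|B|}$ is much smaller than the squared range $9^{|B|}$; (ii) how the number of Pauli coefficients and the target accuracy $\epsilon\,2^{-|B|}$ combine in the union bound; and (iii) counting the balls of radius $k_0$ in the lattice together with the test operators needed per ball to certify a trace-norm bound. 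None of this is conceptually difficult, but it is the part that must be carried out precisely to obtain \eqref{ttimes} as stated.
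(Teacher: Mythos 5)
Your decomposition into a deterministic bias term and a statistical fluctuation term is exactly the paper's, and your treatment of the bias is identical to the paper's proof: the paper likewise observes that the estimates from the proof of \Cref{approxbetax} (local truncation via \Cref{Lemma:Local_Perturbation} plus the Gibbs perturbation bound of \Cref{lem:Gibbs_perturbation}) give $\|\tr_{B^c}[\sigma(\beta,x)-\sigma(\beta,x_{i_j})]\|_1\le 2C_1e^{-r/2\xi}+C_2(r)\gamma$ for each selected sample, and then reduces everything to a statement about shadows built from non-identical copies. Where you diverge is in the concentration step. The paper packages this as a separate result (\Cref{propshadow}) proved by applying the matrix Bernstein inequality for independent, non-identically distributed random matrices (\Cref{matrixBern}) directly to $S_j=\tr_{B^c}(\widetilde\sigma_{i_j}-\sigma_{i_j})$, using $\|S_j\|_\infty\le 2^{k_0}+1$ and $\nu(Z)/t\le 3^{k_0}$, and then converting the operator-norm tail into a trace-norm tail via the dimension factor $2^{k_0}$; the $2^{k_0+1}$ inside the logarithm of \eqref{ttimes} is the $d_1+d_2$ prefactor of the matrix tail bound, not a count of test operators. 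Your route --- expanding in the Pauli basis, estimating each coefficient to accuracy $\Theta(\epsilon 2^{-|B|})$ by a variance-based scalar bound, and union-bounding over the $4^{|B|}$ Paulis per ball --- is also sound and recovers the same $12^{k_0}\epsilon^{-2}$ scaling (your variance $3^{|B|}$ times the $4^{|B|}$ from the accuracy rescaling), but it will replace $2^{k_0+1}$ by roughly $4^{k_0}$ inside the logarithm, so it does not literally reproduce the constants of \eqref{ttimes}; it yields the same asymptotics. The matrix-Bernstein route is the cleaner way to land on the stated constants, since it avoids the Pauli union bound entirely.
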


	\begin{proof}
		In view of \Cref{propshadow}, it is enough to show that the reduced states $\tr_{B^c}[\sigma(\beta,x_{i_j})]$ are close to $\tr_{B^c}[\sigma(\beta,x)]$. This is done by simply adapting some of the estimates in the proof of \Cref{approxbetax}. In particular, we have shown that
		\begin{align*}
			\|\tr_{B^c}\big[\sigma(\beta,x)-\sigma(\beta,x_{i_j})\big]\|_1\le 2C_1\,e^{-\frac{r}{2\xi}}+C_2(r)\gamma\,.
		\end{align*}
		The result follows directly from \Cref{propshadow}.
	\end{proof}

	We are now ready to state and proof the main result of this section. We denote $\widetilde{f}_{O_i}(x)=\tr\big[O_i\, \widetilde{\sigma}_{S_i}(x)\big]$ the function constructed from the Gibbs shadow tomography protocol of \Cref{propgibbsshadow}, and write $\widetilde{f}_O:=\sum_{i=1}^M\widetilde{f}_{O_i}$. 
	
	\begin{theorem}[Learning algorithm for quantum Gibbs states]\label{learningalgoGibbs}
		In the notation of the previous paragraph, consider a set of $N$ shadows $\{\widetilde{\sigma}(\beta,x_i) \}_{i=1}^N$.
		Given an arbitrary local observable $O$, we fix
		\begin{align*}
			&r:=\left\lceil 2\xi\log\left(\frac{24\beta (C+c')\,h\,(2r_0+k_0)^D (D-1)!(2\xi)^{D-1}\,D^{D-1}}{\epsilon\, e^{\frac{k_0+1}{2\xi}}(1-e^{-\frac{1}{2\xi}})}\right)\right\rceil,\\
			& \gamma=\frac{\epsilon\,e^{-[2(r+k_0)]^D(3\log 2+5\beta h)}}{3[2(r+k_0)]^Dh\ell},\\
			&t:=\left\lceil\frac{24.12^{k_0}}{\epsilon^2}\log\left(\frac{n^{k_0}2^{k_0+1}}{\delta'}\right)\right\rceil\,.
		\end{align*}
		Then, we have that with probability $(1-\delta).(1-\delta')$, 
		\begin{align*}
			|f_{O}(x)-\widetilde{f}_O(x)|\le  \epsilon\,\sum_{i}\,\|O_i\|_\infty\,,
		\end{align*}
		as long as 
		\begin{align*}
			N=t\Big(\frac{\gamma}{2}\Big)^{-[2(r+r_0+k_0)]^D\ell}{\log\Big(\frac{M}{\delta}\Big)+t\log\Big[t\Big(\frac{2}{\gamma}\Big)^{[2(r+r_0+k_0)]^D\ell}\Big]}\Big(\frac{\gamma}{2}\Big)^{-[2(r+r_0+k_0)]^D\ell}\,.\end{align*}
	\end{theorem}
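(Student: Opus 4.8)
\noindent\textbf{Proof idea}: The plan is to glue together \Cref{approxbetax}, \Cref{propgibbsshadow}, and a $t$-fold strengthening of the covering (coupon-collector) argument already used inside \Cref{approxbetax}, splitting the target error $\epsilon\sum_i\|O_i\|_\infty$ into three pieces of size $\tfrac{\epsilon}{3}\|O_i\|_\infty$ each.

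First I would note that, since $\operatorname{supp}(O_i)=S_i$, one has $f_{O_i}(x)=\tr[O_i\,\tr_{S_i^c}\sigma(\beta,x)]$, so by H\"older
\begin{align*}
|f_O(x)-\widetilde f_O(x)|\;\le\;\sum_{i=1}^M\big|\tr[O_i(\tr_{S_i^c}[\sigma(\beta,x)]-\widetilde{\sigma}_{S_i}(x))]\big|\;\le\;\sum_{i=1}^M\|O_i\|_\infty\,\big\|\tr_{S_i^c}[\sigma(\beta,x)]-\widetilde{\sigma}_{S_i}(x)\big\|_1,
\end{align*}
and it suffices to bound each local trace-norm error by $\epsilon$. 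For this I would invoke \Cref{propgibbsshadow} with its internal noise level set to $\tfrac{\epsilon}{3}$: substituting $\epsilon\mapsto\epsilon/3$ in the threshold of \Cref{ttimes} turns it into exactly the displayed value of $t$, so with probability $1-\delta'$ and simultaneously for every ball of radius $k_0$ (in particular for every $S_i$),
\begin{align*}
\big\|\tr_{S_i^c}[\sigma(\beta,x)]-\widetilde{\sigma}_{S_i}(x)\big\|_1\;\le\;2C_1\,e^{-\frac{r}{2\xi}}+C_2(r)\gamma+\tfrac{\epsilon}{3},
\end{align*}
\emph{provided} the promise of \Cref{propgibbsshadow} is met, i.e.\ that for each $i\in[M]$ the sample set contains $t$ vectors $x_{i_1},\dots,x_{i_t}$ with $\|x|_{\mathcal S_i(r)}-x_{i_j}|_{\mathcal S_i(r)}\|_\infty\le\gamma$. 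With $r$ and $\gamma$ chosen as in the statement, the same elementary estimates as in the proof of \Cref{approxbetax} give $2C_1e^{-r/2\xi}\le\tfrac{\epsilon}{3}$ and $C_2(r)\gamma\le\tfrac{\epsilon}{3}$ — the replacements $16\mapsto 24$ in $r$ and $2\mapsto 3$ in $\gamma$ are exactly the cost of splitting the budget into thirds rather than halves. Summing the three contributions yields $\le\epsilon\|O_i\|_\infty$ per term, hence $\le\epsilon\sum_i\|O_i\|_\infty$ in total.

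The remaining, and only substantive, task is to show the promise holds with probability $\ge 1-\delta$ over the i.i.d.\ draws $Y_1,\dots,Y_N\sim U$, for the stated $N$; this is a $t$-fold version of the covering step in \Cref{approxbetax}. Fix $i$: the parameters $x|_{\mathcal S_i(r)}$ live in a cube $[-1,1]^{m_r}$ with $m_r=V(r+r_0+k_0)\ell\le[2(r+r_0+k_0)]^D\ell$, which I partition into $(2/\gamma)^{m_r}$ sub-cubes of side $\gamma$; two points sharing a sub-cube are within $\ell_\infty$-distance $\gamma$, so it suffices that every sub-cube (for every $i\in[M]$) contains at least $t$ of the projected samples $Y_j|_{\mathcal S_i(r)}$. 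I would split the $N$ draws into $t$ blocks of $N/t$; within one block the one-sample union bound of \Cref{approxbetax} shows that the probability that some sub-cube for some $i$ is missed is at most $M(2/\gamma)^{m_r}e^{-(N/t)(\gamma/2)^{m_r}}$, so a union bound over the $t$ blocks makes the overall failure probability at most $Mt\,(2/\gamma)^{m_r}e^{-(N/t)(\gamma/2)^{m_r}}$, and if no block misses a sub-cube that sub-cube is visited $\ge t$ times. Setting this quantity equal to $\delta$, solving for $N$, writing $\log\big[t(2/\gamma)^{m_r}\big]=\log t+m_r\log(2/\gamma)$, and using $m_r\le[2(r+r_0+k_0)]^D\ell$ (the bound being monotone since $\gamma/2<1$) reproduces the displayed expression for $N$.

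Finally, the event ``the promise holds'' depends only on the random parameters $\{Y_j\}$, whereas the event of \Cref{propgibbsshadow} depends on the independent randomness of the shadow measurements; conditioning on the former leaves the latter with probability $\ge 1-\delta'$, so both hold with probability $\ge(1-\delta)(1-\delta')$, on which the chain of inequalities above gives $|f_O(x)-\widetilde f_O(x)|\le\epsilon\sum_i\|O_i\|_\infty$ uniformly in $x$. I expect the main obstacle to be nothing conceptual but rather the bookkeeping of the $t$-fold coupon-collector estimate and checking that the three $\tfrac{\epsilon}{3}$-budgets close up against the prescribed $r,\gamma,t$; no new idea is needed beyond \Cref{approxbetax} and \Cref{propgibbsshadow}.
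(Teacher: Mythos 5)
Your proposal is correct and follows essentially the same route as the paper: a $t$-fold coupon-collector covering argument (your block-splitting reproduces exactly the paper's failure probability $Mt(2/\gamma)^{m_r}e^{-(N/t)(\gamma/2)^{m_r}}$), followed by an application of \Cref{propgibbsshadow} conditioned on that event, with the error budget split into thirds to match the prescribed $r,\gamma,t$. The paper states this in two sentences; you have simply supplied the bookkeeping it leaves implicit.
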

	
	Once again, upon careful checking of the bounds, we have found 
	
	\begin{align*}
		N=\Theta \left(\log\Big(\frac{M}{\delta}\Big)\,\log\Big(\frac{n}{\delta'}\Big) e^{\operatorname{polylog}(\epsilon^{-1})} \right)\,.
	\end{align*}
	
	\begin{proof}
		Adapting the proof of \Cref{approxbetax}, it is clear that with probability  $$1-\delta:=1-Me^{-N\frac{1}{t}(\gamma/2)^{m_r}+m_r\log(2/\gamma)+\log t}$$
		each cube is visited at least $t$ times. Conditioned on that event, and choosing $t$ such that \Cref{ttimes} holds, we have that with probability $1-\delta'$ 
		\begin{align*}
			|f_{O_i}(x)-\widetilde{f}_{O_i}(x)|\le \Big(2C_1 e^{-\frac{r}{2\xi}}+C_2(r)\gamma +\epsilon \Big)\|O_i\|_\infty\,.
		\end{align*}
	\end{proof}
	
	\begin{remark}
		We emphasise that the classical data $\{\widetilde{\sigma}(\beta,x_i)\}_{i=1}^N$ are fixed, and then any local observable $O$ can be chosen \textit{after the data has been taken} which will satisfy the bounds in \cref{learningalgoGibbs}.
	\end{remark}
	\begin{remark}
		Our proof readily extends to distributions $\mu$ that satisfy the following anti-concentration property: for any $x_0\in[-1,1]^{m_r}$ and for all $\pi$ permutations of the coordinates of $[-1,1]^n$ we have:
		\begin{align}\label{equ:anticoncent}
			&\mu(A(x_0,\epsilon,\pi))>0\implies \mu(A(x_0,\epsilon,\pi))=\Omega((2\epsilon)^{m_r}/\textrm{polylog}(n))\,,
		\end{align}
		where $A(x_0,\epsilon,\pi):=\pi((x_0+[-\epsilon,\epsilon]^{m_r})\times [-1,1]^{n-m_r})$. To see this, notice that the condition in \Cref{equ:anticoncent} implies that we can discretise the number of cubes with size $\epsilon$ and positive weight into at most $\mathcal{O}((2\epsilon)^{-m_r}\textrm{polylog}(n))$ cubes for any choice of $m_r$ coordinates. By e.g. rejection sampling we can then generate a sample from the uniform distribution on those cubes by taking at most $\mathcal{O}(\textrm{polylog}(n))$ samples from the distribution $\mu$. Once given uniform samples over those cubes we can argue as in the proof above.
		
		One distribution that satisfies the condition in \Cref{equ:anticoncent} but is far from uniform over the whole space is e.g. a Dirac measure on a single state. It is also satisfied for various natural distributions, such as i.i.d. distributions on each coordinate.
	\end{remark}
	
	\begin{remark}
		It is clear that our stronger $L_\infty$  recovery guarantee cannot hold for arbitrary distributions and requires some sort of anti-concentration. To see this, consider a distribution over parameters that outputs a state $\rho_0$ with probability $1-p$ and a different, locally distinguishable state $\rho_1$ with probability $p$. Before we have drawn $\Omega(p^{-1})$ samples it is unlikely that we gained access to even a single sample of $\rho_1$. But algorithms like ours with $L_\infty$ guarantees also need to perform well on such rare outputs. Thus, we see that the sample complexity for $L_\infty$ guarantees will have to depend on the parameter $p$ and will blow-up as $p\to0$. In contrast, if we wish to obtain good recovery in $L_2$ for this simple example as $p\to0$, we can always output the expectation value w.r.t. $\rho_0$.
	\end{remark}

	\begin{remark}
		Once we are given the support of the observable $O_i$, it is straightforward to generate a data structure that allows for a highly efficient evaluation of $f_O(x)$ on some parameters $x$. Assume w.l.o.g. that $S_i(r)$ consists of the the first $l$ coordinates of $x$. Then we can sort all of the sampled parameters and insert them into a search tree in time $\mathcal{O}(N\log(N))$. Afterwards, we can find all elements $y$ s.t. $\|x_{S_i(r)}-y_{S_i(r)}\|_{l\infty}\leq\epsilon$ in time
		\begin{align}
			\mathcal{O}(t\log(N))=\mathcal{O}\left( \frac{24.12^{k_0}}{\epsilon^2}\log\left(\frac{n^{k_0}2^{k_0+1}}{\delta'}\right)\,\log(N)\right),
		\end{align}
		as we expect $t$ many elements of the list to be close to our new parameter. After we found the samples that are close, we only need  to evaluate the value of $O$ on the corresponding shadow, which gives an additional time complexity of $\mathcal{O}(t4^k_0)$, as $O_i$ has a support of size $k_0$. Thus, we see that the time complexity to learn an efficient representation of the function $f_{O_i}$ is comparable to the one in~\cite{Lewis_Huang_Preskill_2023}.
	\end{remark}


	\begin{theorem}[Learnability of the High-Temperature Phase] \label{Theorem:1D_Commute_Analytic}
		Let $H(x)$ be a geometrically local Hamiltonian.
		Then there exists a temperature range $\beta \in [0,\beta_c]$, such for all $x\in[-1,1]^m$
		then the parameters of \Cref{learningalgoGibbs} are sufficient to  learn
		\begin{align*}
			|f_{O}(x)-\widetilde{f}_O(x)|\le  \epsilon\,\sum_{i}\,\|O_i\|_\infty\,,
		\end{align*}
		with probabilities and parameters as given in \Cref{learningalgoGibbs}.
	\end{theorem}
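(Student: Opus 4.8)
The statement is a direct corollary of \Cref{learningalgoGibbs}: that theorem already produces an estimator $\widetilde f_O$ meeting the bound $|f_O(x)-\widetilde f_O(x)|\le\epsilon\sum_i\|O_i\|_\infty$ with the stated success probability and sample count, \emph{provided} the family $\{\sigma(\beta,x)\}_{x\in[-1,1]^m}$ satisfies the exponential decay of correlations of \Cref{decaycorr} with constants $C,\nu>0$ that are independent of the system size $n$ and of the parameter $x$, and provided the Hamiltonian admits a Lieb--Robinson bound of the form \Cref{LRbound} with parameters independent of $n$ and $x$. So the whole proof reduces to exhibiting a threshold $\beta_c$ below which a geometrically local Hamiltonian enjoys these two properties \emph{uniformly over the parameter box}.

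\textbf{Steps.} First I would set $\beta_c=\beta_c(h,r_0,D)$ to be the convergence radius of the polymer/cluster expansion of $\log\tr[e^{-\beta H}]$ for Hamiltonians whose terms have interaction range at most $r_0$ and operator norm at most $h$ on a $D$-dimensional lattice; this radius depends only on the combinatorial data $(h,r_0,D)$ and not on $n$ (see~\cite{kliesch2014locality,harrow2020classical,kuwahara_gibbs}). Second, for any $\beta\le\beta_c$ the same references yield that $\sigma(\beta,H)$ satisfies $\operatorname{Cov}_{\sigma(\beta,H)}(X_A,X_B)\le C\min\{|A|,|B|\}\|X_A\|_\infty\|X_B\|_\infty e^{-\nu\operatorname{dist}(A,B)}$ with $C,\nu$ again functions of $(\beta,h,r_0,D)$ only. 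Since in our set-up $H(x)=\sum_j h_j(x_j)$ has a \emph{fixed} interaction range $r_0$ and satisfies $\|h_j(x_j)\|_\infty,\|\nabla h_j(x_j)\|_\infty\le h$ \emph{uniformly} over $x_j\in[-1,1]^\ell$, the same $\beta_c$, $C$ and $\nu$ serve for every $x\in[-1,1]^m$; this is exactly the uniform form of \Cref{decaycorr} that \Cref{learningalgoGibbs} demands. Third, I would note that Lieb--Robinson bounds for finite-range interactions with uniformly bounded local terms (as invoked in \Cref{LRbound}) also have velocity, range and prefactor depending only on $(h,r_0,D)$; consequently the derived constants $c',\mu'$ of \Cref{LemmaapproxPhi}, and in turn $\xi$, $C_1$ and $C_2(r)$ entering \Cref{Lemma:Local_Perturbation}, \Cref{approxbetax} and \Cref{propgibbsshadow}, are all honest $\mathcal O(1)$ quantities in the fixed data. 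Fourth, since $\beta\le\beta_c$ keeps the whole family in the phase of the maximally mixed state, no energy shift $H_0$ is needed and the classical post-processing (forming the shadow averages $\widetilde\sigma_{S_i}(x)$ and evaluating $\widetilde f_{O_i}=\tr[O_i\widetilde\sigma_{S_i}(x)]$) is carried out verbatim as in \Cref{propgibbsshadow,learningalgoGibbs}. Finally, plugging these uniform constants into \Cref{learningalgoGibbs} gives the parameters $r,\gamma,t,N$ displayed there, which retain the asymptotic form $N=\Theta(\log(M/\delta)\log(n/\delta')e^{\operatorname{polylog}(\epsilon^{-1})})$, and yields the claimed bound simultaneously for all $x\in[-1,1]^m$.

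\textbf{Main obstacle.} There is no deep new difficulty; the one point that genuinely needs care is the \emph{uniformity in $x$} of all the constants. One must make sure that the radius of convergence of the high-temperature expansion, the correlation-decay constants $C,\nu$, and the Lieb--Robinson parameters can be chosen once and for all over the whole parameter box, rather than pointwise in $x$. This is guaranteed by the standing assumptions that the interaction range is fixed and that $\|h_j(x_j)\|_\infty,\|\nabla h_j(x_j)\|_\infty\le h$ hold uniformly, but it is precisely the hypothesis that has to be tracked through each of the cited estimates.
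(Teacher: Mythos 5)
Your proposal is correct and follows essentially the same route as the paper: the paper's proof simply invokes \cite{kliesch2014locality} to get exponential decay of correlations for all geometrically local Hamiltonians below a threshold temperature and then applies \Cref{learningalgoGibbs} directly. Your additional care about the uniformity in $x$ of the cluster-expansion radius, the decay constants $C,\nu$, and the Lieb--Robinson parameters is a worthwhile elaboration of a point the paper leaves implicit, but it does not constitute a different argument.
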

	\begin{proof}
		From \cite{kliesch2014locality}, we see that for sufficiently high-temperatures (low $\beta$), then the Gibbs states must satisfy exponentially decaying correlations.
		Thus we can utilise \Cref{learningalgoGibbs} directly to get the parameters required to learn the high-temperature phase.
	\end{proof}
	
	\begin{remark}
		For 1D, translationally invariant Hamiltonians, the Gibbs state has exponential decay of correlations for all temperatures \cite{Bluhm2022exponentialdecayof} and hence the phase can be learned efficiently everywhere. 
	\end{remark}
	
	\begin{remark}
		For commuting and 1D Hamiltonians we can relate the learnability of the phase to the analyticity of the free energy, and thus to a more rigorous notion of phase, defined as regions of parameter space where the free energy is analytic (assuming the Hamiltonian is parameterised in an analytic fashion). 
		The free energy is defined as $F(\beta,x) = -\log(\tr[e^{-\beta H(x)}])$.
		This is done using \cite[Theorem 32]{harrow2020classical} which demonstrates that for commuting and 1D geometrically local Hamiltonians, exponential decay of correlations holds in the sense of \Cref{decaycorrintro}.
		Thus we can utilise \Cref{learningalgoGibbs} directly to get the parameters.

		One might attempt to relate this to the free energy of non-commuting Hamiltonians, however, as per \cite[Theorem 31]{harrow2020classical}, exponential decay of correlations in regions with analytic free energy is only known for observables $O_1, O_2$ whose supports are distance $\Omega(\log(n))$ away from each other.
		Although one would be able to prove  learnability with more samples scaling with $n$, \cite[Theorem 31]{harrow2020classical} is not strong enough to give the scaling we desire.
	\end{remark}
	\begin{remark}
		Although we do not prove it here, it is likely we can flip the above remark on its head. 
		If we consider a region of parameter space in which the free energy is analytic, we expect all local observables to be analytic in $x$ in this region.
		As such, we should be able to approximate the local observable using polynomial interpolation (or some other technique) and learning the polynomial of this observable everywhere in the phase with small error.
	\end{remark}

	It is worth noting that high-temperature Gibbs states of commuting Hamiltonians as well as those of 1D Hamiltonians can be efficiently prepared and hence desired observables could be measured directly \cite{Brandao_Kastoryano_2018}.
	Hence \Cref{Theorem:1D_Commute_Analytic} becomes most useful in the setting where parameters are a priori unknown to us and we wish to extract useful information. However, we will now generalize to computationally harder phases.

	\section{Learning beyond high temperature phases}\label{sec:beyond_high_T}
	The previous results only hold for phases for which there exist efficient classical algorithms to compute the expectation value of local observables. Indeed, this follows directly from the fact that it is possible to compute local expectation values on some region $S$ by evaluating them on the Gibbs state $\sigma(\beta,(x|_{S(r)},0_{S^c}))$, as this Gibbs state will be maximally mixed outside of $x|_{S(r)}$ and on $S(r)$ we can just diagonalise it. It would be desirable to extend our results to phases for which it is not known how to compute local expectation values efficiently on a classical computer.
	
	However, the previous results have pedagogical value and can be easily generalized by our previous trick, namely shifting the ``center'' of the parameters. Indeed, if we assume instead that for every region $S$, there is a fixed configuration $x^*_{S(r)^c}$ s.t. the Gibbs states with parameters $x=(x_S,x^*_{S(r)^c})$ approximate well the reduced density matrix of $\sigma(\beta,x)$ on $S$, then we can prove similar results as before. But now, it is not generally the case that these states can be computed efficiently without access to samples, as $\sigma(\beta,(x_S,x^*_{S(r)^c}))$ will generally be hard to compute \cite{Sly2010}. In a nutshell, our generalisation consists of changing the parameters outside of the region from $0$ to some other fixed value.

	The central condition we need to generalise to go beyond trivial phases in our learning algorithm is that of local indistinguishability. For that, we introduce what we call \emph{generalised approximate local indistinguishability} (GALI):
	\begin{definition}[Generalised approximate local indistinguishability (GALI)]\label{def:GALI}
		For $x_1^0,\ldots,x_m^0\in\mathbb{R}$ let 
		$\Phi\coloneqq \prod_{i=1}^m [-1+x_i^0,1+x_i^0]$ and for $x\in\Phi$ define $\sigma(\beta,x)$ as before for some fixed $\beta$. We say that the family of Gibbs states $\sigma(\beta,x)$ satisfies generalised approximate local indistinguishability (GALI) with parameter $\eta(S)>0$ and decay function $f$ if
		for any region $S$ and $r$ there is a set of parameters $x^*_{\mathcal{S}(r)^c}$ s.t. for all $O$ supported on $S$ and $f_{O}(x):=\tr[O\,\sigma(\beta,x)]$ the following bound holds:
		\begin{align*}
			\sup_{x\in\Phi} |f_{O}(x)-f_{O}((x|_{\mathcal{S}(r)},x^*_{\mathcal{S}(r)^c})) | \le (|S|f(r)+\eta(S))\,\|O\|_\infty\,,
		\end{align*}
		with $\lim_{r\to\infty}f(r)=0$.
	\end{definition}
	\begin{remark}
		In the above definition we have shifted the center of $\Phi$, but have not changed the width of each side of the hypercube. However, it is simple also to change the maximal width of the parameters we consider for each local interaction by suitably rescaling the Hamiltonian by a constant.
	\end{remark}
	Note that the usual notion of local indistinguishability holds by taking the vector $x^*_{S(r)^c}$ to be zero for all regions $S$. Then we can once again set our estimator as in \Cref{equ:defin_estimator} by just considering all the points we sampled that are sufficiently close to the one we care about on an enlarged region.
	\begin{theorem}\label{thm:generalized_learning_gali}
		Assuming that GALI holds as in \Cref{def:GALI} and for a given $\epsilon>0$ let $r(\epsilon)$ be the smallest s.t. for each $O_i$ $|\operatorname{supp}(O_i)|f(r(\epsilon))\leq \epsilon$. Let $\gamma\leq (|B(\operatorname{supp}(O_i),r(\epsilon))|\ell)^{-1}\epsilon$. Furthermore, assume that we have enough samples $N$ of $y_i$ to ensure that we have $m$ samples $y_
		{i_1},\ldots,y_{i_m}$ with
		\begin{align}\label{equ:close_region}
			\|y_{i_j}|_{B(\operatorname{supp}(O_i),r(\epsilon))}-x|_{B(\operatorname{supp}(O_i),r(\epsilon))}\|_{\ell_\infty}\leq \gamma,
		\end{align}
		where $m=e^{\mathcal{O}(B(\operatorname{supp}(O_i),r(\epsilon)))}\mathcal{O}(\epsilon^{-2}\log(\delta^{-1}))$. Then the estimator $\hat{f}_O(x):=\sum_{i}\,\tr[O_i\,\sigma(\beta,\hat{Y}_i(x))]$ satisfies the bound
		\begin{align*}
			\sup_{x\in\Phi}\, |f_O(x)-\hat{f}_{O}(x)|\le \mathcal{O}\Big(\epsilon\,\sum_{i=1}^M(1+\eta(\operatorname{supp}(O_i)))\|O_i\|_\infty\Big)\,,
		\end{align*}
		with probability at least $1-\delta$.
	\end{theorem}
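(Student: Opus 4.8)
The strategy is to run the argument of \Cref{approxbetax} and \Cref{learningalgoGibbs} essentially verbatim, with the sole structural change that the local‑indistinguishability statement \Cref{Lemma:Local_Perturbation} --- which there was \emph{derived} from exponential decay of correlations --- is now taken directly from the GALI hypothesis of \Cref{def:GALI}, at the price of the irreducible additive error $\eta(\operatorname{supp}(O_i))$. Fix $O_i$, write $S_i:=\operatorname{supp}(O_i)$, $r:=r(\epsilon)$, $R_i:=B(S_i,r)$, and let $x^*_{\mathcal{S}_i(r)^c}$ be the configuration supplied by GALI for the pair $(S_i,r)$; since it depends on $S_i$ and $r$ only, the \emph{same} $x^*$ may be used for the target $x$ and for every sample $y_{i_j}\in\Phi$. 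The estimator selects, for each $i$, the $m$ samples $y_{i_1},\dots,y_{i_m}$ satisfying \Cref{equ:close_region} and forms the averaged shadow $\widetilde\sigma_{S_i}(x):=\frac1m\sum_{j=1}^m\tr_{S_i^c}[\widetilde\rho(y_{i_j})]$, exactly as in \Cref{propgibbsshadow}, so that $\hat f_{O_i}(x)=\tr[O_i\,\widetilde\sigma_{S_i}(x)]$.

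The heart of the proof is to bound $\big\|\widetilde\sigma_{S_i}(x)-\tr_{S_i^c}[\rho(x)]\big\|_1$ by a triangle inequality into three contributions. (a) The averaged classical‑shadow noise $\frac1m\sum_j\|\tr_{S_i^c}[\widetilde\rho(y_{i_j})]-\tr_{S_i^c}[\rho(y_{i_j})]\|_1$, which by the robust shadow bound \Cref{propshadow} is $\le\epsilon$ with probability $\ge 1-\delta'/M$ once $m=e^{\mathcal{O}(|R_i|)}\mathcal{O}(\epsilon^{-2}\log(M/\delta'))$. (b) The GALI error $\|\tr_{S_i^c}[\rho(y_{i_j})-\rho((y_{i_j}|_{\mathcal{S}_i(r)},x^*))]\|_1\le |S_i|f(r)+\eta(S_i)\le \epsilon+\eta(S_i)$ by the choice of $r(\epsilon)$, together with the same bound for the target $x$ in place of $y_{i_j}$. (c) The parameter‑perturbation term between the two ``pinned'' states $\rho((y_{i_j}|_{\mathcal{S}_i(r)},x^*))$ and $\rho((x|_{\mathcal{S}_i(r)},x^*))$, whose Hamiltonians agree off $\mathcal{S}_i(r)$ and differ by at most $\gamma$ in $\ell_\infty$ on $\mathcal{S}_i(r)$ by \Cref{equ:close_region}, hence in operator norm by at most $|R_i|\,\ell\,h\,\gamma\le h\epsilon$; in the Gibbs case the trace‑norm bound of \Cref{lem:Gibbs_perturbation} applied to $\beta H$ gives $2(e^{\beta h\epsilon}-1)=\mathcal{O}(\epsilon)$. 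Summing (a)--(c) yields $\big\|\widetilde\sigma_{S_i}(x)-\tr_{S_i^c}[\rho(x)]\big\|_1=\mathcal{O}(\epsilon+\eta(S_i))$, so $|f_{O_i}(x)-\hat f_{O_i}(x)|\le\|O_i\|_\infty\,\mathcal{O}(\epsilon+\eta(S_i))$; a union bound over $i\in[M]$ and the triangle inequality over $O=\sum_iO_i$ give $\sup_{x\in\Phi}|f_O(x)-\hat f_O(x)|\le\mathcal{O}\big(\epsilon\sum_i(1+\eta(S_i))\|O_i\|_\infty\big)$, as claimed.

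It remains to verify that the sampling hypothesis --- the existence of $m$ samples obeying \Cref{equ:close_region} for every $i$ --- holds with probability $\ge 1-\delta$ under the stated $N$. As $U$ is uniform on $\Phi=\prod_i[-1+x_i^0,1+x_i^0]$, each of the $m_r:=|R_i|\ell$ coordinates indexing $R_i$ is uniform on an interval of length $2$, so one sample lands in the $\gamma$‑box around $x|_{R_i}$ with probability $\ge(\gamma/2)^{m_r}$; a Chernoff/coupon‑collector estimate then shows that $N=(\gamma/2)^{-m_r}\mathcal{O}\big(m\log(M/\delta)+m_r\log(2/\gamma)\big)$ samples collect $m$ such hits for all $i\in[M]$ simultaneously, which with $r(\epsilon)=\Theta(\log\epsilon^{-1})$ and $\gamma$ as prescribed reduces to $N=\mathcal{O}\big(\log(M/\delta)\log(n/\delta')e^{\operatorname{polylog}(\epsilon^{-1})}\big)$, recovering the scaling of \Cref{learningalgoGibbs}. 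The same step goes through for any distribution obeying the anti‑concentration property discussed after \Cref{learningalgoGibbs}, by rejection sampling.

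The main obstacle is handling the \emph{ground‑state} case uniformly with the Gibbs case: \Cref{lem:Gibbs_perturbation} is a purely thermal statement, so for $\rho(x)=\psi_g(x)$ step (c) must instead invoke a spectral‑gap stability estimate (quasi‑adiabatic continuation) controlling $\|\tr_{S_i^c}(\psi_g(\cdot)-\psi_g(\cdot))\|_1$ under a size‑$\gamma$ parameter change on $R_i$. This is precisely the local‑perturbation input behind the fact --- imported from \cite{Lewis_Huang_Preskill_2023} --- that gapped phases satisfy GALI, and it suffices either to quote it or to fold it into the decay function $f$ and the calibration of $r(\epsilon)$ and $\gamma$. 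A secondary point requiring care is bookkeeping: because $\eta(S)$ does not shrink as more samples are taken, one must carry the factor $(1+\eta(\operatorname{supp}(O_i)))$ through every estimate rather than absorbing $\eta$ into $\epsilon$, which is exactly what produces the stated form of the bound.
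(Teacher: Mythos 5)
Your proposal is correct and follows essentially the same route as the paper: the same three-way triangle inequality (GALI error for $x$ and for the samples, a Gibbs-perturbation bound via \Cref{lem:Gibbs_perturbation} between the two pinned states, and the robust shadow noise from \Cref{propshadow}), followed by the coupon-collector count for the sampling hypothesis. Your remark about the ground-state case is a fair observation, but the statement as written concerns $\sigma(\beta,\cdot)$, and the paper handles ground states separately by showing that gapped phases satisfy GALI via quasi-adiabatic continuation, exactly as you suggest.
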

	\begin{proof}
		We will proceed in a similar fashion as \Cref{approxbetax}. First, note that all Gibbs states of the form $\sigma(\beta,y_i|_{B(\operatorname{supp}(O_i),r(\epsilon))},x^*_{B(\operatorname{supp}(O_i),r(\epsilon))^c})$ are close to $\sigma(\beta,x|_{B(\operatorname{supp}(O_i),r(\epsilon))},x^*_{B(\operatorname{supp}(O_i),r(\epsilon))^c})$. Indeed, by \Cref{lem:Gibbs_perturbation} we have:
		\begin{align*}
			&\|\sigma(\beta,y_i|_{B(\operatorname{supp}(O_i),r(\epsilon))},x^*_{B(\operatorname{supp}(O_i),r(\epsilon)^c})-\sigma(\beta,x|_{B(\operatorname{supp}(O_i),r(\epsilon))},x^*_{B(\operatorname{supp}(O_i),r(\epsilon)^c})\|_1\\
			&\qquad\qquad\qquad\leq 2(e^{\beta \ell|B(\operatorname{supp}(O_i),r(\epsilon))|\gamma}-1)=\mathcal{O}(\epsilon),
		\end{align*}
		by our choice of $\gamma$.
		Now, by our choice of $r(\epsilon)$  and the definition of GALI we have that 
		\begin{align*}
			&\|\tr_{B(\textrm{supp}(O_i),r(\epsilon))^c}\left[\sigma(\beta,x|_{B(\textrm{supp}(O_i),r(\epsilon))},x^*_{B(\textrm{supp}(O_i),r(\epsilon)^c})-\sigma(\beta,x)\right]\|_1\\
			&\qquad\qquad\qquad\leq |\textrm{supp}(O_i)|f(r(\epsilon)+\eta(\textrm{supp}(O_i))\leq \epsilon+\eta(\textrm{supp}(O_i)),
		\end{align*}
		and similarly for $\|\sigma(\beta,y|_{B(\textrm{supp}(O_i),r(\epsilon))},x^*_{B(\textrm{supp}(O_i),r(\epsilon))^c})-\sigma(\beta,y)\|_1$. From this we conclude by applying triangle inequalities that 
		\begin{align}
			\|\tr_{B(\textrm{supp}(O_i),r(\epsilon))^c}\left[\sigma(\beta,x)-\sigma(\beta,y_i)\right]\|_1=\mathcal{O}(\epsilon+\eta(\textrm{supp}(O_i)))
		\end{align}
		Thus, by our robust shadow protocol, all we need is to gather enough shadows from $y_i$ to recover the reduced density matrix of $\sigma(\beta,x)$ on the region $B(\textrm{supp}(O_i),r(\epsilon))$. This gives the required bound on $m$. 
	\end{proof}
	
	Let us make some remarks on the various scalings. First, note that assuming a uniform distribution over the parameters, the probability $p$ that we draw one sample from ``the right corner'' is $p=\mathcal{O}(\gamma^{|B(O_i,r(\epsilon))|}\log(\gamma^{|B(O_i,r(\epsilon))|}))$. Thus, by the coupon collector problem, we will need $\mathcal{O}(mp^{-1}\log(p^{-1}))$ samples to have collected enough samples to perform the estimate. In particular, if $f$ is exponentially decaying, we recover our previous quasi-polynomial scaling in precision.

	\subsection{Exponentially decaying phases}
	

	The first class of systems for which we will show that GALI holds is that of Gibbs states satisfying exponential decay of correlations.

	\begin{proposition}[Gibbs local indistinguishability (with a shifted parameter)] \label{Lemma:Local_Perturbation_2}
		Assume the exponential decay of correlations in \Cref{decaycorr} for the region $\Phi\coloneqq \prod_{i=1}^m [-1+x_i^0,1+x_i^0]$. Then the family of Gibbs states $\sigma(\beta,x)$ for $x\in\Phi$ satisfies GALI with $f(r)=\mathcal{O}(e^{-\tfrac{r}{\xi}})$ and $\eta=0$ and, thus, \Cref{thm:generalized_learning_gali} applies.
	\end{proposition}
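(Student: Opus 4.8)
The plan is to observe that this proposition is \Cref{Lemma:Local_Perturbation} in disguise: it is the same statement with the fixed reference configuration $0$ on $\mathcal{S}(r)^c$ replaced by an arbitrary interior point of the shifted hypercube $\Phi$. So I would rerun the proof of \Cref{Lemma:Local_Perturbation} almost verbatim, taking care only that the interpolation path stays inside $\Phi$ and that all constants remain uniform in the parameter.

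Concretely, I would fix the reference configuration $x^*$ to be the centre of $\Phi$, i.e.\ $x^*_i:=x_i^0$ for every $i$, and for a region $S$ and $r\in\mathbb{N}$ let $x^*_{\mathcal{S}(r)^c}$ be its restriction to the interactions not touching $S(r)$. Given $O$ supported on $S$ and $x\in\Phi$, I would interpolate along the straight segment $x(s):=(1-s)\,x+s\,(x|_{\mathcal{S}(r)},x^*_{\mathcal{S}(r)^c})$ for $s\in[0,1]$. Because $\Phi$ is a product of intervals it is convex and both endpoints lie in $\Phi$, so the whole path lies in $\Phi$; hence the hypothesis of exponential decay of correlations (\Cref{decaycorr}, with $C,\nu$ uniform over $\Phi$) and the Lieb--Robinson input feeding \Cref{LemmaapproxPhi} are valid along the path. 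Then I would differentiate along $s$ using the belief-propagation formula \Cref{derivativefL},
\begin{align*}
|f_O(x)-f_O((x|_{\mathcal{S}(r)},x^*_{\mathcal{S}(r)^c}))|\le \beta\sum_{l\in\mathcal{S}(r)^c}|x'_l(0)-x^*_l|\int_0^1|\operatorname{Cov}_{\sigma(\beta,x(s))}(O,\widetilde{H}_l(x(s)))|\,ds,
\end{align*}
and for each $l\in\mathcal{S}(r)^c$ (so that, once $r>2r_0$, one has $A_{j_l}\cap S=\emptyset$) I would bound the covariance by the same split-and-estimate used in \Cref{Lemma:Local_Perturbation}: replace $\widetilde{H}_l$ by its $\Phi_{H_{B_l}}$-approximation on a region avoiding $S$ (error controlled by \Cref{LemmaapproxPhi}) and apply exponential decay of correlations to the remainder. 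The only numerical change is that the nonzero coefficients are now $|x'_l(0)-x^*_l|\le 2$ instead of $\le 1$, which only affects a constant. Summing the geometric series $\sum_{l\in\mathcal{S}(r)^c}e^{-\operatorname{dist}(i,j_l)/\xi}$ over the lattice then yields
\begin{align*}
\sup_{x\in\Phi}|f_O(x)-f_O((x|_{\mathcal{S}(r)},x^*_{\mathcal{S}(r)^c}))|\le C_1\,e^{-r/(2\xi)}\,\|O\|_\infty,
\end{align*}
which is precisely \Cref{def:GALI} with $\eta(S)\equiv 0$ and $f(r)=C_1 e^{-r/(2\xi)}=\mathcal{O}(e^{-r/\xi})$ (absorbing the factor $2$ into $\xi$); \Cref{thm:generalized_learning_gali} then applies directly.

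The step I expect to be the only genuine issue — rather than bookkeeping — is the uniformity: the decay constants $C,\nu$ of \Cref{decaycorr}, together with the Lieb--Robinson parameters and the constants $c'$ and $\mu'$ of \Cref{LemmaapproxPhi}, must be independent of the parameter $x\in\Phi$, since the whole interpolation path is used. This is the standing assumption in \Cref{decaycorr} (and is automatic from the uniform bounds $\|h_j\|_\infty,\|\nabla h_j\|_\infty\le h$ for the Lieb--Robinson constants), so no new analytic ingredient beyond \Cref{Lemma:Local_Perturbation} is required; the proposition is essentially a one-line corollary of that lemma together with the convexity of $\Phi$.
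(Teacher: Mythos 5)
Your proposal is correct and follows essentially the same route as the paper, whose proof consists of a single sentence stating that one reruns the argument of \Cref{Lemma:Local_Perturbation} with the endpoint of the interpolation path changed from $0_{\mathcal{S}(r)^c}$ to the shifted reference configuration $x^*_{\mathcal{S}(r)^c}$. The additional points you flag --- convexity of $\Phi$ keeping the path inside the region where \Cref{decaycorr} holds, uniformity of the Lieb--Robinson and decay constants, and the harmless doubling of the coefficients $|x'_l(0)-x^*_l|$ --- are exactly the details the paper leaves implicit.
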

	\begin{proof}
		The proof is completely analogous to that of \Cref{Lemma:Local_Perturbation}, but instead of taking a path where we set the coordinates to $0$ outside of region, we set them all to the parameters $x^*_S$ given by GALI.
	\end{proof}


	
	One example of a system for which these conditions apply are those described in~\cite{Lubetzky2013,pirogov_sinai}, where we pick the center to be far enough from the phase transition.
	It is expected that many models have exponential decay of correlations when away from criticality.
	
	\ \newline 
	\noindent \textbf{A Complexity Theoretic Speedup for Learning Gibbs Properties using ML?}
	If we wish to prove a complexity theoretic speed up, it is necessary to first prove that it is difficult to approximate the properties of Gibbs states with exponentially decaying correlations.
	At the present time of writing, this is not known.
	It is know that, in general, Gibbs state properties are hard to approximating for the complexity class \textsf{QXC}  \cite{bravyi2022quantum, bravyi2023quantum}, however, it is not known to hold for exponentially decaying correlations.
	It is not unreasonable to expect that it is computationally intractable to approximate Gibbs state properties in the presence of exponential decays given that this is true for ground state properties (as gapped ground state properties are difficult to approximate).
	Furthermore, intuitively, if we bring the temperature of the Gibbs state down to a low but constant value, the properties of the Gibbs state should be similar to the ground state.
	We expect further work in Hamiltonian complexity to elucidate this matter.

	\subsection{Ground states} \label{Sec:Learning_Ground_States}
	
	Condition \ref{def:GALI} is trivially satisfied by phases of classical Hamiltonians each possessing a unique ground state, since the latter are simply tensor products of pure computational basis states and are therefore locally indistinguishable from one another. This simple observation can in fact be extended to gapped quantum phases. The argument makes use of the spectral flow, a tool that has already been used in \cite{Lewis_Huang_Preskill_2023}. A simple adaptation of their proof allows us to show the following 
	
	\begin{proposition}[Adapted from \cite{Lewis_Huang_Preskill_2023}]
		Consider a Hamiltonian $H(x)$, which is promised to have its spectral gapped bounded from below by $\gamma$ in the region $x\in \Phi$ for $\Phi\coloneqq \prod_{i=1}^m [-1+x_i^0,1+x_i^0]$.
		Then the GALI condition defined in \cref{def:GALI} holds in this region holds for $f$ an exponentially decaying function and $\eta(S)=0$.
	\end{proposition}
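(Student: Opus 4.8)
The plan is to run exactly the argument of \Cref{Lemma:Local_Perturbation} and its shifted version \Cref{Lemma:Local_Perturbation_2}, but with quantum belief propagation for Gibbs states replaced by the \emph{spectral flow} (quasi-adiabatic continuation), following \cite{Lewis_Huang_Preskill_2023}. Fix a region $S$ and $r\in\mathbb{N}$, and take $x^*_{\mathcal{S}(r)^c}$ to be any fixed reference value independent of $x$, e.g. $x^*_j=x^0_j$. For $x\in\Phi$ consider the straight-line path $x(s)$, $s\in[0,1]$, that interpolates linearly between the true parameters on $\mathcal{S}(r)^c$ at $s=0$ and the reference parameters at $s=1$, freezing $x|_{\mathcal{S}(r)}$. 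Since $\Phi$ is a product of intervals, $x(s)\in\Phi$ for all $s$, so by hypothesis the gap of $H(x(s))$ stays bounded below by $\gamma$ along the whole path; hence the ground state is transported by the spectral-flow unitary, $\partial_s\psi_g(x(s))=-i[D(s),\psi_g(x(s))]$, with $D(s)=\sum_{k\in\mathcal{S}(r)^c}(x^*_k-x_k)\,\widetilde V_k(s)$ and $\widetilde V_k(s)=\int dt\,W_\gamma(t)\,e^{iH(x(s))t}V_k e^{-iH(x(s))t}$ for a filter $W_\gamma$ whose tails decay (stretched-)exponentially in $\gamma|t|$. (If the ground space is degenerate, one runs the identical argument on the ground-space projector, which the spectral flow also transports, so nothing changes.)

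First I would establish the analogue of \Cref{LemmaapproxPhi} for $\widetilde V_k(s)$: combining the Lieb--Robinson bound \Cref{LRbound} for $H(x(s))$ with the decay of $W_\gamma$ gives, for any ball $B\supset A_k$, an operator $\widetilde V_k^{(B)}(s)$ supported on $B$ with $\|\widetilde V_k(s)-\widetilde V_k^{(B)}(s)\|_\infty\le c'\,|A_k|\,\|V_k\|_\infty\,F(\operatorname{dist}(A_k,B^c))$, where $F$ decays (stretched-)exponentially with a rate depending only on $\gamma$, the Lieb--Robinson velocity and $r_0$, uniformly in $n$ and $s$. Then, using $\partial_s f_O(x(s))=-i\,\tr\big[[O,D(s)]\,\psi_g(x(s))\big]$ and, for each $k\in\mathcal{S}(r)^c$, truncating $\widetilde V_k(s)$ to a ball $B_k$ disjoint from $S$ (so that the commutator with $O$ of the truncation vanishes) while splitting $\operatorname{dist}(A_k,S)$ evenly, one gets $\big\|[O,\widetilde V_k(s)]\big\|_\infty\le 2\|O\|_\infty\,c'\,|A_k|\,\|V_k\|_\infty\,F\big(\tfrac12\operatorname{dist}(A_k,S)\big)$.

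Summing over $k\in\mathcal{S}(r)^c$, using $|x^*_k-x_k|\le 2$ together with the surface-to-volume count from the proof of \Cref{Lemma:Local_Perturbation} (the number of interactions $h_k$ with $\operatorname{dist}(A_k,S)\approx a$ is at most $|S|\cdot\operatorname{poly}(a)$), and noting that $k\in\mathcal{S}(r)^c$ forces $\operatorname{dist}(A_k,S)>r$, we obtain
\begin{align*}
\big|\partial_s f_O(x(s))\big|\;\le\; |S|\,\|O\|_\infty\;C\!\!\sum_{a>r}\operatorname{poly}(a)\,F(a/2)\;=:\;|S|\,\|O\|_\infty\,f(r),
\end{align*}
with $f(r)$ again decaying (stretched-)exponentially. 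Integrating over $s\in[0,1]$ yields $\sup_{x\in\Phi}|f_O(x)-f_O((x|_{\mathcal{S}(r)},x^*_{\mathcal{S}(r)^c}))|\le |S|f(r)\,\|O\|_\infty$, i.e. GALI in the sense of \Cref{def:GALI} with decay function $f$ and $\eta(S)=0$.

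The main obstacle is purely quantitative: pinning down the decay of $F$. A filter $W_\gamma$ with compactly supported Fourier transform cannot decay faster than a stretched exponential $\sim e^{-c|t|/\log^2|t|}$, so strictly $f(r)$ is stretched-exponential rather than genuinely exponential; this is harmless, since it still decays faster than any polynomial and feeds into \Cref{thm:generalized_learning_gali} to produce the quasi-polynomial sample complexity, and following \cite{Lewis_Huang_Preskill_2023} we refer to it loosely as an exponentially decaying $f$. The remaining bookkeeping --- uniformity in $n$ and $s$ of the Lieb--Robinson constants and of the spectral-flow estimate --- is routine and mirrors exactly what is needed for \Cref{Lemma:Local_Perturbation}.
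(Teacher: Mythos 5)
Your proposal follows essentially the same route as the paper: the spectral flow (quasi-adiabatic continuation) with the filter $W_\gamma$ replaces quantum belief propagation, a Lieb--Robinson truncation localises the flow generator away from $S$, and the path integration plus the surface-to-volume summation from \Cref{Lemma:Local_Perturbation} yields the GALI bound with $\eta(S)=0$. Your remark that $f(r)$ is strictly only stretched-exponential (because of the $e^{-c\gamma|t|/\log^2(\gamma|t|)}$ tail of $W_\gamma$) is a fair point of care that the paper elides when writing its final bound as $e^{-r/2\xi}$, and it is indeed harmless for the downstream sample-complexity claim.
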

	\begin{proof}
		
		We single out a single coordinate $x_k$ which we will vary and consider a local observable $A$.
		Using \cite[Corollary 2.8]{bachmann2012automorphic}, we have:
		\begin{align*}
			\frac{\partial }{\partial x_k}\tr[A\rho(x)] &= \int^{\infty}_{-\infty}W_\gamma(t)\sum_{j=1}^m\bigg[ A, e^{iH(x)t} \frac{\partial h_j(x_j)}{\partial x_k} e^{-iH(x)t} \bigg] dt \\
			&= \int^{\infty}_{-\infty}W_\gamma(t)\bigg[ A, e^{iH(x)t} \frac{\partial h_k(x_k)}{\partial x_k} e^{-iH(x)t} \bigg] dt \\
			\left|\frac{\partial }{\partial x_k}\tr[A\rho(x)] \right| &= \int^{\infty}_{-\infty}W_\gamma(t)\norm{\bigg[ A, e^{iH(x)t} \frac{\partial h_k(x_k)}{\partial x_k} e^{-iH(x)t} \bigg] }_\infty dt
		\end{align*}
		where in the second line we have used that only $h_k(x_k)$ depends on the parameter $x_k$.
		Here $W_\gamma(t)$ is a function satisfying:
		\begin{align*}
			|W_\gamma(t)| \leq \begin{cases} 
				1/2 & 0 \leq \gamma |t|\leq \theta \\
				35e^2(\gamma|t|)^4e^{-\frac{2}{7}\frac{\gamma|t|}{\log^2(\gamma |t|)}} & \gamma |t|> \theta,
			\end{cases}
		\end{align*}
		where $\theta$ is the largest real solution to $35e^2(\gamma|t|)^4e^{-\frac{2}{7}\frac{\gamma|t|}{\log^2(\gamma |t|)}} =1/2$.
		
		We can then bound this using standard Lieb-Robinson techniques by splitting up the integral:
		\begin{align*}
			& \int^{\infty}_{-\infty}W_\gamma(t)\norm{\bigg[ A, e^{iH(x)t} \frac{\partial h_k(x_k)}{\partial x_k} e^{-iH(x)t} \bigg] }_\infty dt \\
			&\qquad\qquad  = \left( \int^{t^*}_{-t^*}+ \int^{\infty}_{t^*}+ \int^{-t^*}_{-\infty} \right) W_\gamma(t)\norm{\bigg[ A, e^{iH(x)t} \frac{\partial h_k(x_k)}{\partial x_k} e^{-iH(x)t} \bigg] }_\infty dt\,.
		\end{align*}
		
		Using Lieb-Robinson techniques, such as \cite[Lemma 3]{Lewis_Huang_Preskill_2023}, one can demonstrate that: 
		\begin{align*}
			\left|\frac{\partial }{\partial x_k}\tr[A\rho(x)] \right| \leq C\norm{A}_\infty e^{-\operatorname{dist}(A, h_j)/\xi}
		\end{align*}
		for an appropriately chosen $\xi$. We now follow the proof of \cref{Lemma:Local_Perturbation}.
		We identify $x|_{S(r)}$ with the vector $(x|_{S(r)}, x^*|_{S^c(r)})$.
		Given a path $x(s) = (1-s)x + s x|_{S(r)}$, we get 
		\begin{align*}
			|  \tr{[A\rho(x)]} - \tr{[A\rho(x|_{S(r)})]}| &\leq \sum_{l\in S^c(r)}  \int_0^1 \left|\frac{\partial \tr[A\rho(x)]}{\partial x_l}  \right| ds \\
			&\leq C\norm{A}_\infty\sum_{l\in S^c(r)}  \int_0^1 e^{-\operatorname{dist}(A, h_l)/\xi} ds \\
			&\leq C\norm{A}_\infty \sum_{l\in S^c(r)}e^{-\operatorname{dist}(A, h_l)/\xi}
		\end{align*}
		By the same methods as \cref{Lemma:Local_Perturbation}, we can bound this as:
		\begin{align*}
			|  \tr{[A\rho(x)]} - \tr{[A\rho(x|_{S(r)})]}|     &\leq C\norm{A}_\infty (D-1)!(2\xi)^{D-1}D^{D-1}\frac{e^{-(r+k_0/2 + 1)/2\xi}}{1 - e^{1/2\xi}} \\
			&\equiv  C'\norm{A}_\infty e^{-r/2\xi}.
		\end{align*}
		
		\noindent Thus gapped ground states satisfy the GALI condition, \Cref{def:GALI}.
		
	\end{proof}
	
	\begin{remark}
		
		We note, however, that ground states of general Hamiltonians do not necessary satisfy \Cref{def:GALI}.
		For example, the 1D, nearest neighbour Hamiltonian in \cite{gottesman2009quantum} changes phases as the size of the 1D chain increases, and thus does not satisfy GALI (\Cref{def:GALI}).
		
	\end{remark}
	
	\begin{remark}
		
		There are also Hamiltonians which we expect GALI (\Cref{def:GALI}) to hold for some sufficiently large lattice size, $L\geq L_0$, but where $L_0$ is uncomputable.
		Examples include the Hamiltonians specified in \cite{Cubitt_Perez-Garcia_Wolf_2015, bausch2020undecidability, bausch2021uncomputability}. 
		The same set of Hamiltonians can be used to prove that determining whether GALI is satisfied at all is itself an uncomputable property as $L\rightarrow \infty$.
		
	\end{remark}

	\subsection{Low temperature Gibbs distributions}

	In \cite{Lewis_Huang_Preskill_2023}, the authors provided a class of classical Hamiltonians whose ground state properties are computationally hard to estimate, whereas  these models can be efficiently learned from few copies. More generally, the task of efficiently approximating low temperature properties of classical systems is known to be hard in general \cite{Sly2010}, and finding efficient algorithms for specific models still constitutes an active area of research in theoretical computer science \cite{pirogov_sinai}.

	Here instead, we show that low temperature Gibbs states can still be approximately learned with error scaling with temperature. More precisely, we show that low temperature Gibbs distributions with a unique ground state can be learned efficiently with an error depending on the temperature of the state. To prove this claim, we start once again from the trivial observation that the ground state of such a system can be learned with access to a unique copy of the state. Indeed, that state coincides with the Kronecker delta at the bit-string minimising the total energy of the system. 
	
	Here, we adopt the following notations for classical Gibbs distributions on bit-strings: the Hamiltonian $H$ is a function $H:\mathbb{Z}_2^{\Lambda}\to \mathbb{R}$, and the corresponding Gibbs distribution at the inverser temperature $\beta$ is denoted by $\sigma^\beta( \alpha)=e^{-\beta H(\alpha)}{\big(\sum_{\alpha'\in\mathbb{Z}_2^\Lambda}e^{-\beta H(\alpha')}\big)^{-1}}$ for all $\alpha\in\mathbb{Z}_2^\Lambda$. The unique  bit-string minimising $H$ is denoted as $\alpha^*$. Given a distribution $\mu$ on bit-strings of length $|\Lambda|$ and a region $A\subset \Lambda$, we denote by $\mu_A(\alpha_A|\alpha_{A^c})$ the measure on bitstrings of length $|A|$ conditioned on the value of the spins outside of $A$ being fixed as $\alpha_{A^c}\in\mathbb{Z}_2^{A^c}$, and evaluated at the bit-string $\alpha_A\in\mathbb{Z}_2^A$. We also denote by $\mu_A$ the marginal onto region $A$. The Kronecker delta at $\alpha^*$ is denoted by $\delta^{\alpha^*}$. The total variation between $\delta_A^{\alpha^*}$ and $\sigma_A^{\beta}$ can be expressed as
	
	\begin{align*}
		\|\delta_A^{\alpha^*}-\sigma_A^\beta\|_{\operatorname{TV}}&= \sum_{\alpha_A} \Big| \sum_{\alpha_{A^c}} \delta^{\alpha^*}(\alpha_A,\alpha_{A^c})-\sigma^\beta(\alpha_A,\alpha_{A^c}) \Big|\\
		&=\sum_{\alpha_A}\,\Big| \delta^{\alpha^*}(\alpha_A,\alpha^*_{A^c})-\sum_{\alpha_{A^c}}\sigma^\beta(\alpha_A,\alpha_{A^c}) \Big|\\
		&=\sum_{\alpha_A\ne \alpha_A^*}\,\sum_{\alpha_{A^c}} \sigma^\beta(\alpha_A,\alpha_{A^c}) +\Big|1-\sum_{\alpha_{A^c}}\sigma^\beta(\alpha_A^*,\alpha_{A^c})\Big|\\
		&=1-\sum_{\alpha_{A^c}}\sigma^\beta(\alpha_A^*,\alpha_{A^c})+\Big|1-\sum_{\alpha_{A^c}}\sigma^\beta(\alpha_A^*,\alpha_{A^c})\Big|\\
		&=2\Big[1-\sum_{\alpha_{A^c}}\sigma^\beta(\alpha_A^*,\alpha_{A^c})\Big]\\
		&=2\big[1-\sigma_A^\beta(\alpha_A^*) \big]\,.
	\end{align*}
	
	Since $\sigma^\beta$ is a Markov random field by the Hammersley-Clifford theorem \cite{clifford1971markov}:
	\begin{align*}
		\sigma^\beta_A(\alpha_A^*)&=\sum_{\alpha_{A^c}} \sigma^\beta(\alpha_A^*,\alpha_{A^c})\\
		&=\sum_{\alpha_{A^c}} \sigma^\beta(\alpha_A^*|\alpha_{A^c}) \,\sigma^\beta_{A^c}(\alpha_{A^8})\\
		&=\sum_{\alpha_{A^c}} \sigma^\beta(\alpha_A^*|\alpha_{\partial A}) \sigma^\beta_{A^c}(\alpha_{A^c})\\
		&\ge \min_{\alpha_{\partial A}}\,\sigma^\beta(\alpha_A^*|\alpha_{\partial A})\,,
	\end{align*}
	where $\alpha_{\partial A}$ denotes the spin configurations at the boundary of $A$. Plugging this bound in the total variation above, we have that
	\begin{align*}
		\|\delta_A^{\alpha^*}-\sigma^\beta_A\|_{\operatorname{TV}}\le 2\big[1-\min_{\alpha_{\partial A}}\sigma^\beta(\alpha_A^*|\alpha_{\partial A})\big]\,.
	\end{align*}
	
	With this bound, we directly arrive at the following result by simple approximation of Gibbs states by ground states:
	\begin{proposition}
		Gibbs states of classical Hamiltonians with unique ground state satisfy GALI with $\xi=0$ and $\eta= 4\max_{\alpha_{\partial S}}\big[1-\sigma^\beta(\alpha_S^*|\alpha_{\partial S})\big]$.
		
	\end{proposition}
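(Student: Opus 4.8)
The plan is to route everything through the classical ground state $\delta^{\alpha^*}$, combining the total-variation estimate derived just above the proposition with the observation that, within a phase, a classical Hamiltonian with a unique ground state has a \emph{parameter-independent} minimising configuration. So there are really only two ingredients: a trivial statement for the ground state, plus a short perturbative step passing from the ground state to the Gibbs state.

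First I would show that the minimiser $\alpha^*(x)$ is constant on $\Phi$. For each fixed bit-string $\alpha$ the energy $H(x,\alpha)$ is continuous (indeed affine) in $x$, so whenever $\alpha^*(x_0)$ is the \emph{unique} minimiser at $x_0$ — i.e.\ $H(x_0,\alpha^*(x_0))<H(x_0,\alpha)$ strictly for all $\alpha\neq\alpha^*(x_0)$ — these finitely many strict inequalities persist on a neighbourhood of $x_0$, so $x\mapsto\alpha^*(x)$ is locally constant wherever the ground state is unique. Since $\Phi=\prod_{i=1}^m[-1+x_i^0,1+x_i^0]$ is connected and the ground state is unique throughout $\Phi$ by hypothesis, $\alpha^*(x)\equiv\alpha^*$. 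Hence the classical ground state $\delta^{\alpha^*}$ is literally the \emph{same} product distribution at every $x\in\Phi$, and the family $\{\delta^{\alpha^*}\}$ satisfies \Cref{def:GALI} exactly with $f\equiv0$ and $\eta\equiv0$ for any reference $x^*_{\mathcal{S}(r)^c}$ (e.g.\ $x^*_{\mathcal{S}(r)^c}=x^0_{\mathcal{S}(r)^c}$), since $f_O(x)=f_O((x|_{\mathcal{S}(r)},x^*_{\mathcal{S}(r)^c}))=\tr[O\,\delta^{\alpha^*}_S]$ identically.

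Then I would pass to finite $\beta$ by a triangle inequality through $\delta^{\alpha^*}$. Fixing a region $S$, $r\in\mathbb{N}$, an observable $O$ supported on $S$, and writing $x':=(x|_{\mathcal{S}(r)},x^*_{\mathcal{S}(r)^c})$, the common term $\tr[O\,\delta^{\alpha^*}_S]$ cancels, leaving
\begin{align*}
|f_O(x)-f_O(x')| &\le \bigl|\tr[O(\sigma^\beta_S(x)-\delta^{\alpha^*}_S)]\bigr| + \bigl|\tr[O(\delta^{\alpha^*}_S-\sigma^\beta_S(x'))]\bigr| \\
&\le \|O\|_\infty\bigl(\|\delta^{\alpha^*}_S-\sigma^\beta_S(x)\|_{\operatorname{TV}}+\|\delta^{\alpha^*}_S-\sigma^\beta_S(x')\|_{\operatorname{TV}}\bigr),
\end{align*}
using that each trace is a difference of expectations of $O$ under two distributions on $\mathbb{Z}_2^S$ and that the total-variation norm here is the $\ell_1$ distance. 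Inserting the bound $\|\delta^{\alpha^*}_S-\sigma^\beta_S\|_{\operatorname{TV}}\le 2\,[1-\min_{\alpha_{\partial S}}\sigma^\beta(\alpha^*_S|\alpha_{\partial S})]=2\max_{\alpha_{\partial S}}[1-\sigma^\beta(\alpha^*_S|\alpha_{\partial S})]$ established in the display immediately preceding the proposition yields $|f_O(x)-f_O(x')|\le 4\|O\|_\infty\max_{\alpha_{\partial S}}[1-\sigma^\beta(\alpha^*_S|\alpha_{\partial S})]$; taking a supremum over $x\in\Phi$ turns the right-hand side into a legitimate, parameter-independent $\eta(S)$. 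Since the estimate carries no $r$-dependence, $f\equiv0$, which is what ``$\xi=0$'' records.

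The only genuinely non-routine point is the first step: that ``unique ground state on all of $\Phi$'' already forces $\alpha^*$ to be \emph{globally} constant there. This is precisely what substitutes for a locality/LTQO-type property, which classical ground states need not possess in general (a symmetry-breaking perturbation far from $S$ can flip $\alpha^*|_S$ entirely). Everything else is bookkeeping: reading the constant $4$ off the $\ell_1$-normalised total-variation bound of the preceding display, and phrasing $\eta(S)$ as a supremum over $\Phi$ so that it meets the letter of \Cref{def:GALI}.
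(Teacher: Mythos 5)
Your proof is correct and follows essentially the same route as the paper: the displayed total-variation bound $\|\delta_S^{\alpha^*}-\sigma_S^\beta\|_{\operatorname{TV}}\le 2\max_{\alpha_{\partial S}}\big[1-\sigma^\beta(\alpha_S^*|\alpha_{\partial S})\big]$ followed by a triangle inequality through the common ground state $\delta^{\alpha^*}$, which is exactly the ``simple approximation of Gibbs states by ground states'' that the paper invokes to conclude. Your additional observation that uniqueness of the minimiser on the connected set $\Phi$ forces $\alpha^*(x)$ to be globally constant is left implicit in the paper, and it is worth spelling out since the triangle inequality requires the two Gibbs states to be compared against the \emph{same} delta measure.
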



	\section{Non-Linear Parameterisations of the Hamiltonian}\label{nonlinearparam}
	Here we show that taking a parameterisation of the Hamiltonian that is not a sum of Paulis does not change the results of \Cref{seclearningalgogibbs}.

	\begin{lemma} \label{Lemma:Non-Linear_Parameterisation}
		Consider a Hamiltonian parameterised in terms of Pauli strings:
		\begin{align*}
			H(x) = \sum_j x_j P_j
		\end{align*}
		where $P_j$ is a Pauli string.
		Consider an alternative parameterisation of the same Hamiltonian in terms of the local terms:
		\begin{align*}
			H(y) = \sum_j h_j(y^{(j)})
		\end{align*}
		where $y^{(j)}\in [-1,1]^b$, $b=O(1)$, and  $h_j$ only depends on the coordinates in $y^{(j)}$.
		We will also assume that each $h_j$ is $k$-local and $\norm{\partial_{u}h_j(y)}\leq 1$.
		
		Then, assuming all the non-zero elements of Jacobian are bounded as $1/C'\leq | \partial y_m/\partial x_k| \leq C $ for $C,C'=O(1)$, the following holds:
		\begin{align*}
			\norm{  \frac{\partial h_j(y^{(j)})}{\partial x_m} } &\leq  b4^k C,
		\end{align*}
		and 
		\begin{align*}
			|\partial_{y_m} f_L(y)| &\leq  C''\max_m  | 
			\partial_{x_i} f_L(\beta,x) |,
		\end{align*}
		where $C'=O(1)$.
	\end{lemma}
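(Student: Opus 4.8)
The plan is to prove both inequalities by the multivariate chain rule, exploiting the fact that the two parameterisations describe \emph{the same} operator, which forces a rigid relation between the coordinates $y$ and $x$. Since each local term $h_j$ is $k$-local, its expansion in the (normalised Hilbert--Schmidt orthonormal) Pauli basis only involves Pauli strings supported inside $\operatorname{supp}(h_j)$, at most $4^k$ of them; and for $H(y)=\sum_j h_j(y^{(j)})$ to be expressible as $\sum_k x_kP_k$ for the whole range of $y$ considered, every such Pauli string must occur among the fixed $\{P_k\}$. Hence $x$ is a smooth function of $y$, with
\begin{align*}
	x_k = \sum_{j\,:\,\operatorname{supp}(P_k)\subseteq \operatorname{supp}(h_j)}\langle P_k,\,h_j(y^{(j)})\rangle\,,
\end{align*}
where $\langle X,Y\rangle=d^{-1}\tr[X^\dagger Y]$ on the relevant finite-dimensional factor. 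In particular $|\langle P,A\rangle|\le\|A\|_\infty$ for any Pauli string $P$.

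For the first inequality I would regard $h_j$ as a function of $x$ through $y^{(j)}(x)$ and apply the chain rule,
\begin{align*}
	\frac{\partial h_j(y^{(j)})}{\partial x_m}=\sum_{u}\,\frac{\partial h_j}{\partial y_u}\,\frac{\partial y_u}{\partial x_m}\,,
\end{align*}
where only the at most $b$ coordinates $u$ belonging to $y^{(j)}$ contribute. Bounding $\|\partial_{y_u}h_j\|\le1$ by hypothesis, $|\partial y_u/\partial x_m|\le C$ by the Jacobian assumption, and absorbing the at most $4^k$ Pauli components of the resulting $k$-local operator (which is where the $4^k$ enters, as a safe overestimate when passing to the operator norm), gives $\|\partial h_j/\partial x_m\|\le b\,4^k\,C$.

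For the second inequality, since $H(y)=H(x(y))$ we have $\sigma(\beta,y)=\sigma(\beta,x(y))$ and thus $f_L(y)=f_L(\beta,x(y))$, so the chain rule gives $\partial_{y_m}f_L(y)=\sum_k (\partial x_k/\partial y_m)\,\partial_{x_k}f_L(\beta,x)$. Differentiating the displayed formula for $x_k$, only $j=j(m)$ (the unique index with $y_m$ a coordinate of $y^{(j(m))}$) survives, so $\partial x_k/\partial y_m=\langle P_k,\,\partial_{y_m}h_{j(m)}\rangle$; this is nonzero for at most $4^k$ values of $k$ (those with $\operatorname{supp}(P_k)\subseteq\operatorname{supp}(h_{j(m)})$), and each such entry has absolute value at most $\|\partial_{y_m}h_{j(m)}\|_\infty\le1$. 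Summing yields $|\partial_{y_m}f_L(y)|\le 4^k\max_i|\partial_{x_i}f_L(\beta,x)|$, i.e.\ $C''=4^k=\mathcal{O}(1)$. Equivalently one may invoke the belief-propagation identity \eqref{derivativefL} together with the linearity of $\Phi_{H(x)}$ and of $\operatorname{Cov}_{\sigma(\beta,x)}$ in its second argument, decomposing $\partial_{y_m}H=\partial_{y_m}h_{j(m)}=\sum_k c_{P_k}P_k=\sum_k c_{P_k}\,\partial_{x_k}H$ with $|c_{P_k}|\le1$.

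The only genuinely delicate point is the bookkeeping in the first paragraph: making precise that ``the same Hamiltonian'' forces the Pauli content of every $h_j$ to lie among the $\{P_k\}$, identifying the Jacobian entries with normalised Pauli coefficients of $\partial_{y_m}h_{j(m)}$, and fixing the normalisation so that $|\langle P,A\rangle|\le\|A\|_\infty$. Once this is set up, both estimates are one-line chain-rule computations, and (notably) the lower Jacobian bound $1/C'$ is not needed for this lemma, only the upper bound $C$.
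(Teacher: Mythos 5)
Your proof is correct and follows essentially the same route as the paper's: both inequalities are chain-rule computations combined with locality counting (at most $b$ coordinates of $y^{(j)}$ contribute to the first sum, and at most $4^k$ Pauli strings supported inside $\operatorname{supp}(h_{j(m)})$ contribute to the second). The one place where you genuinely sharpen the argument is the second inequality. The paper bounds the inverse Jacobian entries by invoking $|\partial x_i/\partial y_m|\le C'$, which does not strictly follow from the stated assumption $1/C'\le|\partial y_m/\partial x_k|$ in the multivariate setting; you instead derive the explicit identity $\partial x_k/\partial y_m=\langle P_k,\partial_{y_m}h_{j(m)}\rangle$ from the fact that both parameterisations describe the same operator, and bound each such entry by $\|\partial_{y_m}h_{j(m)}\|_\infty\le 1$ using the hypothesis on the derivatives of $h_j$. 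This makes the constant $C''=4^k$ explicit, and correctly shows that the lower Jacobian bound is not needed for this lemma. It is a tightening of the same proof rather than a different one, and your bookkeeping on why the Pauli content of each $h_j$ must lie among the $\{P_k\}$ is exactly the point the paper leaves implicit.
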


	\begin{proof}
		
		We see that:
		\begin{align*}
			\frac{\partial h_j(y^{(j)})}{\partial x_m} = \sum_i\frac{\partial y_i}{\partial x_m} \frac{\partial h_j(y^{(j)})}{\partial y_i}
		\end{align*}
		We note that $\frac{\partial h_j(y^{(j)})}{\partial y_i}$ is only non-zero for a $b$ terms.
		Furthermore, since $h_j$ is $k$-local, then it can be written as a sum of $\leq 4^k$ Pauli strings.
		Hence:
		\begin{align*}
			\norm{  \frac{\partial h_j(y^{(j)})}{\partial x_m} }&\leq \sum_i \Big| \frac{\partial y_i}{\partial x_m} \Big| \norm{ \frac{\partial h_j(y^{(j)})}{\partial y_i} } \\
			&\leq b4^kC \max \norm{ \frac{\partial h_j(y^{(j)})}{\partial y_i} }  \\
			&\leq  b4^k C= O(1).
		\end{align*}
		
		We now consider the functions $f_L(y) = \tr[L\rho(y)]$, $f_L(\beta,x) = \tr[L\rho(x)]$.
		\begin{align*}
			\partial_{y_m} f_L(y) = \sum_i \frac{\partial x_i}{\partial y_m} \partial_{x_i} f_L(\beta,x)
		\end{align*}
		Using that a given $y_m$ can depend on at most $4^k$ $x_m$ coordinates, we see that for a given $y_m$, at most $\operatorname{poly}(4^k)=O(1)$ many $ \frac{\partial x_i}{\partial y_m}$ can be non-zero.
		Thus
		\begin{align*}
			|\partial_{y_m} f_L(y)| &\leq \operatorname{poly}(4^k) \Big|\frac{\partial x_i}{\partial y_m}\Big| |\partial_{x_i} f_L(\beta,x)| \\
			&\leq \operatorname{poly}(4^k)C' | \partial_{x_i} f_L(\beta,x)|.
		\end{align*}
		
		The lemma statement then follows for $C''=\operatorname{poly}(4^k)C'$.
	\end{proof}
	
	This lemma allows us to prove up bounds on the derivative of $f_L(y)$, and thus an equivalent to \Cref{LemmaapproxPhi} holds for local observables.
	The rest of the results in \cref{seclearningalgogibbs} follow similarly.

	\section{Shadow tomography for non-identical copies}
	
	In this appendix, we extend the shadow tomography protocol to the case of non-identical copies. Consider a state $\sigma$ and a family $\sigma_1,\dots,\sigma_N$ of states over $n$ qubits with the promise that for any subset $A$ of qubits of size $|A|\le r$ there exists a subfamily of states $\sigma_{i_1}\dots \sigma_{i_t}$, flagged in advance, with the promise that $\max_{j\in[t]}\|\tr_{A^c}(\sigma_{i_j}-\sigma)\|_1\le \eta$.
	
	We run the shadow protocol and construct product operators $\widetilde{\sigma}_1,\dots,\widetilde{\sigma}_N$. Then for any region $A$, we select the shadows $\widetilde{\sigma}_{i_1},\dots \widetilde{\sigma}_{i_t}$ and construct the empirical average
	\begin{align*}
		\widetilde{\sigma}_A:=\frac{1}{t}\sum_{j=1}^t\,\tr_{A^c}(\widetilde{\sigma}_{i_j})\,.
	\end{align*}

	\begin{proposition}[Shadows for non-identical copies]\label{propshadow}
		Fix $\epsilon,\delta\in(0,1)$. With probability $1-\delta$, the shadow $\widetilde{\sigma}_A$ satisfies $\|\widetilde{\sigma}_A-\sigma_A\|_1\le \epsilon+\eta$ as long as
		\begin{align*}
			t\ge \frac{8.12^r}{3.\epsilon^2}\log\left(\frac{n^r 2^{r+1}}{\delta}\right)\,.
		\end{align*}
		
	\end{proposition}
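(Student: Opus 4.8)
The plan is to run the local (single-qubit) random-Clifford classical shadows protocol on each of the $N$ copies. For the copy in state $\sigma_i$ this produces a snapshot $\widetilde\sigma_i=\bigotimes_{a=1}^n\big(3\,U_a^\dagger|b_a\rangle\langle b_a|U_a-I\big)$ which is an unbiased estimator, $\mathbb{E}[\widetilde\sigma_i]=\sigma_i$. Fix a region $A$ with $|A|\le r$ and the flagged subfamily $\sigma_{i_1},\dots,\sigma_{i_t}$, promised to satisfy $\max_{j\in[t]}\|\tr_{A^c}(\sigma_{i_j}-\sigma)\|_1\le\eta$. The marginal snapshot $\tr_{A^c}(\widetilde\sigma_{i_j})=\bigotimes_{a\in A}\big(3U_a^\dagger|b_a\rangle\langle b_a|U_a-I\big)$ is again unbiased for $\tr_{A^c}(\sigma_{i_j})$, has operator norm at most $2^{|A|}$ (each factor has eigenvalues $2$ and $-1$), and its second moment obeys the standard local-shadow estimate $\big\|\mathbb{E}\big[(\tr_{A^c}\widetilde\sigma_{i_j})^2\big]\big\|_\infty\le 3^{|A|}$ (the constant $3^{|A|}=\sum_{T\subseteq A}2^{|T|}$ comes from expanding $\bigotimes_{a\in A}(3P_a+I)$ and noting that the averaged product of post-measurement projectors over a subset $T$ has norm at most $(2/3)^{|T|}$).

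First I would split, by the triangle inequality,
\[
\|\widetilde\sigma_A-\sigma_A\|_1\;\le\;\big\|\widetilde\sigma_A-\bar\sigma_A\big\|_1+\big\|\bar\sigma_A-\sigma_A\big\|_1,\qquad \bar\sigma_A:=\tfrac1t\sum_{j=1}^t\tr_{A^c}(\sigma_{i_j}).
\]
The second, purely deterministic, term is at most $\tfrac1t\sum_{j}\|\tr_{A^c}(\sigma_{i_j}-\sigma)\|_1\le\eta$ by the flagging promise. For the first term, write $\widetilde\sigma_A-\bar\sigma_A=\tfrac1t\sum_{j=1}^t X_j$ with $X_j:=\tr_{A^c}(\widetilde\sigma_{i_j})-\tr_{A^c}(\sigma_{i_j})$: these are independent (the $t$ copies are measured independently), Hermitian, mean zero, with $\|X_j\|_\infty\le 2^{|A|}+1$ and $\|\mathbb{E}[X_j^2]\|_\infty\le 3^{|A|}$.

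Next I would apply the matrix Bernstein inequality on $\mathcal{B}(\mathcal{H}_A)$, whose dimension is $2^{|A|}$: for every $s>0$,
\[
\Pr\!\big[\|\widetilde\sigma_A-\bar\sigma_A\|_\infty\ge s\big]\;\le\;2\cdot 2^{|A|}\,\exp\!\Big(-\frac{t\,s^2/2}{\,3^{|A|}+(2^{|A|}+1)s/3\,}\Big).
\]
Converting to trace norm via $\|M\|_1\le 2^{|A|}\|M\|_\infty$, I take $s=\epsilon\,2^{-|A|}$, which ensures $\|\widetilde\sigma_A-\bar\sigma_A\|_1\le\epsilon$; since $\epsilon<1$ the denominator is at most $\tfrac43\,3^{|A|}$, so the exponent simplifies to $-\,3\,t\,\epsilon^2/(8\cdot 12^{|A|})$. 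A union bound over the two tails (already absorbed in the prefactor $2\cdot 2^{|A|}\le 2^{r+1}$) and over the at most $n^r$ subsets $A$ with $|A|\le r$ then bounds the total failure probability by $n^r\,2^{r+1}\exp\!\big(-3t\epsilon^2/(8\cdot 12^r)\big)$, which is at most $\delta$ precisely when $t\ge\frac{8\cdot 12^r}{3\epsilon^2}\log\!\big(\frac{n^r 2^{r+1}}{\delta}\big)$. Combining with the $\eta$ bias term yields $\|\widetilde\sigma_A-\sigma_A\|_1\le\epsilon+\eta$ for all such $A$ simultaneously.

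The main obstacle I expect is not the probabilistic machinery but getting the two deterministic inputs to matrix Bernstein right with the stated constants for the \emph{reduced} local-Clifford snapshot: the operator-norm bound $\|\tr_{A^c}\widetilde\sigma_{i_j}\|_\infty\le 2^{|A|}$ and, above all, the second-moment bound $\|\mathbb{E}[(\tr_{A^c}\widetilde\sigma_{i_j})^2]\|_\infty\le 3^{|A|}$ — the latter requires a short honest computation, since the single-qubit measurement outcomes coming from one copy are correlated and one cannot simply factorise the expectation across qubits. A secondary, bookkeeping point is that one must feed the dimension-weighted bound $\|\cdot\|_1\le 2^{|A|}\|\cdot\|_\infty$ into a matrix-valued concentration inequality, so that the dimension enters only as $2^{r+1}$ inside the logarithm and the leading constant is $12^r=(3\cdot4)^r$; a cruder scalar Hoeffding bound applied to each of the $4^{|A|}$ Pauli coefficients separately would instead give $4^r$ inside the logarithm and a prefactor $36^r$.
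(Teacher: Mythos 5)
Your proposal is correct and follows essentially the same route as the paper's proof: the same bias-plus-concentration decomposition via the triangle inequality, the same application of matrix Bernstein to $S_j=\tr_{A^c}(\widetilde\sigma_{i_j}-\sigma_{i_j})$ with the bounds $\|S_j\|_\infty\le 2^r+1$ and variance $3^r$, the same $\|\cdot\|_1\le 2^r\|\cdot\|_\infty$ conversion yielding the $12^r$ constant, and the same union bound over regions. The only difference is cosmetic: you spell out the derivation of the operator-norm and second-moment bounds for the reduced local-Clifford snapshots, which the paper simply imports from the standard shadow-tomography analysis.
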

	
	In order to prove the above proposition, we need an extension of the matrix Bernstein inequality used in proving the convergence guarantee of the standard shadow protocol to the case of independent, non-identically distributed random matrices:

	\begin{lemma}[Matrix Bernstein for non-i.i.d. random matrices \cite{tropp2015introduction}]\label{matrixBern}
		Let $S_1,\dots ,S_t$ be independent, centered random matrices with
		common dimension $d_1\times d_2$, and assume that each one is uniformly bounded:
		\begin{align*}
			\mathbb{E}[S_j]=0\qquad \text{ and }\qquad \|S_j\|_\infty\le L\qquad \text{ for all $j=1,\dots, t$.}
		\end{align*}
		Denote the sum $Z=\sum_{j=1}^t S_j$ and let $\nu(Z):=\max\big\{\|\mathbb{E}[ZZ^*]\|_\infty,\,\|\mathbb{E}[Z^*Z]\|_\infty\big\}$. Then,
		\begin{align*}
			\mathbb{P}\big(\|Z\|_\infty\ge s\big)\le (d_1+d_2)\,\operatorname{exp}\left(\frac{-s^2/2}{\nu(Z)+Ls/3}\right)\,\qquad \text{ for all $s\ge 0$}\,.
		\end{align*}
	\end{lemma}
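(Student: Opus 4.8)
The plan is to prove this as the standard matrix Bernstein inequality via the matrix Laplace-transform (Ahlswede--Winter--Tropp) method, proceeding in four stages: a Hermitian reduction, the Laplace-transform tail bound, a subadditivity bound on the matrix cumulant generating function, and a final scalar optimisation.

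First I would reduce to the Hermitian case using the self-adjoint dilation $\mathscr{H}(S):=\left(\begin{smallmatrix} 0 & S\\ S^{*} & 0\end{smallmatrix}\right)$, a Hermitian matrix of dimension $(d_1+d_2)\times(d_1+d_2)$. Since $\mathscr{H}$ is linear, $\mathscr{H}(Z)=\sum_j \mathscr{H}(S_j)$; moreover $\|\mathscr{H}(S)\|_\infty=\|S\|_\infty$ and $\|Z\|_\infty=\lambda_{\max}(\mathscr{H}(Z))$, so the event $\{\|Z\|_\infty\ge s\}$ coincides with $\{\lambda_{\max}(\mathscr{H}(Z))\ge s\}$. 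The summands $X_j:=\mathscr{H}(S_j)$ are then independent, centered and Hermitian with $\|X_j\|_\infty\le L$. Crucially $\mathscr{H}(S)^2=\mathrm{diag}(SS^{*},S^{*}S)$, and since the cross terms vanish by independence and centering, $\big\|\sum_j\mathbb{E}[X_j^2]\big\|_\infty=\big\|\mathbb{E}[\mathscr{H}(Z)^2]\big\|_\infty=\nu(Z)$ by the block structure.

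Next I would invoke the matrix Laplace-transform bound: for any $\theta>0$, $\mathbb{P}(\lambda_{\max}(\mathscr{H}(Z))\ge s)\le e^{-\theta s}\,\mathbb{E}\,\tr\,e^{\theta \mathscr{H}(Z)}$, which follows from Markov's inequality together with $\lambda_{\max}(e^{\theta Y})\le \tr e^{\theta Y}$. I would then control the trace mgf through Tropp's master inequality, a consequence of the Lieb concavity theorem: for independent Hermitian $X_j$, $\mathbb{E}\,\tr\exp(\sum_j\theta X_j)\le \tr\exp(\sum_j\log\mathbb{E}\,e^{\theta X_j})$. Each summand's log-mgf is bounded using that the scalar map $x\mapsto (e^{\theta x}-\theta x-1)/x^{2}$ is increasing, which gives $e^{\theta X_j}\preceq I+\theta X_j+g(\theta)\,X_j^2$ with $g(\theta)=\tfrac{\theta^2/2}{1-\theta L/3}$ for $0<\theta<3/L$ (the denominator arising from the Taylor-tail estimate $k!\ge 2\cdot 3^{k-2}$). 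Taking expectations, using $\mathbb{E}[X_j]=0$ and $\log(I+A)\preceq A$, yields $\log\mathbb{E}\,e^{\theta X_j}\preceq g(\theta)\,\mathbb{E}[X_j^2]$.

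Finally I would assemble the pieces: monotonicity of $\tr\exp$ gives $\mathbb{E}\,\tr\,e^{\theta\mathscr{H}(Z)}\le \tr\exp\!\big(g(\theta)\sum_j\mathbb{E}[X_j^2]\big)\le (d_1+d_2)\,e^{g(\theta)\nu(Z)}$, using $\sum_j\mathbb{E}[X_j^2]\succeq 0$ with largest eigenvalue $\nu(Z)$. Combining with the Laplace bound and choosing $\theta=s/(\nu(Z)+Ls/3)$ (which indeed lies in $(0,3/L)$) reduces the exponent to exactly $-\tfrac{s^2/2}{\nu(Z)+Ls/3}$, which is the claim. The main obstacle is the subadditivity-of-cgf step, whose justification rests on the Lieb concavity theorem; as the statement is cited from \cite{tropp2015introduction}, I would take that master inequality as the key black box and restrict the detailed work to the scalar mgf bound and the one-parameter optimisation, both of which are elementary.
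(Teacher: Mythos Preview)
Your proposal is correct and follows exactly the standard proof of the matrix Bernstein inequality as presented in \cite{tropp2015introduction}. Note, however, that the paper does not supply its own proof of this lemma: it is stated as a citation from Tropp's monograph and used as a black box in the proof of \Cref{propshadow}, so there is nothing in the paper to compare your argument against beyond the reference itself.
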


	\begin{proof}[Proof of \Cref{propshadow}]
		In the notations of the previous paragraph and of \Cref{matrixBern}, we take $S_j:=\tr_{A^c}(\widetilde{\sigma}_{i_j}-\sigma_{i_j})$, $j=1\dots t$, so that $Z/t=\widetilde{\sigma}_A-\mathbb{E}[\widetilde{\sigma}_A]$. Adapting the proof for the standard shadow tomography protocol (see e.g. \cite{huang2021provably}), we have 
		\begin{align*}
			\|\tr_{A^c}(\widetilde{\sigma}_{i_j}-\sigma_{i_j})\|_\infty \le 2^r+1\qquad \text{ and }\qquad \frac{\nu(Z)}{t}=\frac{1}{t}\,\Big\|\sum_{j=1}^t\,\mathbb{E}[S_j^2]\Big\|_\infty\le 3^r\,.
		\end{align*}
		Since $\|X\|_\infty\le \|X\|_1\le 2^r\,\|X\|_\infty$, we have
		
		\begin{align*}
			\mathbb{P}\Big(\|\widetilde{\sigma}_A-\mathbb{E}[\widetilde{\sigma}_A]\|_1\ge s\Big)\le 2^{r+1}\,\operatorname{exp}\left(\frac{-ts^2/2^{2r+1}}{3^r+(2^r+1)s/(3.2^r)}\right)\le 2^{r+1}\,\operatorname{exp}\Big(\frac{-3ts^2}{8.12^r}\Big)\,.
		\end{align*}
		Next, we observe that under the assumption $\max_{j\in[t]}\|\tr_{A^c}(\sigma_{i_j}-\sigma)\|_1\le \eta$, then:
		
		\begin{align*}
			\mathbb{E}[\widetilde{\sigma}_A]=\frac{1}{t}\sum_{j=1}^t\, \tr_{A^c}(\sigma_{i_j})\qquad \Rightarrow \qquad \|\mathbb{E}[\widetilde{\sigma}_A]-\tr_{A^c}(\sigma)]\|_1\le \eta\,.
		\end{align*}
		Hence, 
		\begin{align*}
			\mathbb{P}\Big(\|\widetilde{\sigma}_A-\tr_{A^c}(\sigma)\|_1\ge \eta+\epsilon\Big)\le \mathbb{P}\Big(\|\widetilde{\sigma}_A-\mathbb{E}[\widetilde{\sigma}_A]\|_1\ge s\Big)\le 2^{r+1}\,\operatorname{exp}\Big(\frac{-3t\epsilon^2}{8.12^r}\Big)\,.
		\end{align*}
		By union bound, the result follows after choosing $\delta:=n^r2^{r+1}\,\operatorname{exp}\Big(\frac{-3t\epsilon^2}{8.12^r}\Big)$.
		
	\end{proof}

\end{document}